\definecolor{TangoAluminium2}{HTML}{d3d7cf}
\definecolor{TangoSkyBlue2}{HTML}{3465a4}
\newlength{\thm@leftmargin}
\newlength{\thm@rightmargin}
\newcommand{\xnewtheorem}[3]{%
  \newenvironment{#3}
    {\thm@leftmargin=#1\relax\thm@rightmargin=#2\relax\begin{#3INNER}}
    {\end{#3INNER}}%
  \newtheorem{#3INNER}%
}
\newcommand{\snewtheorem}[3]{%
  \newenvironment{#3}
    {\thm@leftmargin=#1\relax\thm@rightmargin=#2\relax\begin{#3INNER}}
    {\end{#3INNER}}%
  \newtheorem*{#3INNER}%
}
\newtheoremstyle{thmstyle}		
  {3pt}						
  {3pt}						
  {\itshape} 					
  {}							
  {\bfseries}					
  {}							
  {\newline}					
  {}							
\newtheoremstyle{defstyle}		
  {3pt}						
  {3pt}						
  {\itshape} 					
  {}							
  {\bfseries}					
  {}							
  {\newline}					
  {}							
\newtheoremstyle{proofstyle}		
  {3pt}						
  {3pt}						
  {\itshape} 					
  {}							
  {\itshape}					
  {}							
  {0.25em}						
  {}							
\theoremstyle{thmstyle}
\theoremstyle{defstyle}
\theoremstyle{proofstyle}
\newcounter{example}
\newenvironment{example}[1][]{\refstepcounter{example}\par\medskip \noindent \textbf{Example~\theexample. #1} \rmfamily}{\medskip}
\def\F{\mathbb{F}}
\def\<{\langle}
\def\>{\rangle}
\def\di{\displaystyle}
\def\wt{\mathrm{wt}}
\def\dim{\mathrm{dim} \,}
\def\im{\mathrm{im}\,}
\def\Pr{\mathrm{Pr}\,}
\DeclareMathOperator*{\argmax}{argmax}
\DeclareMathOperator*{\argmin}{argmin}
\let\emptyset\varnothing
\newlength{\stretchlen}\setlength{\stretchlen}{1em}
\def\splitterm{\_}
\newcommand{\stretchit}[1]{\leavevmode\realstretch#1\_}
\def\realstretch#1{%
    \def\temp{#1}%
    \ifx\temp\splitterm
    \else
    \hbox to \stretchlen{\hss#1\hss}\expandafter\realstretch
\fi}
\setlist[enumerate]{
  label={\itshape(\roman*)},
  widest=iii,
  labelsep=8pt,
  labelindent=0.5\parindent,
  itemindent=0pt,
  leftmargin=*,
  before=\setlength{\listparindent}{-\leftmargin},
}
\def\Nzz{\tikz[baseline={(current bounding box.center)}]{
	\node[circle, fill, inner sep=1.5pt, align=center] (a) {};
	\node[circle, fill, inner sep=1.5pt, align=center, right=20pt] (b) at (a) {};
	\draw[-, thick] (a) -- (b);
}}
\def\Noz{\tikz[baseline={(current bounding box.center)}]{
	\node[circle, fill, inner sep=1.5pt, align=center] (a) {};
	\node[circle, fill, inner sep=1.5pt, align=center, above=10pt, left=20pt] (b) at (a) {};
	\node[circle, fill, inner sep=1.5pt, align=center, below=10pt, left=20pt] (c) at (a) {};
	\draw[-, thick] (b) -- (a);
	\draw[-, thick] (c) -- (a);
}}
\def\Ntz{\tikz[baseline={(current bounding box.center)}]{
	\node[circle, fill, inner sep=1.5pt, align=center] (a) {};
	\node[circle, fill, inner sep=1.5pt, align=center, above=15pt, left=20pt] (b) at (a) {};
	\node[circle, fill, inner sep=1.5pt, align=center, above=5pt, left=20pt] (c) at (a) {};
	\node[circle, fill, inner sep=1.5pt, align=center, below=5pt, left=20pt] (d) at (a) {};
	\node[circle, fill, inner sep=1.5pt, align=center, below=15pt, left=20pt] (e) at (a) {};
	\draw[-, thick] (b) -- (a);
	\draw[-, thick] (c) -- (a);
	\draw[-, thick] (d) -- (a);
	\draw[-, thick] (e) -- (a);
}}
\def\Nzo{\tikz[baseline={(current bounding box.center)}]{
	\node[circle, fill, inner sep=1.5pt, align=center] (a) {};
	\node[circle, fill, inner sep=1.5pt, align=center, above=10pt, right=20pt] (b) at (a) {};
	\node[circle, fill, inner sep=1.5pt, align=center, below=10pt, right=20pt] (c) at (a) {};
	\draw[-, thick] (a) -- (b);
	\draw[-, thick] (a) -- (c);
}}
\def\Nooa{\tikz[baseline={(current bounding box.center)}]{
	\node[circle, fill, inner sep=1.5pt, align=center] (a) {};
        	\node[circle, fill, inner sep=1.5pt, align=center, right=20pt] (b) at (a) {};
	\draw[-, thick] (a) edge[bend left] (b);
        	\draw[-, thick] (a) edge[bend right] (b);
}}
\def\Noob{\tikz[baseline={(current bounding box.center)}]{
	\node[circle, fill, inner sep=1.5pt, align=center] (a) {};
        	\node[circle, fill, inner sep=1.5pt, align=center, above=10pt] (b) at (a) {};
        	\node[circle, fill, inner sep=1.5pt, align=center, right=20pt] (c) at (a) {};
        	\node[circle, fill, inner sep=1.5pt, align=center, right=20pt] (d) at (b) {};
        	\draw[-, thick] (a) -- (c);
        	\draw[-, thick] (a) -- (d);
        	\draw[-, thick] (b) -- (c);
        	\draw[-, thick] (b) -- (d);
}}
\def\Ntoa{\tikz[baseline={(current bounding box.center)}]{
	\node[circle, fill, inner sep=1.5pt, align=center] (a) {};
	\node[circle, fill, inner sep=1.5pt, align=center, above=10pt, right=20pt] (b) at (a) {};
	\node[circle, fill, inner sep=1.5pt, align=center, below=10pt, right=20pt] (c) at (a) {};
	\draw[-, thick] (a) edge[bend left] (b);
	\draw[-, thick] (a) edge[bend right] (b);
	\draw[-, thick] (a) edge[bend left] (c);
	\draw[-, thick] (a) edge[bend right] (c);
}}
\def\Ntob{\tikz[baseline={(current bounding box.center)}]{
	\node[circle, fill, inner sep=1.5pt, align=center] (a) {};
	\node[circle, fill, inner sep=1.5pt, align=center, below=10pt] (b) at (a) {};
	\node[circle, fill, inner sep=1.5pt, align=center, below=10pt] (c) at (b) {};
	\node[circle, fill, inner sep=1.5pt, align=center, below=10pt] (d) at (c) {};
				
	\node[circle, fill, inner sep=1.5pt, align=center, left=20pt] (e) at (b) {};
	\node[circle, fill, inner sep=1.5pt, align=center, left=20pt] (f) at (c) {};
	\draw[-, thick] (e) -- (a);
	\draw[-, thick] (e) -- (b);
	\draw[-, thick] (e) -- (c);
	\draw[-, thick] (e) -- (d);
	\draw[-, thick] (f) -- (a);
	\draw[-, thick] (f) -- (b);
	\draw[-, thick] (f) -- (c);
	\draw[-, thick] (f) -- (d);
}}
\def\Nzt{\tikz[baseline={(current bounding box.center)}]{
	\node[circle, fill, inner sep=1.5pt, align=center] (a) {};
	\node[circle, fill, inner sep=1.5pt, align=center, above=15pt, right=20pt] (b) at (a) {};
	\node[circle, fill, inner sep=1.5pt, align=center, above=5pt, right=20pt] (c) at (a) {};
	\node[circle, fill, inner sep=1.5pt, align=center, below=5pt, right=20pt] (d) at (a) {};
	\node[circle, fill, inner sep=1.5pt, align=center, below=15pt, right=20pt] (e) at (a) {};
	\draw[-, thick] (a) -- (b);
	\draw[-, thick] (a) -- (c);
	\draw[-, thick] (a) -- (d);
	\draw[-, thick] (a) -- (e);
}}
\def\Nota{\tikz[baseline={(current bounding box.center)}]{
	\node[circle, fill, inner sep=1.5pt, align=center] (a) {};
	\node[circle, fill, inner sep=1.5pt, align=center, above=10pt, left=20pt] (b) at (a) {};
	\node[circle, fill, inner sep=1.5pt, align=center, below=10pt, left=20pt] (c) at (a) {};
	\draw[-, thick] (b) edge[bend left] (a);
	\draw[-, thick] (b) edge[bend right] (a);
	\draw[-, thick] (c) edge[bend left] (a);
	\draw[-, thick] (c) edge[bend right] (a);
}}
\def\Notb{\tikz[baseline={(current bounding box.center)}]{
	\node[circle, fill, inner sep=1.5pt, align=center] (a) {};
	\node[circle, fill, inner sep=1.5pt, align=center, below=10pt] (b) at (a) {};
	\node[circle, fill, inner sep=1.5pt, align=center, below=10pt] (c) at (b) {};
	\node[circle, fill, inner sep=1.5pt, align=center, below=10pt] (d) at (c) {};
				
	\node[circle, fill, inner sep=1.5pt, align=center, right=20pt] (e) at (b) {};
	\node[circle, fill, inner sep=1.5pt, align=center, right=20pt] (f) at (c) {};
	\draw[-, thick] (a) -- (e);
	\draw[-, thick] (a) -- (f);
	\draw[-, thick] (b) -- (e);
	\draw[-, thick] (b) -- (f);
	\draw[-, thick] (c) -- (e);
	\draw[-, thick] (c) -- (f);
	\draw[-, thick] (d) -- (e);
	\draw[-, thick] (d) -- (f);
}}
\def\Ntta{\tikz[baseline={(current bounding box.center)}]{
	\node[circle, fill, inner sep=1.5pt, align=center] (a) {};
        	\node[circle, fill, inner sep=1.5pt, align=center, right=20pt] (b) at (a) {};
	\draw[-, thick] (a) edge[bend left=70] (b);
	\draw[-, thick] (a) edge[bend left] (b);
        	\draw[-, thick] (a) edge[bend right] (b);
	\draw[-, thick] (a) edge[bend right=70] (b);
}}
\def\Nttb{\tikz[baseline={(current bounding box.center)}]{
	\node[circle, fill, inner sep=1.5pt, align=center] (a) {};
	\node[circle, fill, inner sep=1.5pt, align=center, above=10pt] (b) at (a) {};
	\node[circle, fill, inner sep=1.5pt, align=center, right=20pt] (c) at (a) {};
	\node[circle, fill, inner sep=1.5pt, align=center, right=20pt] (d) at (b) {};
	\draw[-, thick] (a) edge[bend right] (c);
	\draw[-, thick] (a) -- (c);
	\draw[-, thick] (a) -- (d);
	\draw[-, thick] (a) edge[bend right] (d);
	\draw[-, thick] (b) -- (c);
	\draw[-, thick] (b) edge[bend left] (c);
	\draw[-, thick] (b) -- (d);
	\draw[-, thick] (b) edge[bend left] (d);
}}
\def\Nttc{\tikz[baseline={(current bounding box.center)}]{
	\node[circle, fill, inner sep=1.5pt, align=center] (a) {};
	\node[circle, fill, inner sep=1.5pt, align=center, below=10pt] (b) at (a) {};
	\node[circle, fill, inner sep=1.5pt, align=center, below=10pt] (c) at (b) {};
	\node[circle, fill, inner sep=1.5pt, align=center, below=10pt] (d) at (c) {};
				
	\node[circle, fill, inner sep=1.5pt, align=center, right=20pt] (e) {};
	\node[circle, fill, inner sep=1.5pt, align=center, right=20pt] (f) at (b) {};
	\node[circle, fill, inner sep=1.5pt, align=center, right=20pt] (g) at (c) {};
	\node[circle, fill, inner sep=1.5pt, align=center, right=20pt] (h) at (d) {};
	
	\draw[-, thick] (a) -- (e);
	\draw[-, thick] (a) -- (f);
	\draw[-, thick] (a) -- (g);
	\draw[-, thick] (a) -- (h);
	\draw[-, thick] (b) -- (e);
	\draw[-, thick] (b) -- (f);
	\draw[-, thick] (b) -- (g);
	\draw[-, thick] (b) -- (h);
	\draw[-, thick] (c) -- (e);
	\draw[-, thick] (c) -- (f);
	\draw[-, thick] (c) -- (g);
	\draw[-, thick] (c) -- (h);
	\draw[-, thick] (d) -- (e);
	\draw[-, thick] (d) -- (f);
	\draw[-, thick] (d) -- (g);
	\draw[-, thick] (d) -- (h);
}}
\begin{document}

\title{Trellis Decoding For Qudit Stabilizer Codes And Its Application To Qubit Topological Codes}

\author{Eric Sabo}
\email{esabo3@gatech.edu}
\affiliation{School of Mathematics, Georgia Institute of Technology, Atlanta, GA 30332, USA}

\author{Arun B. Aloshious}
\email{arunbarnabas.aloshious@duke.edu}
\affiliation{Department of Electrical and Computer Engineering, Duke University, Durham, NC 27708, USA}

\author{Kenneth R. Brown}
\email{ken.brown@duke.edu}
\affiliation{Department of Electrical and Computer Engineering, Duke University, Durham, NC 27708, USA}
\affiliation{School of Chemistry and Biochemistry, Georgia Institute of Technology, Atlanta, GA 30332, USA}
\affiliation{Department of Physics, Duke University, Durham, NC 27708, USA}
\affiliation{Department of Chemistry, Duke University, Durham, NC 27708, USA}

\begin{abstract}

	Trellis decoders are a general decoding technique first applied to qubit-based quantum error correction codes by Ollivier and Tillich in 2006. Here we improve the scalability and practicality of their theory, show that it has strong structure, extend the results using classical coding theory as a guide, and demonstrate a canonical form from which the structural properties of the decoding graph may be computed. The resulting formalism is valid for any prime-dimensional quantum system. The modified decoder works for any stabilizer code $S$ and separates into two parts: a one-time, offline computation which builds a compact, graphical representation of the normalizer of the code, $S^\perp$, and a quick, parallel, online query of the resulting vertices using the Viterbi algorithm. We show the utility of trellis decoding by applying it to four high-density, length 20 stabilizer codes for depolarizing noise and the well-studied Steane, rotated surface, and 4.8.8/6.6.6 color codes for $Z$-only noise. Numerical simulations demonstrate a 20\% improvement in the code-capacity threshold for color codes with boundaries by avoiding the mapping from color codes to surface codes. We identify trellis edge number as a key metric of difficulty of decoding, allowing us to quantify the advantage of single-axis decoding for Calderbank-Steane-Shor codes and block-decoding for concatenated codes.
\end{abstract}

\maketitle

\section{Introduction}
Quantum error correction allows for the in-principle preservation of quantum information in the presence of noise by encoding the information into a larger quantum system. The utility of any quantum error correction approach relies not only on the code parameters but also the existence of good decoders. There are presently few generic decoders applicable to random quantum error correcting codes. This is not without reason; quantum decoding is known to be computationally more difficult than classical decoding \cite{berlekamp1978inherent, iyer2015hardness}. Most decoders developed so far are specifically tailored to a given code family based on intuitive, visual, or physical arguments. Many decoders and decoding schemes in the literature are unique to the simulation they are presented with while others, especially those for topological codes, have enjoyed widespread use, study, and success.

For topological codes, such as toric, rotated surface, and color codes, the full decoding problem is typically reduced to that of finding a minimum-weight error, usually under an uncorrelated noise model, and is modeled as a perfect matching for the non-zero syndrome. The Blossom V implementation of Edmond's minimum-weight perfect matching (MWPM) algorithm is most often used \cite{fowler2013minimum, stephens2014fault} but other approaches such as greedy matching have also been applied \cite{wootton2015simple}. These are applicable to topological codes whenever the syndrome and errors have a string-like pattern. Despite the deceptive similarities to the surface codes, applying the same techniques to the color codes results instead in a hypergraph matching problem \cite{wang2009graphical}, which has no efficient algorithm. To get around this, some color code decoders first map the problem to two or more copies of a toric code which are then decoded and the information pieced back together to form a correction for the original code \cite{bombin2012universal, delfosse2014decoding, stephens2014efficient, aloshious2018local, kubica2019efficient}. The union find decoder \cite{delfosse2017almost, huang2020fault, delfosse2021union, delfosse2021toward} has been successfully applied to surface codes and homological product codes but a color code or more general stabilizer code implementation has still yet to be developed. Other popular decoders for topological codes include those based on cellular automata \cite{herold2015cellular, herold2017cellular, kubica2019cellular}, integer programing \cite{landahl2011fault}, and renormalization \cite{duclos2010fast, duclos2010renormalization, sarvepalli2012efficient}. Topological codes are often simulated on infinite lattices or finite lattices with periodic boundary conditions, and decoders sometimes require structural features such as locality and translational invariance.

Of particular interest are decoders which apply to multiple families of codes with little to no modification up to input data. Belief propagation \cite{mackay2004sparse, poulin2008iterative}, tensor network \cite{bravyi2014efficient, ferris2014tensor, darmawan2017tensor, darmawan2018linear, tuckett2018ultrahigh, chubb2021general}, and machine learning (ML) \cite{torlai2017neural, baireuther2018machine, chamberland2018deep, baireuther2019neural, chinni2019neural, maskara2019advantages, baireuther2018machine} based decoders generally fall into this category. Belief propagation is useful when codes satisfy specific sparsity properties but is inherently more difficult for quantum than classical codes. Recent work has improved this by introducing a common classical post processing step to prevent the decoder from getting stuck in loops \cite{panteleev2019degenerate, roffe2020decoding}. Tensor network decoders are theoretically exact maximum likelihood decoders but remain practically limited by computation with finite resources. As with much of machine learning, ML decoders trade good performance with a thorough understanding of its decisions and the theoretical guarantees that come with it.

One benefit to general decoding techniques is that they remove things such as geometric or topological constraints that complicate algorithms. Ollivier and Tillich introduced a decoding algorithm in 2006, called trellis decoding, which is applicable to any qubit stabilizer code $S$ \cite{ollivier2006trellises}. This is a port of a highly successful and well-understood classical decoder and works by building a highly compact and efficient graphical representation of the algebraic structure of the normalizer of $S$, $S^\perp$, called a trellis. The trellis contains all valid combinations of logical operators and stabilizer generators that will return the system to its code space. Decoding proceeds by using dynamical programming to globally search for the minimal weight path in the trellis corresponding to the measured syndrome. This is performed efficiently in exactly $n$ major steps for an $[[n, k, d]]$ stabilizer code, although the amount of work required in each step varies with respect to a predictable, code-dependent formula and can be significant depending on the amount of available resources. Beyond the brief introduction in \cite{ollivier2006trellises}, this decoder was discussed in \cite{poulin2009quantum, pelchat2013degenerate} for quantum convolutional codes and a portion of the algorithm was improved in \cite{xiao2013construction}. Many fundamental questions and theoretical properties remained unanswered.

As previously described in the literature, trellis decoding was simply not practical as it required the repeated processing of potentially massive amounts of data (all of the elements of $S^\perp$). A simple observation made in this paper allows us to reduce the number of times this data is processed to just once. However, this is still unpractical as enumerating the elements of $S^\perp$ may not be possible for many interesting codes. Here we expand on the previous literature by fully developing the theoretical foundations of the decoder. The outcome is a way to extract all the information needed to construct the trellis solely from the generators of $S$ and $S^\perp$. This object is independent of error model and may be computed once and saved for future simulations. The main goal and overall contribution of this paper is in making trellis decoding practical. To date, it is the most general and widely applicable decoding technique in quantum error correction. 

Non-binary (qudit) stabilizer codes are less explored and understood than qubit codes, yet potentially offer significant computational advantages. Of particular interest here is that qudit codes offer improved error thresholds over their qubit counterparts \cite{duclos2013kitaev, anwar2014fast, watson2015fast, aloshious2019local}. Non-binary decoders are lacking as many standard qubit decoders are not applicable to even direct generalizations of qubit codes without significant modification, if at all possible \cite{duclos2013kitaev, anwar2014fast, watson2015fast, hutter2015improved, marks2017comparison, aloshious2019local}. Although qudit codes are not simulated in this work, we notably extend the theory for any prime-dimension in Section \ref{sec:theory}, and the decoder works for both qubit and qudit stabilizer codes without any need for higher-dimensional modifications.

We begin by summarizing the main ingredients and fundamental results of our work in Section \ref{sec:construction}. A rigorous theory is developed in Section \ref{sec:theory}, while proofs of the numerous technical lemmas are relegated to Appendix \ref{sec:proofs} to increase readability. We attempt to mimic the classical coding theory literature as closely as possible to make this work available to the widest audience. It is perhaps remarkable that many of our results have classical analogues despite the differences between classical and quantum codes forcing alternative proof techniques. We view this as a strength of the overall theory of trellis decoding and point out similarities and differences, providing references to the classical literature, whenever possible. 

In Section \ref{sec:ViterbiProps} we adopt a classical, quantitative metric for the difficulty of decoding a given stabilizer code based on the structure of the trellis. This allows us to show that the color codes are fundamentally more difficult to decode than the rotated surface codes and the 4.8.8 color code is more difficult to decode than the 6.6.6 color code. We also make quantitative arguments as to exactly how much easier it is to decode a CSS code using independent $X$- and $Z$-decoders versus a single decoder for the full code. Although not unique to trellis decoding, we show that large gains are made when splitting the trellis into $X$- and $Z$-parts.

Numerical results for code-capacity (memory model) simulations are presented in Section \ref{sec:numerics}. First, we demonstrate the power of the theory by decoding four high-density codes from the quantum error-correcting code table at \cite{codetables}. As we are unaware of any other decoding algorithm for these codes, it is difficult to quantify the performance of the trellis with this data. We therefore, second, proceed to apply our work to the rotated surface and color code families for which there already exist numerous (and some highly optimized) decoders. Notably, the color codes (with boundary) are natively decoded without reference to other codes or notions of color, homology, matching, boundaries, lifting, projections, restrictions, charges, excitations, strings, and/or mappings. Trellis decoding for stabilizer codes is optimal minimum-weight (in contrast to classical trellis decoding which is maximum-likelihood) and numerical results for $Z$-only noise should match that of MWPM for the surface codes and exceed the current best thresholds in the literature for the color codes, although we only simulate codes up to distances 17-21. Finally, we simulate a 2-stage, suboptimal decoder for the level-2 concatenated Steane code.

\section{The Syndrome Trellis}\label{sec:construction}
Let $p$ be a prime and $\F_p$ the finite field of cardinality $p$. Denote the generalized Pauli group on one qudit of dimension $p$ by $\mathcal{P}(1, p) = \< X, Z \> = \{\omega^c X^a Z^b \, : \, a, b, c \in \F_p\}$, where $\omega$ is a primitive $2p$th or $p$th root of unity if $p$ is even or odd, respectively, and the generalized Pauli group on $n$ qudits by
\begin{equation*}
	\mathcal{P}(n, p) = \mathcal{P}(1, p)^{\otimes n} = \{ \omega^c X^a Z^b \, : \, a, b \in \F_p^n, c \in \F_p\}.
\end{equation*}
We will refer to elements in the form $IXXIXZY$ as Pauli strings to distinguish them from any of their numerical representations. By \textit{stabilizer code} we mean an Abelian subgroup $S \leq \mathcal{P}(n, p)$ defined with respect to some nice error basis, here the standard $X$ and $Z$ qudit Pauli operators. This is typically followed by a map to $\F_p^{2n}$ but we will not utilize this at the moment and will come back to this differentiation later. See \cite{ketkar2006nonbinary} for details. Defined as such, $S$ is a classical code which corresponds to a \textit{quantum stabilizer code}, $\mathcal{Q} = \mathcal{Q}(S) = \{ v \in \mathbb{C}^{p^n} \mid P v = v, \forall P \in S \}$, for which $S$ is the \textit{stabilizer group}. In what follows, suppose $S$ is the stabilizer group of a quantum stabilizer code with $\dim \mathcal{Q} > 1$ and that the number of non-identity Pauli operators of any element in $S$ is at least two. The center of $\mathcal{P}(n, p)$, $\mathcal{Z}(\mathcal{P}(n, p)) = \{ \omega^c \, : \, c \in \F_p \}$, is the set of elements which commute with all other elements of $\mathcal{P}(n, p)$. The centralizer of $S$ in $\mathcal{P}(n, p)$, $C_{\mathcal{P}(n, p)}(S)$, is all the elements of $\mathcal{P}(n, p)$ that commute with all elements of $S$. The dual of $S$ is defined to be the set of all elements of $\mathcal{P}(n, p)$ which have zero (symplectic) inner product with all the elements of $S$, i.e., $S^\perp \cong C_{\mathcal{P}(n, p)} (S)/\mathcal{Z}(\mathcal{P}(n, p))$. The dual hence contains all stabilizers and all logical Pauli operators for $\mathcal{Q}$ and has dimension $n + k$.

The classical coding theory literature contains several definitions for the trellis of a linear block code. Some of them, such as the the Bahl-Cocke-Jelinek-Raviv (BCJR) \cite{bahl1974optimal}, the Wolf \cite{wolf1978efficient}, and the Forney-Muder \cite{forney1988coset, muder1988minimal} trellises, are now known to be isomorphic. With the benefit of this hindsight, Ollivier and Tillich demonstrated how to port what is often known as the \textit{syndrome trellis} to the quantum setting \cite{ollivier2006trellises}. They referred to their construction as the Wolf trellis, however, the discussion here more closely follows that of BCJR. As such, we simply use \textit{syndrome trellis} in this work, although, since unlike classical coding theory, there is only one definition of a trellis for quantum error correction, we may just as well shorten this to simply \textit{trellis}. One may attempt to port the other classical trellises to the quantum case but the definition used in \cite{ollivier2006trellises} and this work is known to be minimal in a rigorous sense that will be defined in Section \ref{sec:theory}, rendering alternate definitions undesirable. The interested reader is referred to the tutorial piece \cite{mceliece1996bcjr} for a general discussion of classical trellises. 

Recall that a directed edge, $e$, in a graph goes from \textit{source}, $s(e)$, to \textit{terminus}, $t(e)$.
\begin{definition}[(Quantum) Syndrome Trellis]
	A \emph{trellis} for an $[[n, k, d]]_p$ stabilizer code $\mathcal{Q}$ is a directed multigraph with vertex set, $V$, and edge set, $E$, such that
	\begin{enumerate}
		\item there are $n + 1$ disjoint sets of vertices $V_i$ with $V = V_0 \sqcup \hdots \sqcup V_n$ and $|V_0| = |V_n| = 1$;
		\item there are $n$ disjoint sets, $E_i$, of directed edges from $V_{i - 1}$ to $V_i$ with $E = E_1 \sqcup \hdots \sqcup E_n$;
		\item each vertex $v \in V_i$ has a unique label given by an $(n - k)$-tuple of syndromes, although the same label may be present in multiple $V_i$;
		\item each edge $e \in E_i$ is a unique triple of the form $(s(e), P, t(e))$, where $P$ is a label of the form $X^a Z^b$ for $a, b \in \F_p$;
		\item each edge is assigned a weight, $\wt(e) \in \mathbb{R}^- \cup \{-\infty\}$.
	\end{enumerate}
\end{definition}

Following the classical literature, vertices are referred to as \textit{states}, the $V_i$ as \textit{state spaces}, and $V_i$ is said to be at \textit{depth} $i$. The edge sets $E_i$ are referred to as the $i$th \textit{section} and the edges as \textit{branches}. The condition that edge labels must be unique between a fixed pair of source and terminus vertices means our trellises are \textit{proper}. Trellises which do not satisfy this condition are called \textit{improper} and are not considered in this work. It is perhaps more convenient to define the trellis without weights \textit{(v)}, but we stick with decades of precedent in including them here.

\begin{figure}[t]
	\centering
     	\begin{subfigure}{\textwidth}
        		\centering
		\includegraphics[scale=1]{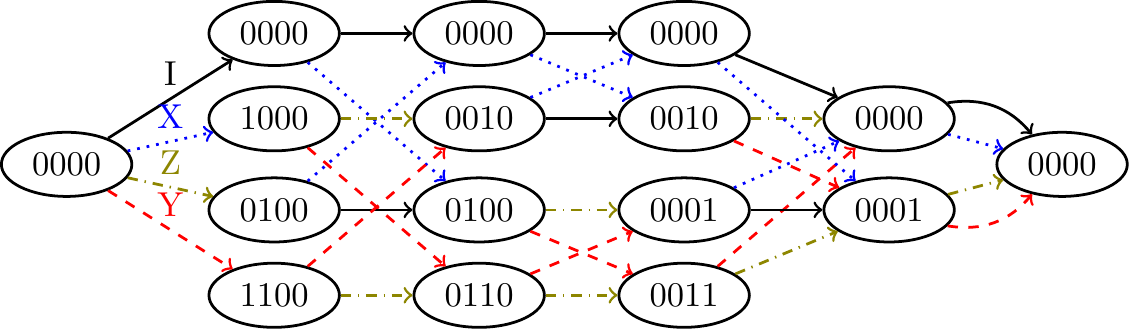}
		\caption{}
	\end{subfigure}
	\vspace{1cm}
	\begin{subfigure}{\textwidth}
        		\centering
		\includegraphics[scale=1]{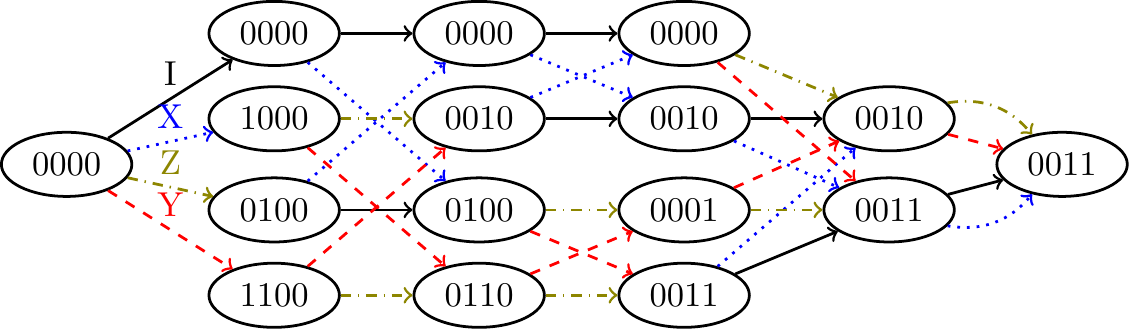}
		\caption{}
	\end{subfigure}
	\vspace{-1cm}
	\caption{(a) The trellis for the $[[5, 1, 1]]$ code with stabilizers $\{ZXIII, XZXII, IXZXI, IIXZX\}$ demonstrated in \cite{ollivier2006trellises}. Measuring the syndrome $s = (0, 0, 1, 1)$, we determine $P_s = IIIZZ$. Adding $\pi_0(P_s) = \pi_1(P_s) = \pi_2(P_s) = \pi_3(P_s) = (0, 0, 0, 0)$, $\pi_4(P_s) = (0, 0, 1, 0)$, and $\pi_5(P_s) = (0, 0, 1, 1)$ to the appropriate $V_i$ produces (b), the trellis found in Figure 2 of \cite{ollivier2006trellises}. The presence of the parallel edges in $E_5$ demonstrate that this code cannot tell the difference between $I$/$X_5$ and $Y_5$/$Z_5$.}
	\label{fig:trellisex}
\end{figure}

To construct the syndrome trellis, choose a set of stabilizer generators $\{ S_1, \hdots, S_{n - k}\}$ for the stabilizer group $S$, and fix this set throughout.
\begin{definition}
	Let $\mathcal{J} \subset \{0, \hdots, n\}$ be an index set. Then define $\pi_\mathcal{J}: \mathcal{P}(n, p) \to \mathcal{P}(n, p)$ to be the map which projects all Pauli operators at indices $\mathcal{J}^c$ to the identity, where $\pi_0(P)$ is the all identity Pauli string.
\end{definition}
\noindent To avoid ambiguities with which qudit receives the relative phase, assume $\omega^c = 1$. Let $\sigma_i : \mathcal{P}(n, p) \to \F_p^{n - k}$ be the map from a Pauli string to its syndrome,
\begin{equation}
	\sigma_i(P) = (\<S_1, \pi_{\{0, \hdots, i\}}(P) \>, \hdots, \<S_{n - k}, \pi_{\{0, \hdots, i\}}(P) \>),
\end{equation}
where $\< \cdot, \cdot \>$ is an appropriate inner product, and let $P_s$ be a Pauli string with syndrome $s$ with respect to the chosen generators, $s = \sigma_n(P_s)$. Compute $S^\perp$ and then $P_s S^\perp$. Since everything in $S^\perp$ has zero syndrome, everything in $P_s S^\perp$ has syndrome $s$.

The vertices in each set are labeled by the values $\sigma_i(P)$ for all $P \in P_s S^\perp$,
\begin{equation}\label{vertexset}
	V_i = \{ \sigma_i(P) \, : \, 1 \leq j \leq n - k, P \in P_s S^\perp \}.
\end{equation}
A directed edge is created from the vertex $\sigma_{i - 1}(P) \in V_{i - 1}$ to the vertex $\sigma_i(P) \in V_i$ and labeled with the $i$th component of $P$,
\begin{equation}\label{edgeset}
	E_i = \{ (\sigma_{i - 1}(P), P_i, \sigma_i(P)) \, : \, P \in P_s S^\perp \}.
\end{equation}
The weight of an edge with label $P_i$ is defined to be the log-likelihood $-\log \Pr(P_i)$, where this probability comes from the assumed error channel. 
Duplicate edges, which have the same source, label, and terminus, are not allowed, although they will appear often during this method of construction. Parallel edges, which have the same source and terminus but different labels, are allowed and imply the existence of a weight one error for the code. These will generally not appear, but we will consider them in this work for completeness. The example trellis of Figure \ref{fig:trellisex} has parallel edges in $E_5$.

This represents the construction process of \cite{ollivier2006trellises} and \cite{xiao2013construction}. Assuming $S^\perp$ is able to be computed and stored, the trellis as written so far is dependent on $P_s$ and must be recomputed for every measured syndrome, which is computationally expensive for a practical implementation for many codes. This can be avoided by noting that since for $P \in P_s S^\perp$, $\sigma_i(P) = \sigma_i(P_s) + \sigma_i(P^\prime)$ for $P^\prime \in S^\perp$, the trellis with respect to syndrome $s$ is simply a shift of the trellis with respect to the zero syndrome. Since $\sigma_i(P_s)$ is a single value, the set $\{ \sigma_i(P_s) + \sigma_i(P^\prime) \}$ is unique when $\{ \sigma_i(P^\prime) \}$ is unique so $|V_i|$ remains invariant. Likewise, the map $(\sigma_{i - 1}(P^\prime), P^\prime_i, \sigma_i(P^\prime)) \mapsto (\sigma_{i - 1}(P_s) + \sigma_{i - 1}(P^\prime), P_{s \, i} P^\prime_i, \sigma_i(P_s) + \sigma_i(P^\prime))$ is an isomorphism permuting edge labels via the action of $P_{s \, i}$. It follows that one may pre-compute the trellis for the zero syndrome then update each $V_i$ with the syndrome of $\pi_i(P_s)$, updating edges accordingly. See Figure \ref{fig:trellisex} for an example. We may thus trade the affine space $P_s S^\perp$ with the vector space $S^\perp$.\footnote{Up until now we have considered $S^\perp$ as a set of Pauli strings and not as a subset of $\F_p^n$. One may define an equivalent vector space over the strings by defining scalar multiplication as the raising of a Pauli operator to a power.} This is convenient as many mathematical objects are not well-defined over affine spaces and working with the associated vector space allows for easier proofs of properties that more closely mimic their classical counterparts. 

We show in Section \ref{sec:theory} that the vertices and edges are highly constrained. At depth and section $i$ there are (Theorem \ref{thm:quantstate}) $|V_i| = p^{\dim S^\perp - \dim S^\perp_{\mathfrak{p}_i} - \dim S^\perp_{\mathfrak{f}_i}}$ vertices and $|E_i| = p^{\dim S^\perp - \dim S^\perp_{\mathfrak{p}_{i - 1}} - \dim S^\perp_{\mathfrak{f}_i}}$ edges. Every vertex in $V_i$ has the same incoming, $\deg_\text{in}(v) = p^{\dim S^\perp_{\mathfrak{p}_i} - \dim S^\perp_{\mathfrak{p}_{i - 1}}}$, and outgoing, $\deg_\text{out}(v) = p^{\dim S^\perp_{\mathfrak{f}_i} - \dim S^\perp_{\mathfrak{f}_{i + 1}}}$, degrees (Corollary \ref{cor:inoutdegs}), and only certain patterns of edge connections are allowed (Theorem \ref{thm:bipartthm}). For a fixed qudit ordering the structure of the trellis is invariant of the choice of stabilizer or logical generators (Lemma \ref{lem:lemmafourteen}), but a specific choice of generators, the trellis oriented form, allows us to immediately read-off the dimensions of the sets in the previous formulas (Proposition \ref{prop:dimargs}).

We take the trellis for $S^\perp$ (with respect to the zero syndrome) to be the fundamental object in this work. The general decoding scheme proceeds as follows: 1) construct the trellis, 2) measure a syndrome and shift the base trellis with respect to it, 3) decode the shifted trellis, 4) repeat steps 2) and 3). We avoid assigning edge weights to the trellis until step 2) so the trellis resulting from step 1) is independent of any measured syndromes or error model. Hence step 1) is an offline procedure while 2) and 3) are online. Trellises for larger distance codes may therefore be computed once and saved for future use. For the codes considered in this work, the efficiency of the construction algorithm made this unnecessary for most distances. Since $|V| \leq |E|$, shifting the trellis is $\Theta(|E|)$ in time but the procedure is easily parallelized. As we will see later, decoding is also $\Theta(|E|)$ and may also be parallelized.

With the edge weights prescribed above, the desired correction is given by the minimum-weight path from $V_0$ to $V_n$. We refer to this in the following as the ``optimal path". The Viterbi algorithm is an example of forward dynamical programming and is the most common trellis-based decoder \cite{forney1973viterbi}. The idea is as follows. Choose an arbitrary vertex $v \in V_i$ for some $i \neq 0$ or $n$ and suppose that the optimal path travels through $v$. Then the optimal path can be split into two parts: the optimal path going from $V_0$ to $v$ and the optimal path going from $v$ to $V_n$. Compute the optimal path from $V_0$ to $v$ and repeat for all $v$. Once the optimal paths are computed for every vertex in $V_{i - 1}$, the optimal path for a vertex in $V_i$ is chosen by finding the minimum value of the sums of the incoming edge weights plus the weight of the optimal path for each edge's source vertex in $V_{i - 1}$. The weight at $V_0$ is arbitrary but is convenient to initialize to zero. Having completed this for all vertices, the Pauli correction may be read off the trellis by taking the edge labels for the optimal path connecting $V_0$ to $V_n$. See Appendix \ref{sec:viterbiapp} for a visual example of the Viterbi algorithm run on Figure \ref{fig:trellisex}; an example implementation is also provided. The algorithm is often described in a manner in which vertices with no outgoing edges are removed from the system. This creates nicer diagrams but actual deletion steps can be algorithmically expensive and should be ignored as in Figure \ref{fig:trellisex}. Also note that the partial syndromes, or vertex labels, make no appearance in the algorithm and may be safely removed after the construction phase, also ignoring vertices in the shifting phase. The partial syndromes of the distance 15 rotated surface code are 224 bits long, for example, and not storing them, even in a more efficient fashion, leads to large savings in storage.

The Viterbi algorithm is sometimes confused with other standard minimum-path graph algorithms but it is distinct and more efficient given the strict edge and vertex dependencies required in the definition of a trellis. Underlying the success of this algorithm is the implicit assumption that the trellis behaves as a Markov chain with each $V_i$ only depending on the result at $V_{i - 1}$. In particular, at each $V_i$, all the information regarding the correction of all previous qudits has already been processed.

The behavior of the decoder depends strongly on the assumed error model. If all Pauli operators are equally likely, as in the standard depolarizing noise model, the Viterbi algorithm acts as a minimum Hamming-weight decoder and may run into a significant number of ties which may be broken arbitrarily. For a biased noise model, the decoder is able to differentiate between, say, $X$ and $Z$, and is able to make a more informed choice.

\section{Properties}\label{sec:theory}
Throughout this section we denote the vertex with zero syndrome by $\overline{0}$. Dimension will always refer to the number of generators of an object, and $| \cdot |$ will be reserved for cardinality. A Pauli string will be denoted by $P$, whereas $p$ will be a prime number. We first show several properties of the syndrome trellis then we turn to trellis-oriented forms before finally discussing the Viterbi algorithm. Many, but not all, of the results here have classical analogies, and we attempt to provide original citations to the best of our knowledge for those which are not typically discussed in textbooks. Classically, however, codes are vector spaces whereas here we consider stabilizer codes as groups. This changes the techniques, but we try to maintain the same overall direction of the proofs if possible. We consider it a strength of the theory that a single framework can handle both classical and quantum trellises. The closest classical work to this is that of Forney and Trott on convolutional group codes \cite{forney1993dynamics}.

\subsection{Syndrome Trellis}
The first result was understood and implicitly used in \cite{ollivier2006trellises} and \cite{xiao2013construction} but was never explicitly stated. It is perhaps obvious, but we include a proof here for completeness.
\begin{restatable}[]{proposition}{propone}
	\label{prop:iso}
	There is a one-to-one correspondence between Pauli strings in $S^\perp$ and length-$n$ paths in the trellis.
\end{restatable}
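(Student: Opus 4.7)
The plan is to exhibit the bijection explicitly and verify its two directions.

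Forward map (from $S^\perp$ to paths): For each $P \in S^\perp$, consider the vertex sequence $\sigma_0(P), \sigma_1(P), \ldots, \sigma_n(P)$ together with the $i$-th edge given by the triple $(\sigma_{i-1}(P), P_i, \sigma_i(P))$. By construction \eqref{vertexset} and \eqref{edgeset} these lie in $V_i$ and $E_i$ respectively. Since $\pi_0(P)$ is the identity string, $\sigma_0(P) = \overline{0}$; and since $P \in S^\perp$ commutes with every generator, $\sigma_n(P) = \overline{0}$. Thus we get a valid length-$n$ path from the unique vertex of $V_0$ to the unique vertex of $V_n$. Injectivity is immediate: the ordered tuple of edge labels along the path recovers the componentwise data $(P_1,\ldots,P_n)$, so distinct elements of $S^\perp$ produce paths with distinct edge-label sequences, and since the trellis is proper these are distinct paths.

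Surjectivity is the step that needs the most care. Given any length-$n$ path $\overline{0} = v_0 \to v_1 \to \cdots \to v_n = \overline{0}$ with edge labels $P_1,\ldots,P_n$, form the concatenated Pauli string $P = P_1 P_2 \cdots P_n$. I claim that $v_i = \sigma_i(P)$ for every $i$, and consequently $P \in S^\perp$. I prove the claim by induction on $i$; the base case $v_0 = \overline{0} = \sigma_0(P)$ is clear. For the inductive step, the edge $(v_{i-1}, P_i, v_i) \in E_i$ exists only because there is some witness $Q \in S^\perp$ with $\sigma_{i-1}(Q) = v_{i-1}$, $Q_i = P_i$, and $\sigma_i(Q) = v_i$. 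The key observation is that the syndrome increment
\begin{equation*}
\sigma_i(Q) - \sigma_{i-1}(Q) = \bigl(\langle S_1, \pi_{\{i\}}(Q)\rangle,\ldots,\langle S_{n-k}, \pi_{\{i\}}(Q)\rangle\bigr)
\end{equation*}
depends only on the $i$-th tensor factor $Q_i$. Since $Q_i = P_i$, the same increment equals $\sigma_i(P) - \sigma_{i-1}(P)$, and combining with the inductive hypothesis $v_{i-1} = \sigma_{i-1}(P)$ gives $v_i = \sigma_i(P)$. At $i = n$ this yields $\sigma_n(P) = v_n = \overline{0}$, which means $P$ commutes with every generator and hence, modulo the center, lies in $S^\perp$ as required. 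Finally, the two maps are mutual inverses because the forward map on $P$ produces a path whose edge labels are precisely $(P_1,\ldots,P_n)$, and the backward map reconstructs $P$ from exactly this data.

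The only nontrivial obstacle is the surjectivity/consistency argument: a priori one might worry that the local witnesses $Q$ certifying each edge need not assemble into a single coherent element of $S^\perp$. The syndrome-increment observation above is what resolves this, since it shows that once the edge labels are fixed, the vertex labels along any path are forced to equal the partial syndromes of the concatenated string. Everything else is a direct unpacking of the definitions in \eqref{vertexset} and \eqref{edgeset} together with the properness of the trellis.
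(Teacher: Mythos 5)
Your proof is correct and follows essentially the same route as the paper's: the forward direction is read off the construction, the backward direction concatenates the edge labels along a path and uses the zero terminal syndrome to conclude membership in $S^\perp$, and properness gives uniqueness. The only difference is that you make explicit, via the syndrome-increment observation, why the vertex labels along a path are forced to be the partial syndromes of the concatenated string --- a consistency point the paper's ``continuing this process'' phrasing leaves implicit.
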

\begin{restatable}[]{corollary}{corotwo}
	\label{cor:bruteforce}
	Let $v_i \in V_i$ and $v_{i + 1} \in V_{i + 1}$ be arbitrary. If there is a Pauli element, $P_i$, which takes $v_i$ to $v_{i + 1}$, then $(v_i, P_i, v_{i + 1}) \in E_i$.
\end{restatable}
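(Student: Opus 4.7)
The plan is to exhibit, by an explicit splicing construction, an element of $S^\perp$ whose associated path (under the bijection of Proposition~\ref{prop:iso}) traverses the proposed edge; by the definition of the edge set in \eqref{edgeset}, any such witness forces $(v_i, P_i, v_{i+1})$ to be present.

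First I would use the definition of the vertex sets in \eqref{vertexset} to pick witnesses: since $v_i \in V_i$ and $v_{i+1} \in V_{i+1}$, there exist $P, P' \in S^\perp$ with $\sigma_i(P) = v_i$ and $\sigma_{i+1}(P') = v_{i+1}$. Next I would build the candidate $R$ by splicing along position $i+1$: set $R_j = P_j$ for $j \leq i$, let $R_{i+1}$ be the given Pauli $P_i$, and set $R_j = P'_j$ for $j \geq i+2$. By construction the $(i+1)$-th Pauli of $R$ is $P_i$, the first-$i$ contribution to the total syndrome is $\sigma_i(P) = v_i$, and the tail contribution is $\sigma_n(P') - \sigma_{i+1}(P') = -v_{i+1}$, using that $P' \in S^\perp$ forces $\sigma_n(P') = 0$.

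The main step is then the telescoping identity
\begin{equation*}
\sigma_n(R) = v_i + (v_{i+1} - v_i) + (-v_{i+1}) = 0,
\end{equation*}
where the middle term is the syndrome contribution of $P_i$ at position $i+1$, equal to $v_{i+1} - v_i$ by the hypothesis that $P_i$ takes $v_i$ to $v_{i+1}$. Under the standing convention $\omega^c = 1$, $\sigma_n(R) = 0$ is equivalent to $R \in S^\perp$. Since additionally $\sigma_i(R) = v_i$, $R_{i+1} = P_i$, and $\sigma_{i+1}(R) = v_{i+1}$, the triple $(v_i, P_i, v_{i+1})$ appears in the edge set by \eqref{edgeset}, which is the claim.

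The one delicate point to keep in mind is that a generic concatenation of pieces of two elements of $S^\perp$ need not itself lie in $S^\perp$. What makes this particular splice work is precisely the hypothesis that $P_i$ takes $v_i$ to $v_{i+1}$: it is exactly the compatibility condition that makes the three syndrome contributions cancel at the splice boundary. Without that hypothesis the calculation would produce a residual term and $R$ would fail to commute with some stabilizer generator.
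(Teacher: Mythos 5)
Your proof is correct and takes essentially the same route as the paper's: splice a ``past'' witness ending at $v_i$ with the label $P_i$ and a ``future'' witness compatible with $v_{i+1}$, note that the total syndrome telescopes to zero so the spliced string lies in $S^\perp$, and read the edge off the construction. The only cosmetic difference is that the paper splices trellis paths and invokes Proposition~\ref{prop:iso}, while you splice Pauli-string witnesses taken from \eqref{vertexset} and verify membership directly through \eqref{edgeset} and the additivity of partial syndromes.
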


In the previous section we attempted to stick to the notation of the preceding literature, following \cite{ollivier2006trellises}. At this point, we make some adjustments which we find necessary to simplify the proofs and discussion. The main reason for this is that $S$ and $S^\perp$ are groups whereas $V_i$ and $E_i$ are vector spaces. Promoting the Pauli strings to vectors or the vector spaces to groups provides for a more coherent argument. Choosing Pauli strings over their symplectic representation, we henceforth introduce the following notation. We begin with a rather general definition for completeness; however, our goal and use case throughout this paper is the rather natural splitting of the qudit indices into a ``past" and ``future", which is defined later in Definition \ref{def:pastfut}.
\begin{definition}
	\leavevmode
	\vspace*{-\bigskipamount}\vspace*{-\medskipamount}
	\begin{enumerate}
		\item[$\bullet$] Let $\mathcal{J} \subset \{0, \hdots, n\}$ be an index set and define $\mathcal{A}_\mathcal{J}$ to be the set of Pauli strings in a subgroup $\mathcal{A} \leq \mathcal{P}(n, p)$ whose Pauli operators are equal to the identity on the complement of $\mathcal{J}$, $\mathcal{J}^c \coloneqq \{0, \hdots, n\} \setminus \mathcal{J}$.
		\item[$\bullet$] Denote by $\mathcal{A}_{|_\mathcal{J}}$ the set of length-$|\mathcal{J}|$ Pauli strings constructed from Pauli strings in $\mathcal{A}$ whose elements at indices $\mathcal{J}^c$ have been deleted. Alternatively, $\mathcal{A}_{|_\mathcal{J}}$ is image of $\mathcal{A}$ whose elements at indices $\mathcal{J}^c$ have been set to the identity.
	\end{enumerate}
\end{definition}
\noindent The alternative definition of $\mathcal{A}_{|_\mathcal{J}}$ is useful to keep all Pauli strings the original length, whereas in the first definition the strings are shortened to length $|\mathcal{J}|$. Both definitions are conceptually equivalent.\\

\noindent {\bf Remark:} If $\mathcal{C}$ is a classical code, then $\mathcal{C}_\mathcal{J}$ and $\mathcal{C}_{|_\mathcal{J}}$ are the shortened and punctured codes of $\mathcal{C}$ with respect to $\mathcal{J}$, respectively. The concepts of shortening and puncturing are a bit more complicated for quantum codes \cite{rains1999nonbinary}, so we refrain from using this terminology here.\\

\begin{example}
	Let $\mathcal{J} = \{0, \hdots, i\}$.  Then $\mathcal{A}_\mathcal{J}$ is the set Pauli strings which naturally have identity elements in indices $\{i + 1, \hdots, n\}$ and $\mathcal{A}_{|_\mathcal{J}} = \im \pi_{\mathcal{J}}(\mathcal{A})$ is the set of Pauli strings whose elements in indices $\{i + 1, \hdots, n\}$ have been projected to the identity regardless of their initial value.
\end{example}

Critical to our proofs is the fact that since $\pi_\mathcal{J}(PP^\prime) = \pi_\mathcal{J}(P) \pi_\mathcal{J}(P^\prime)$ for $P, P^\prime \in \mathcal{P}(n, p)$, $\pi_{\mathcal{J}}$ is a group homomorphism. Then $\ker \pi_\mathcal{J}(\mathcal{A}) = \mathcal{A}_{|_{\mathcal{J}^c}}$ and $\ker \pi_{\mathcal{J}^c}(\mathcal{A}) = \mathcal{A}_{|_\mathcal{J}}$, and $\mathcal{A}_{|_{\mathcal{J}^c}} \trianglelefteq \mathcal{A}$ and $\mathcal{A}_{|_\mathcal{J}} \trianglelefteq \mathcal{A}$ as all kernels are normal. (Recall the notation $A \trianglelefteq B$ means $A$ is a normal subgroup of $B$.) The product group $\mathcal{A}_{|_\mathcal{J}} \mathcal{A}_{|_{\mathcal{J}^c}}$ is also normal, and as $\mathcal{J} \cap \mathcal{J}^c = \emptyset$, $\mathcal{A}_{|_\mathcal{J}} \mathcal{A}_{|_{\mathcal{J}^c}} = \mathcal{A}_{|_\mathcal{J}} \times \mathcal{A}_{|_{\mathcal{J}^c}} \trianglelefteq \mathcal{A}$.

In keeping with the standard trellis literature, we introduce the following further terminology.
\begin{definition}\label{def:pastfut}
	Fix an index $i \in \{0, \hdots, n\}$. Then the \emph{past} and \emph{future}, with respect to $i$, are defined to be the index sets $\mathfrak{p}_i \coloneqq \{0, \hdots, i\}$ and $\mathfrak{f}_i \coloneqq \{i + 1, \hdots, n\}$, respectively.
\end{definition}
\noindent In particular, we will utilize the sets of Pauli strings $S_{\mathfrak{p}_i}$, $S_{\mathfrak{f}_i}$, $S_{|_{\mathfrak{p}_i}}$, $S_{|_{\mathfrak{f}_i}}$, and likewise for $S^\perp$. See Figure \ref{fig:commdiag} for a summary of the relationships between the groups. Note that the literature varies on which set includes the index $i$. In this work we index the qudits on $\{1, \hdots, n \}$ but it is necessary to include zero in order to project to the identity.
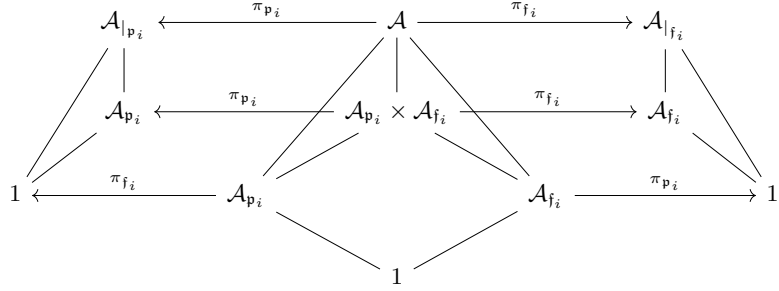
\begin{figure}
	\centering
            \begin{tikzcd}
            	& \mathcal{A}_{|_{\mathfrak{p}_i}} \arrow[ddl, dash] \arrow[d, dash] & & \mathcal{A} \arrow[ll, "\pi_{\mathfrak{p}_i}" above] \arrow[rr, "\pi_{\mathfrak{f}_i}" above] \arrow[ddl, dash] \arrow[d, dash] \arrow[ddr, dash] & & \mathcal{A}_{|_{\mathfrak{f}_i}} \arrow[ddr, dash] \arrow[d, dash] &\\
            	& \mathcal{A}_{\mathfrak{p}_i} \arrow[dl, dash] & & \mathcal{A}_{\mathfrak{p}_i} \times \mathcal{A}_{\mathfrak{f}_i} \arrow[ll, "\pi_{\mathfrak{p}_i}" above] \arrow[rr, "\pi_{\mathfrak{f}_i}" above]  \arrow[dl, dash] \arrow[dr, dash] & & \mathcal{A}_{\mathfrak{f}_i} \arrow[dr, dash] &\\
            	1 & & \mathcal{A}_{\mathfrak{p}_i} \arrow[ll, "\pi_{\mathfrak{f}_i}" above] \arrow[dr, dash] & & \mathcal{A}_{\mathfrak{f}_i} \arrow[rr, "\pi_{\mathfrak{p}_i}" above] \arrow[dl, dash] & & 1\\
            	& & & 1 & & &
            \end{tikzcd}
	\caption{A summary of the relationships between the groups and the maps (Figure 5 of \cite{forney1993dynamics}). A vertical line without an arrow between two groups means the bottom group is a subgroup of the top group.}
	\label{fig:commdiag}
\end{figure}

The previously mentioned subgroups (Figure \ref{fig:commdiag}) depend only on the set $\mathcal{A}$ and not the choice of generators, although $S$ and $S^\perp$ may have different subgroups at the same index. Lagrange's theorem restricts their cardinality to powers of $p$. (In fact, this was a motivation for promoting $V$ and $E$ to groups instead of dealing with less-constrained vector subspace dimensions.) Initially, $\mathcal{A}_{\mathfrak{f}_0} = \mathcal{A}$ and monotonically decreases in size as $i$ goes to $n$, $\mathcal{A}_{\mathfrak{f}_n} = \{ I^{\otimes n} \}$. Likewise, $\mathcal{A}_{\mathfrak{p}_0} = \{ I^{\otimes n} \}$ and monotonically increases in size as $i$ goes to $n$, $\mathcal{A}_{\mathfrak{p}_n}= \mathcal{A}$.

For a fixed $i$, sort the elements of $S^\perp$ into sets $S^\perp_{\mathfrak{p}_i}$, $S^\perp_{\mathfrak{f}_i}$, and $S^\perp_{\mathfrak{a}_i}$, where the ``active" set, denoted by $\mathfrak{a}$, contains the remaining elements neither wholly in the past nor future. Include the identity in $S^\perp_{\mathfrak{a}_i}$ to give these trivial intersection. An element of $S^\perp$ may hence be decomposed in the form $P_\mathfrak{p} P_\mathfrak{a} P_\mathfrak{f}$, where $P_\mathfrak{p} \in S^\perp_{\mathfrak{p}_i}$, $P_\mathfrak{a} \in S^\perp_{\mathfrak{a}_i}$, and $P_\mathfrak{f} \in S^\perp_{\mathfrak{f}_i}$. An element $P_{|_{\mathfrak{p}_i}} \in S^\perp_{|_{\mathfrak{p}_i}}$ is of the form $P_{|_{\mathfrak{p}_i}} = \left(P_{\mathfrak{p}_i}\right)_{|_{\mathfrak{p}_i}} \left(P_{\mathfrak{a}_i}\right)_{|_{\mathfrak{p}_i}}$, and similarly for the future. Thus $S^\perp_{|_{\mathfrak{p}_i}} = \text{span}\left\{ S^\perp_{{\mathfrak{p}_i}}, \left(S^\perp_{\mathfrak{a}_i}\right)_{|_{\mathfrak{p}_i}}\right\}$ and $S^\perp_{|_{\mathfrak{f}_i}} = \text{span}\left\{ S^\perp_{{\mathfrak{f}_i}}, \left(S^\perp_{\mathfrak{a}_i}\right)_{|_{\mathfrak{f}_i}}\right\}$. Fix a Pauli string $P \in S^\perp_{\mathfrak{a}_i}$ and let $v$ be the vertex generated by $P$ in $V_i$. Then every length-$i$ path from $V_0$ to $v$ is generated by the coset $P_{|_{\mathfrak{p}_i}} S^\perp_{{\mathfrak{p}_i}}$. Likewise, every path from $v$ to $V_n$ is generated by an element of the coset $P_{|_{\mathfrak{f}_i}} S^\perp_{{\mathfrak{f}_i}}$. Putting these together, every path from $V_0$ to $V_n$ may be viewed as a coset $P_{|_{\mathfrak{p}_i}} S^\perp_{{\mathfrak{p}_i}} P_{|_{\mathfrak{f}_i}} S^\perp_{{\mathfrak{f}_i}}$ with respect to $v$. See Figure \ref{fig:trelliscosets} for a visual summary. 
\begin{figure}
	\begin{center}
		\includegraphics[scale=1]{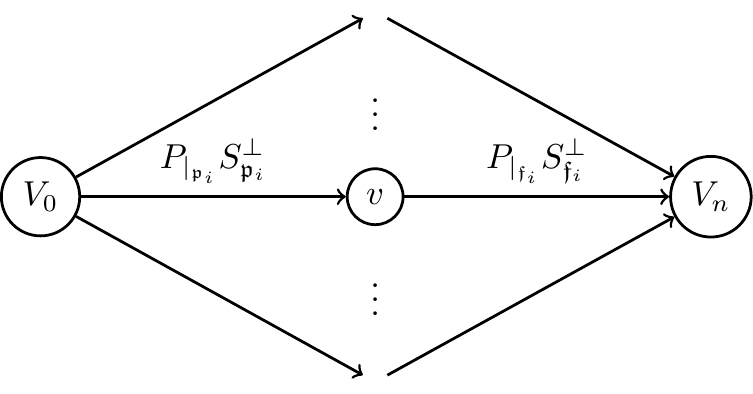}
	\end{center}
	\caption{Grouping all paths from $V_0$ to $v$ into a single line, the trellis may be seen as the depicted set of cosets. As is clear in the diagram, each vertex has a unique past and future.}
	\label{fig:trelliscosets}
\end{figure}

As mentioned in the previous section, there are historically numerous definitions of trellises in the classical literature. It was therefore important to determine whether some are ``better" than others, leading to the concept of a \textit{minimal trellis} (defined below). We will see below that if another trellis would be defined for stabilizer codes, the trellis of this work is minimal. The following proposition takes us in this direction and holds for any definition of a trellis for a stabilizer code.
\begin{restatable}[Lemma 1 \cite{ollivier2006trellises}]{proposition}{propfive}
	\label{prop:minbounds}
	Given a stabilizer code $S$, any trellis for $S$ must satisfy
	\begin{align*}
		|V_i| &\geq p^{\dim S^\perp - \dim S^\perp_{\mathfrak{p}_i} - \dim S^\perp_{\mathfrak{f}_i}},\\
		|E_i| &\geq p^{\dim S^\perp - \dim S^\perp_{\mathfrak{p}_{i - 1}} - \dim S^\perp_{\mathfrak{f}_i}}.
	\end{align*}
\end{restatable}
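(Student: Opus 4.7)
The plan is to establish both bounds by a counting argument that leverages Proposition~\ref{prop:iso}, which puts length-$n$ paths of the trellis in bijection with elements of $S^\perp$. I would first fix a depth $i$ and, for each vertex $v \in V_i$, let $C(v) \subseteq S^\perp$ denote the set of Pauli strings whose path passes through $v$. Since every path meets exactly one vertex at each depth, the collection $\{C(v)\}_{v \in V_i}$ partitions $S^\perp$, so $|S^\perp| = \sum_{v \in V_i} |C(v)|$ and the problem reduces to an upper bound on each $|C(v)|$.

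The central step is a Markov-style factorization of $C(v)$. Let $\mathrm{Pre}(v)$ denote the set of length-$i$ prefixes of paths reaching $v$ from $V_0$, and $\mathrm{Suf}(v)$ the set of length-$(n - i)$ suffixes of paths from $v$ to $V_n$. For any $p \in \mathrm{Pre}(v)$ and any $f \in \mathrm{Suf}(v)$, the concatenation of the corresponding sub-paths is a valid length-$n$ path in the trellis, and by Proposition~\ref{prop:iso} it corresponds to some element of $S^\perp$; combined with the obvious inclusion $C(v) \subseteq \mathrm{Pre}(v) \times \mathrm{Suf}(v)$ obtained by splitting each codeword at $v$, this gives $|C(v)| = |\mathrm{Pre}(v)| \cdot |\mathrm{Suf}(v)|$. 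To bound $|\mathrm{Pre}(v)|$, pick any $p_1, p_2 \in \mathrm{Pre}(v)$ and any fixed $f \in \mathrm{Suf}(v)$; both $p_1 f$ and $p_2 f$ lie in $S^\perp$, so $p_1 p_2^{-1} \in S^\perp$, and since this element has identity Pauli operators on every future index, it lies in $S^\perp_{\mathfrak{p}_i}$. Hence $\mathrm{Pre}(v)$ is contained in a single coset of $S^\perp_{\mathfrak{p}_i}$, yielding $|\mathrm{Pre}(v)| \leq p^{\dim S^\perp_{\mathfrak{p}_i}}$; by the symmetric argument $|\mathrm{Suf}(v)| \leq p^{\dim S^\perp_{\mathfrak{f}_i}}$. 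Summing the resulting bound on $|C(v)|$ over $v \in V_i$ and solving for $|V_i|$ produces the first inequality.

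The edge bound follows the same template with edges replacing vertices. For $e = (u, P, w) \in E_i$, let $C(e)$ be the codewords whose path traverses $e$; since each path uses exactly one edge per section, the $\{C(e)\}_{e \in E_i}$ also partition $S^\perp$. Each codeword in $C(e)$ decomposes as a length-$(i - 1)$ prefix ending at $u \in V_{i - 1}$, the fixed label $P$ on $e$, and a length-$(n - i)$ suffix leaving $w \in V_i$, so the same coset argument applied at depth $i - 1$ and $i$ yields $|C(e)| \leq p^{\dim S^\perp_{\mathfrak{p}_{i-1}} + \dim S^\perp_{\mathfrak{f}_i}}$, and summing over $e \in E_i$ gives the edge bound. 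The one genuinely non-routine point is justifying the factorization $C(v) = \mathrm{Pre}(v) \times \mathrm{Suf}(v)$: this is exactly where Proposition~\ref{prop:iso} is indispensable, since it ensures that every prefix-suffix concatenation passing through a common vertex is itself a codeword rather than merely an abstract path, tying the combinatorics of the graph to the algebra of $S^\perp$ and making the coset count meaningful.
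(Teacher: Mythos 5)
Your proposal is correct and follows essentially the same route as the paper: every codeword passes through exactly one vertex of $V_i$ (one edge of $E_i$), the set of codewords through a fixed vertex or edge is bounded by $|S^\perp_{\mathfrak{p}_i}|\,|S^\perp_{\mathfrak{f}_i}|$ (respectively $|S^\perp_{\mathfrak{p}_{i-1}}|\,|S^\perp_{\mathfrak{f}_i}|$) via the past/future coset structure, and summing gives $|S^\perp| \leq |V_i|\,|S^\perp_{\mathfrak{p}_i}|\,|S^\perp_{\mathfrak{f}_i}|$ and its edge analogue. The only difference is cosmetic: you derive the coset containment of prefixes and suffixes explicitly from Proposition~\ref{prop:iso}, whereas the paper cites the coset decomposition summarized in Figure~\ref{fig:trelliscosets}.
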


\begin{definition}
	A trellis that meets the lower bounds of Proposition \ref{prop:minbounds} is \emph{minimal}.\\
\end{definition}

\noindent {\bf Remark:} Non-minimal trellises will not be discussed in this work. As a trivial example, one could define a trellis where each element of $S^\perp$ consists of its own path from $V_0$ to $V_n$ without intersecting any other path. The number of vertices at each depth would then be $|S^\perp|$. We refer the reader to \cite{mceliece1996bcjr} for further (classical) examples. The concept of a minimal trellis comes from \cite{muder1988minimal}.\\

While the edges of the trellis encode $S^\perp$, the vertices do not take the logical operators into account and are determined by $S$, an important point in sharp contrast with the classical case whose vertices and edges are both constructed relative to the same object: the code.

For the vertices, fix $i$, let $P$ be an arbitrary element of $S^\perp$, and let $S_j$ be a fixed generator of $S$. If $S_j \in S_{\mathfrak{p}_i}$ then $\< S_j, \pi_{\{0, \hdots, i\} }(P) \> = 0$ since $S_j$ is the identity at indices $\{ i + 1, \hdots, n\}$ and these will commute with any elements of $P$ at these positions. Thus, the syndrome of $P$ with respect to the generator $S_j$ has already been completely determined (and is zero since $P \in S^\perp$). Likewise, if $S_j \in S_{\mathfrak{f}_i}$ then $\< S_j, \pi_{\{0, \hdots, i\} }(P) \> = 0$ since $S_j$ is the identity at indices $\{1, \hdots, i \}$, which, again, commute with $\pi_{\{0, \hdots, i\} }(P)$. Only generators $S_j \in S_{\mathfrak{a}_i}$ can have nonzero syndromes.

\begin{restatable}[Quantum Space Theorem(s)]{theorem}{thmseven}
	\label{thm:quantstate}
	The syndrome trellis of Section \ref{sec:construction} is minimal, i.e.,
	\begin{enumerate}
		\item (Quantum State Space Theorem, Lemma 2 \cite{ollivier2006trellises})
			\begin{equation}\label{Vi}
				|V_i| = p^{\dim S^\perp - \dim S^\perp_{\mathfrak{p}_i} - \dim S^\perp_{\mathfrak{f}_i}}
			\end{equation}
		\item (Quantum Branch Space Theorem)
			\begin{equation}\label{Ei}
				|E_i| = p^{\dim S^\perp - \dim S^\perp_{\mathfrak{p}_{i - 1}} - \dim S^\perp_{\mathfrak{f}_i}}
			\end{equation}
	\end{enumerate}
\end{restatable}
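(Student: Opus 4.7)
The plan is to realize both $|V_i|$ and $|E_i|$ as images of natural group homomorphisms out of $S^\perp$, identify their kernels as products of past and future subgroups, and apply the first isomorphism theorem. Since the formulas in the statement exactly match the lower bounds of Proposition \ref{prop:minbounds}, minimality is immediate once the equalities are established.

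For \textit{(1)}, consider the partial syndrome map $\sigma_i : S^\perp \to \F_p^{n-k}$. By the definition \eqref{vertexset}, the image of $\sigma_i$ restricted to $S^\perp$ is exactly $V_i$ (after taking $P_s = I^{\otimes n}$, which we may do by the shifting argument of Section \ref{sec:construction}). Since $\pi_{\mathfrak{p}_i}$ is a group homomorphism and the symplectic form is bilinear in the appropriate sense, $\sigma_i$ is a group homomorphism into $(\F_p^{n-k}, +)$. The claim is then that
\begin{equation*}
    \ker \sigma_i = S^\perp_{\mathfrak{p}_i} \cdot S^\perp_{\mathfrak{f}_i}.
\end{equation*}
The inclusion $\supseteq$ is clear because any $P_\mathfrak{p} \in S^\perp_{\mathfrak{p}_i}$ already has zero syndrome (it lies in $S^\perp$), and any $P_\mathfrak{f} \in S^\perp_{\mathfrak{f}_i}$ is annihilated by $\pi_{\mathfrak{p}_i}$. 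For $\subseteq$, if $\sigma_i(P) = 0$ then $\pi_{\mathfrak{p}_i}(P)$ commutes with every stabilizer generator and hence lies in $S^\perp$; since it is identity on the future it lies in $S^\perp_{\mathfrak{p}_i}$. Then $\pi_{\mathfrak{f}_i}(P) = \pi_{\mathfrak{p}_i}(P)^{-1} P$ is a product of elements in $S^\perp$, hence in $S^\perp$, and is identity on the past, placing it in $S^\perp_{\mathfrak{f}_i}$. Since $S^\perp_{\mathfrak{p}_i} \cap S^\perp_{\mathfrak{f}_i} = \{I^{\otimes n}\}$, their product is an internal direct product and $|\ker \sigma_i| = p^{\dim S^\perp_{\mathfrak{p}_i} + \dim S^\perp_{\mathfrak{f}_i}}$. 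The first isomorphism theorem gives \eqref{Vi}.

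For \textit{(2)}, by Proposition \ref{prop:iso} and Corollary \ref{cor:bruteforce}, every edge in $E_i$ arises from some $P \in S^\perp$ as the triple $(\sigma_{i-1}(P), P_i, \sigma_i(P))$, and two such $P, P'$ produce the same edge iff $\sigma_{i-1}(P) = \sigma_{i-1}(P')$ and $P_i = P'_i$ (the third coordinate is then forced by $\sigma_i = \sigma_{i-1} + \<\cdot, \pi_{\{i\}}(\cdot)\>$). So $E_i$ is the image of the homomorphism
\begin{equation*}
    \Phi_i : S^\perp \longrightarrow \F_p^{n-k} \times \mathcal{P}(1,p), \qquad P \longmapsto (\sigma_{i-1}(P),\, P_i).
\end{equation*}
Running the same kernel computation as above, $P \in \ker \Phi_i$ iff $P \in S^\perp_{\mathfrak{p}_{i-1}} \cdot S^\perp_{\mathfrak{f}_{i-1}}$ and $P_i = I$. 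Writing $P = Q_\mathfrak{p} Q_\mathfrak{f}$ with $Q_\mathfrak{p} \in S^\perp_{\mathfrak{p}_{i-1}}$ (already identity at position $i$) and $Q_\mathfrak{f} \in S^\perp_{\mathfrak{f}_{i-1}}$, the second condition forces $Q_\mathfrak{f}$ to be identity at position $i$ as well, i.e.\ $Q_\mathfrak{f} \in S^\perp_{\mathfrak{f}_i}$. Therefore $\ker \Phi_i = S^\perp_{\mathfrak{p}_{i-1}} \cdot S^\perp_{\mathfrak{f}_i}$, which is again an internal direct product, and \eqref{Ei} follows.

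The main obstacle I anticipate is being pedantic about the move between Pauli strings (a group under multiplication, with phase) and their symplectic $\F_p^{2n}$ representatives (a vector space), particularly justifying that $\sigma_i$ and $\Phi_i$ really are homomorphisms on the group $S^\perp$ and that the quotients enumerate distinct vertices/edges rather than mere cosets of representatives. Once the convention from Section \ref{sec:construction} (suppressing the central phase $\omega^c$) is invoked, the bilinearity of $\<\cdot,\cdot\>$ makes both maps additive, and the kernel identifications above proceed cleanly. Minimality then follows because the right-hand sides of \eqref{Vi} and \eqref{Ei} are precisely the lower bounds of Proposition \ref{prop:minbounds}.
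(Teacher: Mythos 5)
Your proposal is correct and follows essentially the same route as the paper: both realize $V_i$ and $E_i$ as images of group homomorphisms out of $S^\perp$ (the paper uses $P \mapsto (\sigma_{i-1}(P), P_i, \sigma_i(P))$ for the edges, you drop the redundant third coordinate), identify the kernels as $S^\perp_{\mathfrak{p}_i} \times S^\perp_{\mathfrak{f}_i}$ and $S^\perp_{\mathfrak{p}_{i-1}} \times S^\perp_{\mathfrak{f}_i}$, and invoke the first isomorphism theorem, with minimality following from Proposition \ref{prop:minbounds}. Your explicit verification of the reverse kernel inclusion (that $\sigma_i(P)=0$ forces $\pi_{\mathfrak{p}_i}(P) \in S^\perp_{\mathfrak{p}_i}$ and hence $P \in S^\perp_{\mathfrak{p}_i} \cdot S^\perp_{\mathfrak{f}_i}$) is a welcome elaboration of a step the paper leaves to the surrounding discussion.
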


\noindent {\bf Remark:} The sequences $\{|V_i|\}^n_{i = 0}$ and $\{|E_i|\}^n_{i = 1}$ are often referred to as the \textit{state space} and \textit{branch space} complexity profiles, respectively. The quantity $\max_i \dim V_i$ is the \textit{state complexity}, $\max_i \dim E_i$ the \textit{branch complexity}, and $|E|$ the \textit{edge complexity}. We will not utilize these in this work.\\

\begin{restatable}[]{lemma}{lemmaeight}
	\label{lem:VEvs}
	The set of edges in $E_i$ with terminus vertex $\overline{0}$ is a subgroup of $E_i$.
\end{restatable}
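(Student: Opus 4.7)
The plan is to realize $E_i$ as a homomorphic image of $S^\perp$ and then identify the edges with terminus $\overline{0}$ as the kernel of the natural terminus projection.

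First, I would introduce the map $\phi \colon S^\perp \to V_{i-1} \times \mathcal{P}(1, p) \times V_i$ defined by $\phi(P) = (\sigma_{i-1}(P), P_i, \sigma_i(P))$, where each $V_j$ is viewed as an additive subgroup of $\F_p^{n-k}$. To see that $\phi$ is a group homomorphism, I would check the three coordinates separately: $\sigma_j$ factors as an $\F_p$-bilinear symplectic inner product against each fixed stabilizer generator composed with the Pauli-group homomorphism $\pi_{\mathfrak{p}_j}$ (noted in the paragraph preceding Definition \ref{def:pastfut}), while extraction of the $i$-th Pauli $P \mapsto P_i$ is a homomorphism under the phase convention $\omega^c = 1$ already adopted in Section \ref{sec:construction}. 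By Proposition \ref{prop:iso} every edge of $E_i$ is the $i$-th edge of the path corresponding to some element of $S^\perp$, so $\im \phi = E_i$; this transports a group structure onto $E_i \cong S^\perp / \ker \phi$.

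With $E_i$ so realized as a group, the terminus map $t \colon E_i \to V_i$ given by projection onto the third coordinate is manifestly a group homomorphism, and the set of edges with terminus $\overline{0}$ is precisely $t^{-1}(\overline{0}) = \ker t$, which is automatically a (normal) subgroup of $E_i$.

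The main obstacle is mostly bookkeeping: one has to set up the three group structures cleanly (syndromes additively in $\F_p^{n-k}$, single-qudit and length-$n$ Pauli strings multiplicatively modulo the fixed phase) and verify that $\phi$ respects each product. Everything needed is already in place from the discussion surrounding Figure \ref{fig:commdiag}, so once the setup is written down, the subgroup claim falls out essentially for free from the kernel description, with no case analysis on the structure of $S^\perp$ needed.
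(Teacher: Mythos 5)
Your proposal is correct and takes essentially the same route as the paper: there, too, the group structure on $E_i$ comes from realizing it as the homomorphic image of $S^\perp$ under $P \mapsto (\sigma_{i-1}(P), P_i, \sigma_i(P))$ (this is exactly what the proof of Theorem \ref{thm:quantstate} establishes and what the paper's proof of Lemma \ref{lem:VEvs} cites). The only difference is the final step, where the paper verifies identity, closure, and inverses componentwise by hand, while you obtain the subgroup property at once as $\ker t$ for the terminus projection --- a cosmetic streamlining of the same argument.
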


\begin{restatable}[]{corollary}{coronine}
	\label{cor:inoutdegs}
	For $1 \leq i \leq n$, every vertex $v \in V_i$ has incoming degree
	\begin{equation}
		\deg_\text{in}(v) = p^{\dim S^\perp_{\mathfrak{p}_i} - \dim S^\perp_{\mathfrak{p}_{i - 1}}},
	\end{equation}
	and for $0 \leq i \leq n - 1$, every vertex $v \in V_i$ has outgoing degree
	\begin{equation}
		\deg_\text{out}(v) = p^{\dim S^\perp_{\mathfrak{f}_i} - \dim S^\perp_{\mathfrak{f}_{i + 1}}}.
	\end{equation}
\end{restatable}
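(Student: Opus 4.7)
The plan is to combine Lemma \ref{lem:VEvs} with the vertex and edge counts of Theorem \ref{thm:quantstate}. Lemma \ref{lem:VEvs} says that $K_i \coloneqq \{e \in E_i : t(e) = \overline{0}\}$ is a subgroup of $E_i$ (under the componentwise Pauli/vertex-label product inherited from $S^\perp$). I will first show that for every $v \in V_i$ the fiber $E_i(v) \coloneqq \{e \in E_i : t(e) = v\}$ is a coset of $K_i$, so that in particular $|E_i(v)| = |K_i|$ is independent of $v$. Then the in-degree is computed as $|E_i|/|V_i|$ using Theorem \ref{thm:quantstate}, and the out-degree follows by the symmetric argument applied to edges with source $\overline{0}$.

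\smallskip
\noindent\textbf{Step 1 (fibers are cosets).} Fix $v \in V_i$. By Proposition \ref{prop:iso}, $v$ lies on some length-$n$ path corresponding to an element of $S^\perp$, hence $E_i(v)$ is nonempty; pick any $e_0 \in E_i(v)$. For any $e \in E_i(v)$, the product $e \cdot e_0^{-1}$ is an edge in $E_i$ whose terminus is $v + (-v) = \overline{0}$, so it lies in $K_i$; conversely, right-multiplying any element of $K_i$ by $e_0$ gives an edge in $E_i$ with terminus $\overline{0} + v = v$. Hence $E_i(v) = K_i \cdot e_0$, a coset of $K_i$, and $|E_i(v)| = |K_i|$.

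\smallskip
\noindent\textbf{Step 2 (counting).} Because the fibers $\{E_i(v) : v \in V_i\}$ partition $E_i$ and each fiber has size $|K_i|$, we obtain $|E_i| = |V_i| \cdot |K_i|$, so for every $v \in V_i$,
\begin{equation*}
  \deg_\text{in}(v) = |K_i| = \frac{|E_i|}{|V_i|} = \frac{p^{\dim S^\perp - \dim S^\perp_{\mathfrak{p}_{i-1}} - \dim S^\perp_{\mathfrak{f}_i}}}{p^{\dim S^\perp - \dim S^\perp_{\mathfrak{p}_i} - \dim S^\perp_{\mathfrak{f}_i}}} = p^{\dim S^\perp_{\mathfrak{p}_i} - \dim S^\perp_{\mathfrak{p}_{i-1}}},
\end{equation*}
as claimed. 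For the out-degree, the completely analogous argument applies to the set $K'_{i+1} \coloneqq \{e \in E_{i+1} : s(e) = \overline{0}\}$, which is a subgroup of $E_{i+1}$ by the same reasoning used in Lemma \ref{lem:VEvs} with the roles of source and terminus interchanged (the group structure on $E_{i+1}$ is symmetric in the two coordinate vertices). Applying Steps 1--2 yields $\deg_\text{out}(v) = |E_{i+1}|/|V_i| = p^{\dim S^\perp_{\mathfrak{f}_i} - \dim S^\perp_{\mathfrak{f}_{i+1}}}$.

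\smallskip
\noindent\textbf{Main obstacle.} The only nontrivial point is ensuring that the ``source-zero'' edge set $K'_{i+1}$ really is a subgroup of $E_{i+1}$, i.e., that the source-reversed analog of Lemma \ref{lem:VEvs} holds. This is essentially immediate because $E_{i+1}$ is itself defined as a homomorphic image of $S^\perp$ via $P \mapsto (\sigma_i(P),P_{i+1},\sigma_{i+1}(P))$, and fixing either coordinate to $\overline{0}$ pulls back to a kernel-like (hence normal) subgroup of $S^\perp$. If one prefers not to re-derive this, the out-degree formula also follows by re-indexing: reversing the qudit order exchanges past and future and turns the in-degree statement at depth $n-i$ into the out-degree statement at depth $i$, so no new content is actually needed beyond what Lemma \ref{lem:VEvs} and Theorem \ref{thm:quantstate} already supply.
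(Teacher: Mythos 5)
Your proposal is correct and follows essentially the same route as the paper's proof: use Lemma \ref{lem:VEvs} to see that each incoming-edge fiber is a coset of the terminus-$\overline{0}$ subgroup, conclude that $\deg_\text{in}(v) = |E_i|/|V_i|$, and substitute the counts from Theorem \ref{thm:quantstate}, with the out-degree handled by the symmetric argument giving $|E_{i+1}|/|V_i|$. The extra care you take in verifying nonemptiness of the fibers via Proposition \ref{prop:iso} and the source-zero subgroup claim is a fleshing-out of what the paper asserts more tersely, not a different argument.
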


\begin{restatable}[]{lemma}{lemmaten}
	\label{lem:deltadim}
	Let $\mathcal{A} = S$ or $S^\perp$ and appropriate $i$ be fixed. Then
	\begin{gather*}
		0 \leq \dim \mathcal{A}_{\mathfrak{f}_i} - \dim \mathcal{A}_{\mathfrak{f}_{i + 1}} \leq 2,\\
		0 \leq \dim \mathcal{A}_{\mathfrak{p}_i} - \dim \mathcal{A}_{\mathfrak{p}_{i - 1}} \leq 2.
	\end{gather*}
\end{restatable}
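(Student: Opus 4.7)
The lower bounds are the easy inclusions. Since $\mathfrak{f}_{i+1}\subset\mathfrak{f}_i$ and $\mathfrak{p}_{i-1}\subset\mathfrak{p}_i$, any Pauli string whose support is confined to the smaller index set is also confined to the larger one, so $\mathcal{A}_{\mathfrak{f}_{i+1}}\leq\mathcal{A}_{\mathfrak{f}_i}$ and $\mathcal{A}_{\mathfrak{p}_{i-1}}\leq\mathcal{A}_{\mathfrak{p}_i}$. Because these subgroups are elementary abelian $p$-groups (subgroups of $\mathcal{P}(n,p)/\mathcal{Z}(\mathcal{P}(n,p))\cong\F_p^{2n}$), dimension equals $\log_p$ of the order and containment of subgroups immediately upgrades to monotonicity of dimension.

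For the upper bounds I would apply the first isomorphism theorem to a one-qudit projection. Consider the group homomorphism
\[
\phi\colon\mathcal{A}_{\mathfrak{f}_i}\longrightarrow\mathcal{P}(1,p)/\mathcal{Z}(\mathcal{P}(1,p))
\]
that extracts the Pauli operator at index $i+1$ (modulo phase). This is a homomorphism because projection to a single tensor factor commutes with multiplication once the central phase is quotiented. Its kernel consists of those $P\in\mathcal{A}_{\mathfrak{f}_i}$ whose $(i+1)$-st Pauli operator is trivial; combined with the defining condition that $P$ is already trivial on $\mathfrak{p}_i$, such a $P$ is in fact trivial on all of $\mathfrak{p}_{i+1}$, so $\ker\phi=\mathcal{A}_{\mathfrak{f}_{i+1}}$. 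Hence $\mathcal{A}_{\mathfrak{f}_i}/\mathcal{A}_{\mathfrak{f}_{i+1}}$ embeds into $\mathcal{P}(1,p)/\mathcal{Z}(\mathcal{P}(1,p))$, which is generated by a single-qudit $X$ and $Z$ and therefore has order $p^2$. Taking $\log_p$ gives $\dim\mathcal{A}_{\mathfrak{f}_i}-\dim\mathcal{A}_{\mathfrak{f}_{i+1}}\leq 2$. The past inequality follows by running the identical argument with $\phi$ projecting to index $i$ instead of $i+1$, whose kernel is then $\mathcal{A}_{\mathfrak{p}_{i-1}}$.

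I do not foresee a real obstacle; the content is the standard first-isomorphism-theorem bookkeeping together with the observation that a single qudit contributes at most two independent generators. The only subtlety worth flagging is the consistent treatment of phases: multiplication of Pauli strings generically introduces a central factor, so the target of $\phi$ must be taken modulo $\mathcal{Z}(\mathcal{P}(1,p))$ for $\phi$ to be a well-defined homomorphism. This is fully compatible with the paper's convention $\omega^c=1$ and its identification of $S$ and $S^\perp$ with subgroups of $\mathcal{P}(n,p)/\mathcal{Z}(\mathcal{P}(n,p))$, so no additional work is required.
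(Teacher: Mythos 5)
Your proof is correct and rests on the same fact as the paper's: the index of $\mathcal{A}_{\mathfrak{f}_{i+1}}$ in $\mathcal{A}_{\mathfrak{f}_i}$ (resp.\ $\mathcal{A}_{\mathfrak{p}_{i-1}}$ in $\mathcal{A}_{\mathfrak{p}_i}$) is at most $p^2$ because a single qudit position admits only $p^2$ Pauli possibilities modulo phase. The paper establishes this by an explicit counting/pairing argument on the number of identities in the relevant column, whereas you package the identical content as the first isomorphism theorem applied to the one-qudit projection with kernel $\mathcal{A}_{\mathfrak{f}_{i+1}}$ (resp.\ $\mathcal{A}_{\mathfrak{p}_{i-1}}$) — a tidier formalization of the same argument, with the easy lower bounds handled by the obvious containments.
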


\begin{corollary}
	For appropriate $i$, $\deg_\text{in/out}(v) \in \{1, p, p^2\}$. Equivalently, for $1 \leq i \leq n$, $\dim V_i \leq \dim E_i \leq \dim V_i + 2$.
\end{corollary}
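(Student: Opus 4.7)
The proof is essentially an immediate corollary that chains together the two results stated just above, so the plan is to make the combination explicit rather than introduce new machinery.

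First, I would invoke Corollary \ref{cor:inoutdegs} to write the in- and out-degrees at a vertex $v \in V_i$ as
\[
    \deg_\text{in}(v) = p^{\dim S^\perp_{\mathfrak{p}_i} - \dim S^\perp_{\mathfrak{p}_{i - 1}}}, \qquad
    \deg_\text{out}(v) = p^{\dim S^\perp_{\mathfrak{f}_i} - \dim S^\perp_{\mathfrak{f}_{i + 1}}}.
\]
Then I would apply Lemma \ref{lem:deltadim} with $\mathcal{A} = S^\perp$ to both exponents: the past-exponent $\dim S^\perp_{\mathfrak{p}_i} - \dim S^\perp_{\mathfrak{p}_{i-1}}$ and the future-exponent $\dim S^\perp_{\mathfrak{f}_i} - \dim S^\perp_{\mathfrak{f}_{i+1}}$ both lie in $\{0, 1, 2\}$. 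Raising $p$ to these exponents gives $\deg_\text{in/out}(v) \in \{1, p, p^2\}$, which is the first assertion.

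For the equivalent formulation, I would use Theorem \ref{thm:quantstate} to take dimensions of the trellis spaces and compute
\[
    \dim E_i - \dim V_i = \bigl(\dim S^\perp - \dim S^\perp_{\mathfrak{p}_{i-1}} - \dim S^\perp_{\mathfrak{f}_i}\bigr) - \bigl(\dim S^\perp - \dim S^\perp_{\mathfrak{p}_i} - \dim S^\perp_{\mathfrak{f}_i}\bigr) = \dim S^\perp_{\mathfrak{p}_i} - \dim S^\perp_{\mathfrak{p}_{i-1}},
\]
which, again by Lemma \ref{lem:deltadim}, lies in $\{0, 1, 2\}$. Rearranging yields $\dim V_i \leq \dim E_i \leq \dim V_i + 2$. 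The two statements are equivalent because $|E_i|/|V_i|$ is exactly $\deg_\text{in}(v)$ for any $v \in V_i$, by the uniform-in-degree part of Corollary \ref{cor:inoutdegs}.

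There is no real obstacle here: every ingredient has been established in the two immediately preceding results, and the argument is pure bookkeeping with exponents. The only care needed is to note that both the past and future sequences of subgroup dimensions grow by at most $2$ per step (which is precisely what Lemma \ref{lem:deltadim} guarantees for both $\mathcal{A} = S$ and $\mathcal{A} = S^\perp$), so the same bound controls $\deg_\text{in}$ and $\deg_\text{out}$ simultaneously.
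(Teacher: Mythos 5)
Your argument is correct and matches the paper's intent exactly: the paper states this corollary without a separate proof precisely because it follows immediately from the degree formulas of Corollary \ref{cor:inoutdegs}, the cardinalities in Theorem \ref{thm:quantstate}, and the dimension-change bound of Lemma \ref{lem:deltadim}, which is the chain you spell out. Your bookkeeping, including the identification $\dim E_i - \dim V_i = \dim S^\perp_{\mathfrak{p}_i} - \dim S^\perp_{\mathfrak{p}_{i-1}}$ and the uniformity of degrees via $|E_i|/|V_i|$, is exactly right.
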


Table \ref{tab:edgeconfigs} records this result graphically for the special case $p = 2$. It is well-known in classical trellis theory that the dimension change in Lemma \ref{lem:deltadim} is bounded by one \cite{forney1994dimension}. This is due to classical codes only having one symbol alphabet instead of the two, $\< X \>$ and $\< Z \>$, which allows one to completely row reduce to a single pivot per column. Any configuration in Table \ref{tab:edgeconfigs} with a two is therefore unique to the quantum setting. The next theorem shows that these are the only possible configurations. No corresponding proof exists in the classical literature.
\begin{table}[t]
	\centering
	\begin{tabular}{c}
		\includegraphics[scale=0.9]{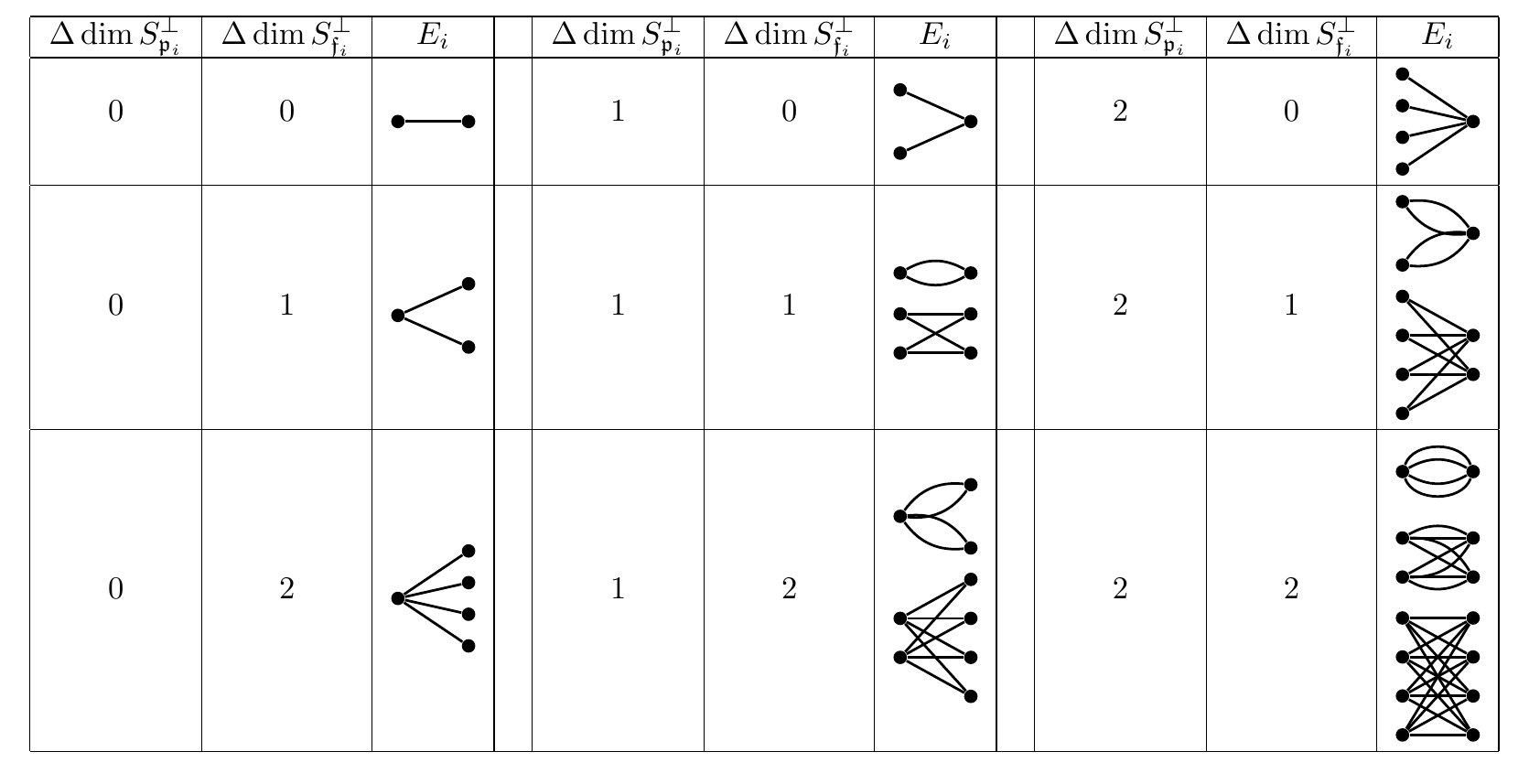}
	\end{tabular}
	\caption{The various possible edge configurations for $p = 2$. Since $\Delta \dim S^\perp_{\mathfrak{p}_i}$ and $\Delta \dim S^\perp_{\mathfrak{f}_i}$ are bounded by one in classical theory, any configurations with a two is unique to the quantum setting.}
	\label{tab:edgeconfigs}
\end{table}

\begin{restatable}[]{theorem}{thmtwelve}
	\label{thm:bipartthm}
	Consider an arbitrary edge $e = (s(e), P_i, t(e)) \in E_i$ and define $\mathcal{I}_i = \{ v \in V_i \mid \exists e^\prime \in E_i, t(e^\prime) = t(e)\}$ and $\mathcal{I}_{i + 1} = \{ v \in V_{i + 1} \mid \exists e^\prime \in E_i, s(e^\prime) = s(e)\}$. Then the vertices of $\mathcal{I}_i$ and $\mathcal{I}_{i + 1}$ form a completely-connected bipartite graph and no other elements of $V_i \backslash \mathcal{I}_i$ or $V_{i + 1} \backslash \mathcal{I}_{i + 1}$ are connected to the vertices in $\mathcal{I}_{i + 1}$ and $\mathcal{I}_i$, respectively. If there exists a parallel edge in $E_i$ then all edges in $E_i$ are parallel with the same number of edges in parallel.
\end{restatable}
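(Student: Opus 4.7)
The plan is to reduce all three conclusions to a single syndrome-arithmetic identity combined with Corollary~\ref{cor:bruteforce} (for bipartite completeness and exclusion), together with a short quotient-group count (for the parallel-edge claim).

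For the bipartite-completeness part, I would fix $u \in \mathcal{I}_i$ and $w \in \mathcal{I}_{i+1}$ and pick witnessing edges $e_1 = (u, Q_1, t(e))$ and $e_2 = (s(e), Q_2, w)$ in $E_i$. Reading the partial-syndrome changes along $e$, $e_1$, $e_2$ as the syndrome increments of the single-qudit Paulis $P_i$, $Q_1$, $Q_2$ at coordinate $i$, a short telescoping identity shows that $Q \coloneqq Q_1 Q_2 P_i^{-1}$ realizes precisely the required increment from the partial syndrome of $u$ to that of $w$. Hence $Q$ takes $u$ to $w$ in the sense of Corollary~\ref{cor:bruteforce}, which then supplies $(u, Q, w) \in E_i$ for free. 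The exclusion claim follows by running the same identity in reverse: any hypothetical edge $(u', Q', w) \in E_i$ with $u' \notin \mathcal{I}_i$ would produce a Pauli $P_i Q_2^{-1} Q'$ taking $u'$ to $t(e)$, and Corollary~\ref{cor:bruteforce} would then force $u' \in \mathcal{I}_i$, a contradiction. The symmetric direction, for vertices outside $\mathcal{I}_{i+1}$, is identical.

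For the parallel-edge claim I would view $E_i$ as a group. By Proposition~\ref{prop:iso}, edges are in bijection with cosets of the subgroup $K \subseteq S^\perp$ consisting of elements with vanishing partial syndrome at depth $i-1$ and identity at coordinate $i$, so multiplication on $S^\perp$ descends to a group structure on $E_i$. The endpoint map $\Phi : E_i \to V_{i-1} \times V_i$, $e \mapsto (s(e), t(e))$, is then a homomorphism into the product of the vertex groups, and its kernel $N$ is exactly the set of edges $\overline{0} \to \overline{0}$, a subgroup by Lemma~\ref{lem:VEvs}. Standard fiber-counting yields $|\Phi^{-1}(u,w)| = |N|$ for every connected pair $(u,w)$, so the existence of a single parallel edge forces $|N| \geq 2$ and hence uniform parallel multiplicity across all of $E_i$.

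The main obstacle is really the parallel-edge step: one has to verify that multiplication on $S^\perp$ genuinely descends to a well-defined group structure on $E_i$ and that $\Phi$ is a homomorphism into the natural vertex group. By contrast, the bipartite and exclusion parts both collapse to a single telescoping identity in partial syndromes plus one invocation of Corollary~\ref{cor:bruteforce}, so almost no additional work is needed once one has noticed that $Q_1 Q_2 P_i^{-1}$ is the right Pauli label to construct.
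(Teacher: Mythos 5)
Your proof is correct, and it differs from the paper's mostly in presentation for the first two claims but genuinely in route for the third. For completeness and exclusion, the paper first reduces to the identity edge $(\overline{0}, I, \overline{0})$ (``all edge configurations are a shift of this one'') and then multiplies witnessing edges inside the group $E_i$; you instead stay at the arbitrary edge $e$ and use additivity of the coordinate-$i$ syndrome increment, so the telescoping computation certifies that $Q_1 Q_2 P_i^{-1}$ (respectively $P_i Q_2^{-1} Q^\prime$) takes the right source to the right terminus, after which Corollary~\ref{cor:bruteforce} supplies the edge. This is the same algebra viewed additively rather than multiplicatively, with the small benefit of not needing the shift-to-identity step that the paper only sketches. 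For the parallel-edge claim the arguments really diverge: the paper fixes a pair with $k$ parallel labels, transports them to any other connected pair by multiplying by loop edges at $\overline{0}$, and then runs the injection in both directions to pin the count at exactly $k$; you instead note that the endpoint map $\Phi(e) = (s(e), t(e))$ is a homomorphism on the group $E_i$ and that every nonempty fiber is a coset of $\ker \Phi$, hence of common size $|\ker \Phi|$, which gives uniform multiplicity in one stroke and identifies it as the number of $\overline{0}$-to-$\overline{0}$ edges. The one point you flag as an obstacle, that multiplication on $S^\perp$ descends to $E_i$, is already available in the paper: the proof of part \emph{(ii)} of Theorem~\ref{thm:quantstate} exhibits $E_i$ as the image of $S^\perp$ under the homomorphism $P \mapsto (\sigma_{i-1}(P), P_i, \sigma_i(P))$, and Lemma~\ref{lem:VEvs} records the relevant subgroup of loops; your citation of Proposition~\ref{prop:iso} for the edge--coset bijection is slightly off target (the first isomorphism theorem applied to that homomorphism is what you want), but this is a labeling quibble, not a gap.
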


At first glance, Equation \eqref{Vi} should be viewed with skepticism. It is common for $\dim S_{\mathfrak{p}_i} = \dim S_{\mathfrak{f}_i} = 0$ (see Example \ref{ex:stabs2}) and if the same holds true for $S^\perp$, then $|V_i| = p^{n + k}$ which is notably greater than the number of total possible syndromes, $p^{n - k}$. Applying the classical formula for $V_i$ or using the logic above, one would expect $\dim V_i = \dim S - \dim S_{\mathfrak{p}_i} - \dim S_{\mathfrak{f}_i}$, and in fact this also works. The inclusion of the logicals therefore always forces $\dim S^\perp_{\mathfrak{p}_i} + \dim S^\perp_{\mathfrak{f}_i} \geq 2k$. The various formulas throughout this subsection may therefore be written with this alternative view of $\dim V_i$, but this leads to a lack of cancellation of terms and an unpleasant factor of $2k$ floating around whose lack of obvious effect on the results requires justification.
\begin{restatable}[]{corollary}{corothirteen}
	For $0 \leq i \leq n$, $\dim S^\perp_{\mathfrak{p}_i} + \dim S^\perp_{\mathfrak{f}_i} \geq 2k$.
\end{restatable}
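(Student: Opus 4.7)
My plan is to combine the vertex-count formula from Theorem \ref{thm:quantstate} with the trivial upper bound coming from the size of the syndrome alphabet.

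First, I would observe that by the definition of the trellis in Section \ref{sec:construction}, every vertex in $V_i$ carries a unique label which is an element of $\F_p^{n-k}$ (an $(n-k)$-tuple of syndromes with respect to the $n-k$ fixed generators of $S$). Hence for every $0 \leq i \leq n$ we have the a priori bound $|V_i| \leq p^{n-k}$. Next I would invoke the Quantum State Space Theorem (Equation \eqref{Vi}), which gives $|V_i| = p^{\dim S^\perp - \dim S^\perp_{\mathfrak{p}_i} - \dim S^\perp_{\mathfrak{f}_i}}$. Substituting the standard identity $\dim S^\perp = n+k$ and combining with the upper bound yields
\begin{equation*}
p^{\,n + k - \dim S^\perp_{\mathfrak{p}_i} - \dim S^\perp_{\mathfrak{f}_i}} \;\leq\; p^{\,n - k},
\end{equation*}
and taking logarithms immediately gives $\dim S^\perp_{\mathfrak{p}_i} + \dim S^\perp_{\mathfrak{f}_i} \geq 2k$, as desired.

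As an alternative route that more directly matches the discussion preceding the corollary, I could argue via the ``$S$-version" of the vertex formula: since the syndromes at depth $i$ are determined only by those generators of $S$ lying in $S_{\mathfrak{a}_i}$, a parallel argument to the proof of Theorem \ref{thm:quantstate} would yield $|V_i| = p^{\dim S - \dim S_{\mathfrak{p}_i} - \dim S_{\mathfrak{f}_i}}$. Equating this with Equation \eqref{Vi} and using $\dim S^\perp - \dim S = 2k$ gives
\begin{equation*}
\dim S^\perp_{\mathfrak{p}_i} + \dim S^\perp_{\mathfrak{f}_i} \;=\; 2k + \dim S_{\mathfrak{p}_i} + \dim S_{\mathfrak{f}_i} \;\geq\; 2k,
\end{equation*}
with equality iff $S_{\mathfrak{p}_i}$ and $S_{\mathfrak{f}_i}$ are both trivial.

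There is no real obstacle here; the result is essentially a bookkeeping consequence of Theorem \ref{thm:quantstate} together with the fact that vertex labels live in a space of size $p^{n-k}$. The only subtlety worth flagging is making sure the short argument does not implicitly re-derive the $S$-version of the formula, which would require redoing the kernel/image counting from the proof of the Quantum State Space Theorem. I would therefore prefer the first route, which uses only the already-proven statement of Theorem \ref{thm:quantstate} and the definition of vertex labels.
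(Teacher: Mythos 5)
Your first route is correct, but it is not the argument the paper uses. The paper's proof never touches the trellis: it applies Lemma \ref{lem:deltadim} to $S^\perp$, starting from $\dim S^\perp_{\mathfrak{f}_0} = n+k$ and $\dim S^\perp_{\mathfrak{p}_n} = n+k$, to obtain the depth-resolved bounds $\dim S^\perp_{\mathfrak{f}_i} \geq n+k-2i$ and $\dim S^\perp_{\mathfrak{p}_i} \geq k-n+2i$, whose sum is the claim. You instead combine Theorem \ref{thm:quantstate} with the observation that $V_i = \sigma_i(S^\perp) \subseteq \F_p^{n-k}$, so $|V_i| \leq p^{n-k}$; with $\dim S^\perp = n+k$ and Equation \eqref{Vi} this gives $n+k-\dim S^\perp_{\mathfrak{p}_i}-\dim S^\perp_{\mathfrak{f}_i} \leq n-k$ immediately. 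That is a valid proof, and it formalizes exactly the heuristic the paper states in the paragraph preceding the corollary (the vertex count cannot exceed the number of possible syndrome labels); it buys brevity and uses only already-proven statements. What the paper's route buys is the stronger pointwise control on $\dim S^\perp_{\mathfrak{p}_i}$ and $\dim S^\perp_{\mathfrak{f}_i}$ individually, and an argument that transfers verbatim to $S$, which is how the remark following the corollary ($\dim S_{\mathfrak{p}_i} + \dim S_{\mathfrak{f}_i} \geq 0$) is justified; your route does not transfer, since Theorem \ref{thm:quantstate} concerns $S^\perp$ only. Your second route should be discarded, as you yourself flag: it rests on the ``$S$-version'' vertex formula $|V_i| = p^{\dim S - \dim S_{\mathfrak{p}_i} - \dim S_{\mathfrak{f}_i}}$, which the paper asserts in passing but never proves, so it cannot be cited as established.
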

\noindent The same proof applied to $S$ instead of $S^\perp$ shows $\dim S_{\mathfrak{p}_i} + \dim S_{\mathfrak{f}_i} \geq 0$.

The next result says that for a fixed qudit ordering the choice of stabilizer or logical generators is irrelevant.
\begin{restatable}[]{lemma}{lemmafourteen}
	\label{lem:lemmafourteen}
	Any two minimal trellises for the same stabilizer code are isomorphic.
\end{restatable}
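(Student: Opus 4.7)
The plan is to show that any minimal trellis for $S$ admits a canonical labeling of its vertices and edges that depends only on $S^\perp$ and the qudit ordering, not on any choice of generators. Since the canonical labeling is then shared by any two minimal trellises, it will directly yield the desired isomorphism.

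First I would define, for an arbitrary minimal trellis $T$ for $S$, a map
$
	\phi_i : V_i \to S^\perp_{|_{\mathfrak{p}_i}}/S^\perp_{\mathfrak{p}_i},
	\quad
	\phi_i(v) = \pi_{\mathfrak{p}_i}(P) \cdot S^\perp_{\mathfrak{p}_i},
$
where $P \in S^\perp$ is any Pauli string whose path (guaranteed by Proposition \ref{prop:iso}) passes through $v$ at depth $i$. Well-definedness is the heart of the argument and uses a splicing trick: if two paths $\gamma_P$ and $\gamma_{P'}$ both meet at $v \in V_i$, concatenate the past of $\gamma_P$ with the future of $\gamma_{P'}$. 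By Corollary \ref{cor:bruteforce} every consecutive vertex pair on this walk is genuinely joined by an edge in $T$, so the spliced walk is itself a length-$n$ path, which by Proposition \ref{prop:iso} comes from some $P'' \in S^\perp$ with $\pi_{\mathfrak{p}_i}(P'') = \pi_{\mathfrak{p}_i}(P)$ and $\pi_{\mathfrak{f}_i}(P'') = \pi_{\mathfrak{f}_i}(P')$. The element $P' (P'')^{-1}$ then has trivial future part and so lies in $S^\perp_{\mathfrak{p}_i}$, forcing $\pi_{\mathfrak{p}_i}(P)$ and $\pi_{\mathfrak{p}_i}(P')$ to represent the same coset.

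Next, $\phi_i$ is surjective because every coset is achieved by at least one path, and the codomain has size $p^{\dim S^\perp - \dim S^\perp_{\mathfrak{f}_i} - \dim S^\perp_{\mathfrak{p}_i}}$ (using that the restriction of $\pi_{\mathfrak{p}_i}$ to $S^\perp$ has kernel $S^\perp_{\mathfrak{f}_i}$), which by Theorem \ref{thm:quantstate} equals $|V_i|$, so $\phi_i$ is a bijection. For two minimal trellises $T_1$ and $T_2$ the composition $\Phi_i = (\phi_i^{(2)})^{-1} \circ \phi_i^{(1)}$ is then a canonical vertex bijection, and Corollary \ref{cor:bruteforce} together with the properness condition promotes it to an edge bijection preserving sources, termini, and labels. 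The hardest step is the well-definedness of $\phi_i$, because the splicing argument requires both Corollary \ref{cor:bruteforce} (to guarantee the spliced walk is really a path in $T$ rather than an abstract concatenation of isolated edges) and Proposition \ref{prop:iso} (to translate that path back into an element of $S^\perp$). The remaining steps are essentially bookkeeping once this canonical coset labeling is in place, and the identification is manifestly generator-independent because $S^\perp_{\mathfrak{p}_i}$, $S^\perp_{\mathfrak{f}_i}$, and $S^\perp_{|_{\mathfrak{p}_i}}$ all depend only on $S^\perp$ and the past/future partition of the qudits.
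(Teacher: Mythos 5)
Your proof is correct and takes essentially the same route as the paper's: both identify each vertex of a minimal trellis with the coset $\pi_{\mathfrak{p}_i}(P)\,S^\perp_{\mathfrak{p}_i}$ determined by the pasts of the paths through it and then transport edges along this vertex bijection, with your splicing argument simply making explicit the well-definedness that the paper's proof (via Figure \ref{fig:trelliscosets}) treats as clear. One cosmetic remark: Corollary \ref{cor:bruteforce} is not actually needed for the splice, since the concatenated walk uses only edges already present on the two original paths.
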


\subsection{Trellis-Oriented Form}\label{sec:TOF}
The term ``trellis-oriented" was introduced in passing in an appendix by Forney \cite{forney1988coset} and wasn't thoroughly defined until seven years later by Kschischang and Sorokine \cite{kschischang1995trellis}. Here we adapt the latter approach to the quantum setting, as did \cite{ollivier2006trellises}, but we will stick closer to the classical notation than that of \cite{ollivier2006trellises}. The term ``minimal-span" is preferred by some authors in the literature.

Let $P \in \mathcal{P}(n, p)$ be an arbitrary Pauli string with components $P_i$ at index $i$.
\begin{definition}\label{def:TOFstuff}
	\leavevmode
	\vspace*{-\bigskipamount}\vspace*{-\medskipamount}
	\begin{enumerate}
		\item The \emph{left index} of $P$, $L(P)$, is the smallest index such that $P_i \neq I$ and the \emph{right index} of $P$, $R(P)$, is the largest index such that $P_i \neq I$.
		\item The \emph{span} of $P$ is the index set $\{L(P), L(P) + 1, \hdots, R(P)\}$ and the \emph{span length} of $P$ is the cardinality of the span of $P$, $R(P) - L(P) + 1$. The span of the identity string is defined to be $\{ \, \}$ and the corresponding span length to be $0$.
		\item Say $P$ is \emph{active} at depth $i$ if $i - 1$ and $i$ are in the span of $P$, i.e., $L(P) \leq i - 1$ and $R(P) \geq i$.
	\end{enumerate}
\end{definition}
\begin{definition}[Left-Right Property]
	A set of Pauli strings is said to have the \emph{left-right property} if no two elements with the same left or right index have more than one power of $X$ or $Z$ at these index locations.
\end{definition}

The next definition is intended to be applied to a set of generators and not the entirety of its span.
\begin{definition}[Trellis-Oriented Form]
	A set of Pauli strings is said to be in \emph{trellis-oriented form} (TOF) if the sum of the span lengths of the elements is as small as possible.
\end{definition}
\begin{restatable}[]{proposition}{propeighteen}
	\label{prop:LRTOF}
	A set of Pauli strings is in TOF if and only if it has the left-right property.
\end{restatable}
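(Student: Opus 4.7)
The plan is to prove both directions separately, using a group-theoretic exchange argument for one and a counting argument against an intrinsic lower bound for the other.

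For the forward direction (TOF $\Rightarrow$ LR), I will argue the contrapositive. Suppose two generators $P, P'$ share the same left index $\ell$ with their Pauli operators at $\ell$ failing the left-right condition, meaning they are linearly dependent in the $\F_p^2$ symplectic sense at that position. Then there exist $c, c' \in \F_p$ with $c' \neq 0$ such that $P^c (P')^{c'}$ acts as the identity at position $\ell$. Replacing $P'$ by $P^c (P')^{c'}$ preserves the generated subgroup (since $c' \neq 0$) but strictly increases $L(P')$, hence strictly decreases the span length of $P'$ and the total sum of span lengths, contradicting TOF. The shared right-index case is symmetric in the roles of $L$ and $R$.

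For the converse (LR $\Rightarrow$ TOF), I will derive an intrinsic lower bound on the sum of span lengths and show any LR generating set attains it. Writing
\[
    \sum_P \bigl(R(P) - L(P) + 1\bigr) = |\text{generators}| + \sum_i a_i,
\]
where $a_i$ counts generators active at depth $i$, I observe that every non-active generator at depth $i$ has its span entirely in $\mathfrak{p}_{i-1}$ or entirely in $\mathfrak{f}_{i-1}$ and therefore lies in $S^\perp_{\mathfrak{p}_{i-1}} \cdot S^\perp_{\mathfrak{f}_{i-1}}$. Since the full set generates $S^\perp$, the active generators must surject onto the quotient $S^\perp/(S^\perp_{\mathfrak{p}_{i-1}} \cdot S^\perp_{\mathfrak{f}_{i-1}})$, yielding $a_i \geq \dim S^\perp - \dim S^\perp_{\mathfrak{p}_{i-1}} - \dim S^\perp_{\mathfrak{f}_{i-1}}$ for any generating set and hence an intrinsic lower bound on the total.

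To show the bound is attained by an LR generating set, the key lemma is that if $\{G_j\}$ has LR, then $S^\perp_{\mathfrak{p}_{i-1}}$ is generated exactly by $\{G_j : R(G_j) \leq i-1\}$, with the dual statement for $S^\perp_{\mathfrak{f}_{i-1}}$. The proof is a maximal-rightmost-index analysis: given $\prod_j G_j^{c_j} \in S^\perp_{\mathfrak{p}_{i-1}}$ involving some $G_j$ with $R(G_j) \geq i$ and $c_j \neq 0$, let $r$ be the largest such $R(G_j)$. At position $r$ only those $G_j$ with $R(G_j) = r$ contribute, and LR guarantees these contributions cannot cancel, producing a nonidentity Pauli at $r \geq i$ and contradicting past containment. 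Combined with the symmetric future statement, each active generator at depth $i$ maps to a linearly independent element of the quotient, so $a_i$ equals the lower bound at every depth and the total span length is minimized.

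The main obstacle is the rightmost-index lemma in the converse: the LR property at extremal indices is precisely what prevents cancellation at the rightmost or leftmost active position when combining generators, and any weakening of LR breaks the non-cancellation claim. The qudit setting also requires interpreting ``more than one power of $X$ or $Z$'' as linear dependence in the $\F_p^2$ symplectic structure at a position, so that both the $\F_p$-linear exchange argument and the non-cancellation claim are well-defined.
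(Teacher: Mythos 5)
Your forward direction contains a step that fails as written. After arranging that $P^c(P')^{c'}$ is the identity at the shared left index $\ell$ (both $c$ and $c'$ are in fact forced to be nonzero, since each string is non-identity at $\ell$ and $p$ is prime), you replace $P'$ by this product and conclude that its span length strictly decreases because its left index strictly increases. That inference ignores the right end of the span: the product has right index $\max\{R(P), R(P')\}$, so when $R(P) > R(P')$ the replacement can have a strictly \emph{larger} span than $P'$ and the total span length goes up rather than down. A qubit counterexample: $P = XIIIIX$, $P' = XZIIII$ violate the left-right property at index $1$; replacing $P'$ by $PP' = IZIIIX$ changes the span lengths from $6 + 2 = 8$ to $6 + 5 = 11$, so no contradiction with TOF is reached. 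The repair is exactly the choice the paper makes: order the offending pair so that the string you replace is the one with the weakly larger right index (for a shared left index), or the weakly smaller left index (for a shared right index); then the cancelling product keeps that string's other endpoint while strictly shrinking the shared one, the generated group is unchanged, and the total span length strictly drops. As stated, your exchange move can be span-increasing, so the contradiction does not follow.

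Your converse direction is correct and takes a genuinely different route from the paper. The paper argues directly that under the left-right property no left index can be raised nor right index lowered, whereas you prove the intrinsic bound $a_i \geq \dim S^\perp - \dim S^\perp_{\mathfrak{p}_{i-1}} - \dim S^\perp_{\mathfrak{f}_{i-1}}$ valid for \emph{any} generating set and then show a left-right set attains it via the rightmost-index non-cancellation lemma; this is more quantitative and yields Proposition \ref{prop:dimargs} as a byproduct, which the paper only argues informally in the surrounding text. Two points should be made explicit for this to be airtight: the comparison is among generating sets of the same cardinality (harmless, since each additional non-identity generator only adds span length, and an independent set of size $\dim S^\perp$ is needed so that $\dim S^\perp_{\mathfrak{p}_{i-1}}$ really equals the number of generators with $R(G_j) \leq i-1$), and ``more than one power of $X$ or $Z$'' must be read, as you do, as $\F_p$-linear dependence of the symplectic entries at the shared extremal index, which is what guarantees that a nontrivial combination of the at most two generators ending at position $r$ cannot be the identity there.
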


\begin{example}\label{ex:stabs}
	The generators for the rotated surface and color codes naturally have the left-right property. The canonical form of the stabilizer generators of the $[[5, 1, 3]]$ code as cyclic shifts of $XZZXI$ may be put into TOF as follows:
	\begin{table}[h!]
		\centering
		\begin{tabular}{cccc}
			\begin{tabular}{@{}c@{}}
				$S_1 =$ \stretchit{XZZXI}\\
				$S_2 =$ \stretchit{IXZZX}\\
				$S_3 =$ \stretchit{XIXZZ}\\
				$S_4 =$ \stretchit{ZXIXZ}
			\end{tabular}
			& $\mapsto$
			\begin{tabular}{@{}cc@{}}
				$S_1$ &$=$ \stretchit{XZZXI}\\
				$S_4$ &$=$ \stretchit{ZXIXZ}\\
				$S_2$ &$=$ \stretchit{IXZZX}\\
				$S_1 S_3$ &$=$ \stretchit{IZYYZ}
			\end{tabular}
			& $\mapsto$
			\begin{tabular}{@{}cc@{}}
				$S_1$ &$=$ \stretchit{XZZXI}\\
				$S_1 S_3 S_4$ &$=$ \stretchit{ZYYZI}\\
				$S_2$ &$=$ \stretchit{IXZZX}\\
				$S_1 S_3$ &$=$ \stretchit{IZYYZ}
			\end{tabular}.
		\end{tabular}
	\end{table}\\
	The left and right indices for each row are, in order, $\{ [1, 4], [1, 4], [2, 5], [2, 5]\}$, where we have used the notation $[L(P), R(P)]$. It is customary, and sometimes included in the definition, to rearrange the strings by increasing left index when viewing them in a matrix-like format, similar to a row-reduced matrix, but this is not strictly necessary.\\
\end{example}

The procedure used in the previous example is quite simple: after reducing the left side from top to bottom, the right side is reduced from bottom to top and the upper stabilizer is always replaced instead of the lower. This prevents the multiplication of the stabilizers from changing the left indices. The key to automating this is considering Pauli strings as length-$n$ vectors over $\F_{p^2}$ and using appropriate operations in this field. This should be compared with the complex algorithm given in \cite{xiao2013construction} for obtaining the TOF using the symplectic representation over $\F_p$.

The quantum TOF is more complicated than in classical theory. To gain some intuition for this consider a set of Pauli strings in TOF. A quantum index, either left or right, is of the form $X^a Z^b$, where $a, b \in \F_p$. Given two indices $X^aZ^b$ and $X^{a^\prime}Z^{b^\prime}$ with all exponents nonzero, one can only eliminate the other to produce an identity if and only if they are scalar multiples of each other in $\F_p$, i.e., $(a^\prime, b^\prime) \in \< (a, b) \> \in \mathbb{Z}_p \times \mathbb{Z}_p$. In the more general case, by repeated application of the string to itself, $a$ and $b$ can individually be made to take any value in $\F_p$, i.e., Pauli strings are linear over $\F_p$. In particular, $-a^\prime \in \< a \> = Z_p$ since $p$ is prime, and likewise for $b$. So, one of $a^\prime$ or $b^\prime$ may always be eliminated. An index $X^a$ cannot eliminate an index $Z^b$ and vice versa. To summarize, the possibilities for left or right indices for qubit codes are $\{ \{X, I\}, \{Z, I\}, \{Y, I\}, \{X, Z\}, \{Y, X\}, \{Y, Z\} \}$. This is in stark contrast to the classical case where a matrix can always be put into reduced-row echelon form.

Let $P$ and $P^\prime$ be two Pauli strings with the left-right property and span $[L, R]$, then $PP^\prime$ also has span $[L, R]$. This is often referred to as the \textit{predictable span property}. If $P$ and $P^\prime$ have span $[L, R_1]$ and $[L, R_2]$, respectively, with $R_1 < R_2$, then $PP^\prime$ has span $[L, R_2]$, and likewise for $[L_1, R]$ and $[L_2, R]$. These two cases are not possible classically when there is only one symbol alphabet, necessitating different proof strategies.\\

\begin{example}
	The first two stabilizers for the TOF of the following $[[8, 2, 3]]$ code \cite{codetables} have the same left index but different right indices:
	\begin{table}[h!]
		\centering
		\begin{tabular}{@{}cc@{}}
			\stretchit{YZZZXIII}\\
			\stretchit{ZIIXIYII}\\
			\stretchit{IXZZYXYI}\\
			\stretchit{IZYIYYXI}\\
			\stretchit{IIXYXZYZ}\\
			\stretchit{IIZIIIYX}
		\end{tabular}.
	\end{table}
\end{example}

If we start with a generating set for $\mathcal{A}$, the resulting TOF still generates $\mathcal{A}$. Each generator has order $p$ and generates $p - 1$ nonidentity strings with the same span. Suppose two generators $P$ and $P^\prime$ have spans $[L_1, R_1]$ and $[L_2, R_2]$, respectively, and both $R_1, R_2 \leq i \in \{1, \hdots, n\}$. Then all products of the $p$ elements generated by $P$ and the $p$ elements generated by $P^\prime$ also have right index less than or equal to $i$. If there are no other generators with right index less than or equal to $i$, then these are all of the elements of $\mathcal{A}_{\mathfrak{p}_i}$. Extending this argument shows the following.
\begin{proposition}\label{prop:dimargs}
	Let $\mathcal{A}$ be a set of Pauli strings with generators $\mathcal{A}_i$ in TOF. Then
	\begin{align}
		\dim \mathcal{A}_{\mathfrak{p}_i} &= | \left\{\mathcal{A}_i \mid R(\mathcal{A}_i) \leq i \right\}|\\
		\dim \mathcal{A}_{\mathfrak{f}_i} &= |\left\{\mathcal{A}_i \mid L(\mathcal{A}_i) \geq i + 1 \right\}|.
	\end{align}
\end{proposition}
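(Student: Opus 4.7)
The plan is to prove the first identity; the second follows by an identical argument with past/future and left/right indices interchanged. Write the TOF generators as $\mathcal{A}_1,\dots,\mathcal{A}_m$ and partition them into $G_{\le}=\{\mathcal{A}_j : R(\mathcal{A}_j)\le i\}$ and $G_{>}=\{\mathcal{A}_j : R(\mathcal{A}_j)> i\}$. Because the TOF set is a minimal generating set, every $P\in\mathcal{A}$ has a unique expansion $P=\prod_j \mathcal{A}_j^{c_j}$ with $c_j\in\F_p$. The inequality $\dim\mathcal{A}_{\mathfrak{p}_i}\ge|G_{\le}|$ is immediate: each $\mathcal{A}_j\in G_{\le}$ is the identity outside $\{1,\dots,i\}$, so lies in $\mathcal{A}_{\mathfrak{p}_i}$, and the $|G_{\le}|$ such generators are linearly independent.

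For the reverse inequality, take any $P\in\mathcal{A}_{\mathfrak{p}_i}$ and suppose for contradiction that its expansion has some $c_j\ne 0$ with $\mathcal{A}_j\in G_{>}$. Set $r^{\star}=\max\{R(\mathcal{A}_j):c_j\ne 0\}>i$ and inspect the position-$r^{\star}$ component of $P$. Generators with $R(\mathcal{A}_j)<r^{\star}$ contribute the identity there (position outside their span), while those with $R(\mathcal{A}_j)>r^{\star}$ already have $c_j=0$ by maximality; so only the set $J^{\star}=\{j:c_j\ne 0,\, R(\mathcal{A}_j)=r^{\star}\}$ contributes. Since $P$ is the identity at position $r^{\star}>i$, the product $\prod_{j\in J^{\star}}(\mathcal{A}_j)_{r^{\star}}^{c_j}$ must equal the single-qudit identity. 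Now pick $j^{\star}\in J^{\star}$ attaining the \emph{minimum} left index and form $\mathcal{A}'=\prod_{j\in J^{\star}}\mathcal{A}_j^{c_j}$. The cancellation makes $\mathcal{A}'$ identity at position $r^{\star}$, so $R(\mathcal{A}')<r^{\star}$; simultaneously $L(\mathcal{A}')\ge L(\mathcal{A}_{j^{\star}})$, since every generator in the product is the identity at positions $<L(\mathcal{A}_{j^{\star}})$. Replacing $\mathcal{A}_{j^{\star}}$ by $\mathcal{A}'$ in the generating set (legitimate because $c_{j^{\star}}\in\F_p^{\times}$ is invertible, so $\mathcal{A}_{j^{\star}}$ is recoverable from the new set) strictly decreases the span length of that generator, and hence of the generating set as a whole, contradicting TOF minimality.

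The main obstacle is this final cancellation step. Quantum TOF, unlike its classical counterpart, permits multiple generators to share a right index (cf.\ Lemma \ref{lem:deltadim}), so the bookkeeping at position $r^{\star}$ must invoke the left-right property carefully; selecting $j^{\star}$ as the minimum-left-index element of $J^{\star}$ is essential to guarantee that $\mathcal{A}'$ has strictly smaller span length rather than merely shuffling support between generators. The second identity then follows by the symmetric argument on the reversed qudit ordering.
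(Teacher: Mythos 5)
Your proof is correct, and it reaches the hard inclusion by a somewhat different route than the paper. The paper's justification is the paragraph immediately preceding Proposition \ref{prop:dimargs}: all products of generators with right index at most $i$ clearly lie in $\mathcal{A}_{\mathfrak{p}_i}$, and the converse is left to the already-established left-right/predictable-span machinery — at a shared right index the TOF components are linearly independent over $\F_p$ (at most two generators can share that index, cf.\ Lemma \ref{lem:deltadim}), so in any product whose maximal participating right index is $r^\star > i$ the components at $r^\star$ cannot cancel, hence such a product has right index $r^\star$ and cannot lie in $\mathcal{A}_{\mathfrak{p}_i}$. You instead work directly from the definition of TOF as a generating set of minimal total span length: you form $\mathcal{A}'$ over $J^\star$, note it is the identity at and beyond $r^\star$, and exchange it for the minimum-left-index generator $\mathcal{A}_{j^\star}$, strictly decreasing the total span — which is precisely the style of replacement argument the paper uses to prove Proposition \ref{prop:LRTOF}. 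Both are valid; yours is more self-contained and makes the hard inclusion fully explicit (your choice of $j^\star$ is indeed what forces a strict decrease rather than a reshuffling of support), while the paper's route is shorter because it reuses Proposition \ref{prop:LRTOF}: invoking the left-right property, the product over $J^\star$ can never be the identity at $r^\star$ in the first place, so your entire exchange step could be replaced by that single observation. The only implicit assumptions worth flagging are that the TOF generators are independent (so expansions exist, the replacement set cannot collapse to fewer generators, and the count equals the dimension) and that phases are ignored; both match the paper's conventions.
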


Unfortunately, the previous argument also shows that the TOF is not unique since a generator may be replaced by any of its $p - 1$ multiples. (In Example \ref{ex:stabs}, replace $S_1 S_3$ with, for example, $S_1 S_2 S_3$). However, in light in of Theorem \ref{thm:quantstate} and Corollary \ref{cor:inoutdegs}, the entire structure of the trellis may be read-off directly from the generators of $S^\perp$ when put into TOF. Proposition \ref{prop:dimargs} also provides trivial proofs of results such as Lemma \ref{lem:deltadim} by merely counting the number of possible new generators obtained when shifting indices in TOF.\\

\begin{example}\label{ex:stabs2}
	The stabilizers of the $[[5, 1, 3]]$ code appear in Example \ref{ex:stabs}. The generators of $S^\perp$ have a different TOF,
	\begin{table}[h!]
		\centering
		\begin{tabular}{@{}cc@{}}
			\stretchit{XYXII}\\
			\stretchit{ZXZII}\\
			\stretchit{IXYXI}\\
			\stretchit{IZXZI}\\
			\stretchit{IIXYX}\\
			\stretchit{IIZXZ}
		\end{tabular}.
	\end{table}
	
	\noindent Applying Proposition \ref{prop:dimargs} to both sets we get Table \ref{tab:ex4tab}.
	\begin{table}[h!]
		\centering
		\begin{tabular}{|c|c|c|c|c|c|c|c|c|}
			\hline
			& \multicolumn{8}{c|}{$[[5, 1, 3]]$}\\
			\hline
			$i$ & $\dim S_{\mathfrak{p}_i}$ & $\dim S_{\mathfrak{f}_i}$ & $\dim S^\perp_{\mathfrak{p}_i}$ & $\dim S^\perp_{\mathfrak{f}_i}$ & $|V_i|$ & $|E_i|$ & in & out\\
			\hline
			0 & 0 & 4 & 0 & 6 & 1 & $-$ & $-$ & 4\\
			1 & 0 & 2 & 0 & 4 & 4 & 4 & 1 & 4\\
			2 & 0 & 0 & 0 & 2 & 16 & 16 & 1 & 4\\
			3 & 0 & 0 & 2 & 0 & 16 & 64 & 4 & 1\\
			4 & 2 & 0 & 4 & 0 & 4 & 16 & 4 & 1\\
			5 & 4 & 0 & 6 & 0 & 1 & 4 & 4 & $-$\\ 
			\hline
		\end{tabular}
		\caption{}
		\label{tab:ex4tab}
	\end{table}
	
	\noindent These may be compared with the trellis diagram for the code given in \cite{xiao2013construction}.
\end{example}

\noindent {\bf Remark:} As discussed in the previous subsection, the structure of the trellis is invariant with respect to a change of stabilizers. One may compute a TOF of a set of generators, apply Proposition \ref{prop:dimargs} to determine $|V_i|$, $|E_i|$, etc, and then construct the trellis with respect to the original set of stabilizers which, for example, may have been more beneficial experimentally.

\subsection{The Viterbi Algorithm}\label{sec:ViterbiProps}
It is worth clarifying exactly which decoding problem the trellis solves. Let $E \in \mathcal{P}(n, p)$ be a Pauli error acting on a quantum stabilizer code $\mathcal{Q}$ with stabilizer group $S = \< S_1, \hdots, S_{n - k} \>$. The syndrome, $s = (s_1, \hdots, s_{n - k})$, is the ordered tuple of commutation relations of $E$ with the stabilizer elements, $s_j = \< S_j, E \>$, where the inner product is chosen appropriately. Logical and stabilizer operators commute with all elements of the stabilizer and therefore have trivial syndrome. It follows that an error can be decomposed as
\begin{equation}\label{edecomp}
	E = L \tilde{S} T,
\end{equation}
where $L \in C_{\mathcal{P}(n, p)} (S) \backslash S$ is a logical operator, $\tilde{S} \in S$, and $T \in \mathcal{P}(n, p) \backslash C_{\mathcal{P}(n, p)} (S)$ is a pure error. One can always write down a pure error $T_0$ given a syndrome $s$, so
\begin{equation}
	\Pr (E \mid s) = \Pr \left( L, \tilde{S} \mid s, T_0 \right) = \frac{\Pr\left(L, \tilde{S}, s, T_0\right)}{\di \sum_{L, \tilde{S}} \Pr\left(L, \tilde{S}, s, T_0\right)}.
\end{equation}
Since elements of the stabilizer do not affect the code state by definition, only $L$ needs to be determined. The quantum (degenerate) maximum \textit{a posteriori} (MAP) decoding problem is therefore
\begin{equation}\label{degenprob}
	\hat{L}_s = \argmax_{L \in C_{\mathcal{P}(n, p)} (S) / S} \Pr (L \mid s),
\end{equation}
where $\Pr (L \mid s) = \sum_{\tilde{S} \in S} \Pr \left( L, \tilde{S} \mid s\right)$. It is generally not known how to develop decoding efficient algorithms that take the degeneracy of the logical operators into account; instead, \eqref{degenprob} is typically relaxed to
\begin{equation}\label{nodegenprob}
	\hat{T}_s = \argmax_{T \in \mathcal{P}(n, p)} \Pr( T \mid s),
\end{equation}
which is equivalent to the classical decoding problem.

After measuring a syndrome, the trellis encodes all combinations \eqref{edecomp} for a given $T$ and solves
\begin{align}
	\hat{E}_s &= \argmax_{E \in \mathcal{P}(n, p)} \Pr( E = L \tilde{S} T \mid s) \nonumber\\
		&= \argmin_{E \in \mathcal{P}(n, p)} \wt (E = L \tilde{S} T \mid s). \label{trellisprob}
\end{align}
Note that this is not Hamming weight nor minimum-weight perfect matching; it is the optimal minimum weight decoder but not a maximal likelihood decoder. While this incorporates all of the information of the logical operators, it is technically not the degenerate decoder \eqref{degenprob}. One however can construct a ``degenerate" trellis to solve \eqref{degenprob}. Pelchat and Poulin give the procedure for this in their investigation of trellis decoding for quantum convolutional codes \cite{pelchat2013degenerate}.

\begin{restatable}[]{proposition}{proptwenty}
	The Viterbi algorithm correctly computes the minimum weight correction \eqref{trellisprob} for uncorrelated, i.i.d.~noise models.
\end{restatable}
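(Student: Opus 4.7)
The plan is to decompose the argument into two parts: first, showing that the minimum-weight path in the (shifted) trellis corresponds exactly to the optimization problem \eqref{trellisprob}, and second, verifying that the Viterbi recursion correctly identifies this path via the principle of optimality.

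For the first part, I would invoke Proposition \ref{prop:iso}, which furnishes a one-to-one correspondence between Pauli strings in $S^\perp$ (equivalently, after the affine shift by $P_s$, elements of $P_s S^\perp$) and length-$n$ paths in the trellis. The crucial input from the noise model is independence: under i.i.d.\ uncorrelated noise, $\Pr(P) = \prod_{i=1}^{n} \Pr(P_i)$, so the edge-weight assignment $\wt(e) = -\log \Pr(P_i)$ makes the weight of a path equal to $\sum_{i} -\log \Pr(P_i) = -\log \Pr(P)$. This additivity is exactly what turns $\argmax_E \Pr(E \mid s)$ into $\argmin_E \wt(E \mid s)$ and aligns it with the minimum-weight path through the trellis. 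Without the i.i.d.\ hypothesis this step fails because the per-edge contributions would no longer decompose across sections.

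For the second part, I would argue by induction on depth $i$ that, after processing section $i$, the Viterbi algorithm stores at every vertex $v \in V_i$ the exact minimum over all paths from $V_0$ to $v$ of the cumulative edge weight, together with a back-pointer realizing this minimum. The base case $i = 0$ is trivial since $|V_0| = 1$ and the weight is initialized to zero. For the inductive step, fix $v \in V_i$ and let $\gamma^\star$ be any minimum-weight path from $V_0$ to $v$; write $\gamma^\star = \gamma' \cdot e$ where $e = (u, P_i, v) \in E_i$ is the final edge. Additivity gives $\wt(\gamma^\star) = \wt(\gamma') + \wt(e)$, and if $\gamma'$ were not a minimum-weight path from $V_0$ to $u$, replacing it with one of strictly smaller weight would produce a path to $v$ strictly lighter than $\gamma^\star$, a contradiction. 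Hence the true minimum-weight path to $v$ is obtained by minimizing over incoming edges $e = (u, P_i, v) \in E_i$ the quantity (minimum-weight path to $u$) plus $\wt(e)$, which is precisely the quantity Viterbi computes using the inductive hypothesis. At depth $n$, because $|V_n| = 1$, the stored weight is the global minimum over $P_s S^\perp$, and following the back-pointers recovers the corresponding Pauli correction.

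The main obstacle is, I expect, purely expository rather than mathematical: one must keep straight the correspondence between paths in the shifted trellis and cosets $P_s S^\perp$, and be careful that the reduction of \eqref{trellisprob} to a path-minimization problem genuinely requires independence of the noise across qudits. Beyond this reduction, the argument uses no structural properties of the trellis beyond Proposition \ref{prop:iso} and the fact that each edge in section $i$ contributes weight depending only on its label $P_i$; the dynamic-programming portion is the classical Bellman/Viterbi inductive argument and does not interact with the special quantum features (parallel edges, the two-generator bound of Lemma \ref{lem:deltadim}, etc.) developed earlier.
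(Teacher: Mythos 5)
Your proposal is correct and follows essentially the same route as the paper: the paper's proof is exactly your second part, an induction on depth showing that the stored value $\mu(v)$ equals the minimum cumulative weight over all paths from $V_0$ to $v$, relying on additivity of the edge weights. Your first part (the explicit reduction of \eqref{trellisprob} to path minimization via Proposition \ref{prop:iso} and the i.i.d.\ factorization $\Pr(P) = \prod_i \Pr(P_i)$) is left implicit in the paper's surrounding discussion, so you have only made explicit what the paper assumes.
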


As mentioned in Section \ref{sec:construction}, the full decoding procedure consists of finding a pure error, shifting the trellis, and then applying the Viterbi algorithm. The shifting procedure requires at most $n$ symplectic inner products followed by $|V| + |E|$ shifts. Fortunately, this operation is embarrassingly parallel, and the runtime may be significantly reduced depending on the implementation and available resources. For a potentially large computational speedup, one may skip shifting the vertices, as they play no role in the Viterbi algorithm. It is clear from Algorithm \ref{alg:Viterbi} that every edge incurs a single addition followed by incoming-degree comparisons for every vertex. For large codes, the vertices of $V_i$ may be processed independently in parallel. The Viterbi algorithm may also be run simultaneously in a ``forward pass" from $V_0$ to some $V_i$ and a ``backwards pass" from $V_n$ to $V_{i + 1}$. The choice of the optimal splitting depends highly on the left-right balance of the trellis. Further optimizations such as those based on the coset structure of the code (vertical symmetry) are generally code specific.
\begin{restatable}[Theorem 2.10 \cite{mceliece1996bcjr}]{theorem}{thmtwentyone}
	\label{prop:Viterbicost}
	The Viterbi algorithm requires $\Theta (| E |)$ arithmetic operations.
\end{restatable}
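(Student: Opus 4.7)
The plan is to establish $\Theta(|E|)$ by proving the upper bound $O(|E|)$ via a per-section accounting of the work done by the forward pass, and the matching lower bound $\Omega(|E|)$ by observing that every edge must be inspected at least once.

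For the upper bound, I would walk through the trellis one section at a time, for $i = 1, \ldots, n$. At each vertex $v \in V_i$, the Viterbi algorithm iterates over the incoming edges of $v$: for every edge $e$ with $t(e) = v$, it performs one addition of $\wt(e)$ to the accumulated minimum-weight value stored at $s(e) \in V_{i-1}$, followed by one comparison against the running minimum being maintained at $v$ (plus a constant-cost predecessor-pointer update). Hence the work at $v$ is a constant times $\deg_{\text{in}}(v)$, and summing over $v \in V_i$ gives a total of a constant times $\sum_{v \in V_i} \deg_{\text{in}}(v) = |E_i|$, where this identity holds because every edge of $E_i$ has exactly one terminus in $V_i$ by condition (ii) of the trellis definition. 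Summing over $i = 1, \ldots, n$ yields $O(|E|)$ operations. The final trace-back from $V_n$ back to $V_0$ costs $O(n)$, which is absorbed since $n \leq |V| \leq |E|$ (the latter bound coming from Corollary \ref{cor:inoutdegs}, as every non-sink vertex has at least one outgoing edge).

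For the lower bound, I would argue that the forward pass touches each edge $e \in E$ at least once in order to incorporate its contribution into the minimization at $t(e)$; skipping any edge would risk missing the minimum-weight path that uses it. This immediately gives $\Omega(|E|)$ arithmetic operations. Combining both bounds yields the claimed $\Theta(|E|)$.

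The argument is essentially bookkeeping and has no deep obstacle; the one thing to be careful about is verifying that the informal per-vertex counting above matches the operations performed by the explicit pseudocode in Algorithm \ref{alg:Viterbi}, and that the trellis axiom ensuring $\sum_{v \in V_i} \deg_{\text{in}}(v) = |E_i|$ is correctly invoked. Note that uniformity of incoming degrees (Corollary \ref{cor:inoutdegs}) is not actually required for the bound, although it makes the load-balancing across parallel threads clean.
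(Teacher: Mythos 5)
Your proposal is correct and follows essentially the same route as the paper: count one addition per incoming edge and the comparisons taken at each vertex, sum over sections to get the upper bound, and note that the algorithm necessarily performs one addition per edge for the matching lower bound. The only difference is cosmetic --- the paper carries out the count exactly (total operations $= 2|E| - |V| + 1$, then uses connectedness to get $|E| \leq 2|E| - |V| + 1 \leq 2|E|$), whereas you argue asymptotically, which suffices for the stated $\Theta(|E|)$ bound.
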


Theorem \ref{thm:quantstate} shows the syndrome trellis minimizes $|V_i|$ and $|E_i|$, but in light of the previous proof it would be useful to show that it also minimizes the quantity $|E| - |V| + 1$. That is, the trellis minimizes the number of algorithmic operations. Classically, Muder first showed that the analogous trellis minimizes $|V|$ \cite{muder1988minimal}, McEliece showed it minimizes $|E|$ \cite{mceliece1994viterbi, mceliece1996bcjr}, and Vardy and Kschischang showed it minimizes $|E| - |V| + 1$ \cite{vardy1996proof}. The proof of this follows for the quantum case with minor modifications (Proposition \ref{prop:ArunInduct}) and we include it here in Theorem \ref{thm:EV1min} for completeness. We first establish a geometric interpretation of this quantity.

The quantity $|E|$ may be further resolved using the edge configurations in Table \ref{tab:edgeconfigs}. While these are drawn for the special case $p = 2$, here we will use them to represent the corresponding diagrams for higher $p$. One may check that substituting $p = 2$ produces the correct coefficients in the formula below. We have,
\begin{align}\label{edgeresolutionA}
	|E| &= 1 \cdot \# \left( \Nzz \right) + p \cdot \# \left( \Noz \right) + p^2 \cdot \# \left( \Ntz \right) + p \cdot \# \left( \Nzo \right) + p \cdot \# \left( \Nooa \right) \nonumber\\
		&+ p^2 \cdot \# \left( \Noob \right) + p^2 \cdot \# \left( \Ntoa \right) + p^3 \cdot \# \left( \Ntob \right) + p^2 \cdot \# \left( \Nzt \right) + p^2 \cdot \# \left( \Nota \right)\\
		&+ p^3 \cdot \# \left( \Notb \right) + p^2 \cdot \# \left( \Ntta \right) + p^3 \cdot \# \left( \Nttb \right) + p^4 \cdot \# \left( \Nttc \right). \nonumber
\end{align}
Similarly, ignoring $V_0$ and counting the right hand sides,
\begin{align}\label{vertexresolutionA}
	|V| - 1 &= 1 \cdot \# \left( \Nzz \right) + 1 \cdot \# \left( \Noz \right) + 1 \cdot \# \left( \Ntz \right) + p \cdot \# \left( \Nzo \right) + 1 \cdot \# \left( \Nooa \right) \nonumber\\
		&+ p \cdot \# \left( \Noob \right) + p \cdot \# \left( \Ntoa \right) + p^2 \cdot \# \left( \Ntob \right) + p^2 \cdot \# \left( \Nzt \right) + 1 \cdot \# \left( \Nota \right)\\
		&+ p \cdot \# \left( \Notb \right) + 1 \cdot \# \left( \Ntta \right) + p \cdot \# \left( \Nttb \right) + p^2 \cdot \# \left( \Nttc \right). \nonumber
\end{align}

\begin{definition}
	A vertex $v$ with $\deg_{\text{out}}(v) > 1$ is called an \emph{expansion} and a \emph{merger} if  $\deg_{\text{in}}(v) > 1$.
\end{definition}

The number of mergers is graphically given by
\begin{align}\label{mergers}
	\mathcal{M} &= (p - 1) \cdot \left( \Noz \right) + (p^2 - 1) \cdot \left( \Ntz \right) + (p - 1) \cdot \left( \Nooa \right) + p (p - 1) \cdot \left( \Noob \right) \nonumber\\
	&+ p (p - 1) \cdot \left( \Ntoa \right) + p^2 (p - 1) \cdot \left( \Ntob \right) + (p^2 - 1) \cdot \left( \Nota \right) + p (p^2 - 1) \cdot \left( \Notb \right)\\
	&+ (p^2 - 1) \cdot \left( \Ntta \right) +  p (p^2 - 1) \cdot \left( \Nttb \right) +  p^2 (p^2 - 1) \cdot \left( \Nttc \right). \nonumber
\end{align}
Subtracting \eqref{vertexresolutionA} from \eqref{edgeresolutionA} shows that this is exactly $|E| - |V| + 1$ \cite{kiely1996trellis}, as expected from the proof of Proposition \ref{prop:Viterbicost}. To match the classical proof in \cite{vardy1996proof} we switch to from mergers to expansions $\mathcal{E}$. Repeating the arguments of the proof with $\deg_{\text{in}}(v)$ in place of $\deg_{\text{out}}(v)$ shows $\mathcal{E} = |E| - |V| + 1$ as well. This is intuitively clear from the fact that the trellis both starts and ends with a single vertex. Writing a similar expression to \eqref{mergers} for expansions and setting $\mathcal{E} = \mathcal{M}$, the diagrams which share a left-right symmetry cancel leaving only an equality of total asymmetric edge configurations. 

Returning to our goal, it is clear from Section \ref{sec:theory} that the number of paths from $V_0$ to any vertex in $V_i$ is $p^{\dim S^\perp_{\mathfrak{p}_i}}$ and the total number of paths from $V_0$ to all of $V_i$ is  $p^{\dim S^\perp_{|_{\mathfrak{p}_i}}}$. Furthermore, for any definition of a trellis, the number of paths is at most $p^{\dim S^\perp_{\mathfrak{p}_i}}$ and the total number of paths is at least $p^{\dim S^\perp_{|_{\mathfrak{p}_i}}}$.
\begin{restatable}[Lemma 5 \cite{vardy1996proof}]{lemma}{lemmatwentythree}
	Let $\mathscr{P}_i(v)$ denote the number of paths from $V_0$ to $v \in V_i$ and $\di \mathscr{P}_i = \sum_{v \in V_i} \mathscr{P}_i(v)$ be the number of paths from $V_0$ to all vertices in $V_i$. Then
	\begin{equation*}
		\mathscr{P}_i = 1 + \sum_{j = 0}^{i - 1} \sum_{v \in V_j} \mathscr{P}_j(v) \left( \deg_\mathrm{out}(v) - 1 \right).
	\end{equation*}
\end{restatable}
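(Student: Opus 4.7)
The plan is to prove the formula by induction on $i$. The base case $i=0$ is immediate: since $|V_0|=1$ by definition and there is a unique (empty) length-$0$ path ending there, $\mathscr{P}_0(\overline{0})=1$, so $\mathscr{P}_0=1$; the sum on the right-hand side is empty, also giving $1$.

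For the inductive step, the key is a one-step recursion for $\mathscr{P}_{i+1}$. Because the trellis is a proper, layered digraph with edges in $E_{i+1}$ going only from $V_i$ to $V_{i+1}$, every path from $V_0$ to a vertex $v\in V_{i+1}$ decomposes uniquely as a path from $V_0$ to some $u\in V_i$ followed by a single edge $e\in E_{i+1}$ with $s(e)=u$ and $t(e)=v$. Hence
\begin{equation*}
\mathscr{P}_{i+1}(v)=\sum_{\substack{e\in E_{i+1}\\ t(e)=v}}\mathscr{P}_i(s(e)),
\end{equation*}
and summing over $v\in V_{i+1}$ and regrouping by source $u\in V_i$ gives
\begin{equation*}
\mathscr{P}_{i+1}=\sum_{e\in E_{i+1}}\mathscr{P}_i(s(e))=\sum_{u\in V_i}\mathscr{P}_i(u)\,\deg_{\mathrm{out}}(u).
\end{equation*}

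The final step is algebraic: write $\deg_{\mathrm{out}}(u)=(\deg_{\mathrm{out}}(u)-1)+1$ to obtain
\begin{equation*}
\mathscr{P}_{i+1}=\mathscr{P}_i+\sum_{u\in V_i}\mathscr{P}_i(u)\bigl(\deg_{\mathrm{out}}(u)-1\bigr),
\end{equation*}
and then substitute the inductive hypothesis for $\mathscr{P}_i$. The sum telescopes so that the index of summation extends from $j\le i-1$ to $j\le i$, yielding the claimed identity at depth $i+1$. There is no real obstacle here; the only subtlety is justifying the unique decomposition of paths, which is a direct consequence of the trellis being proper (so that a source–terminus pair plus the edge label, or equivalently the layered partition of $E$, uniquely identifies an edge) and of paths being defined as sequences of composable edges through the $V_i$'s.
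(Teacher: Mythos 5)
Your proof is correct and follows essentially the same route as the paper: induction on the depth, the one-step recursion $\mathscr{P}_{i+1}=\sum_{u\in V_i}\mathscr{P}_i(u)\deg_{\mathrm{out}}(u)$, and the split $\deg_{\mathrm{out}}(u)=(\deg_{\mathrm{out}}(u)-1)+1$ combined with the inductive hypothesis. The only differences are cosmetic (base case at $i=0$ versus $i=1$, and your explicit justification of the path decomposition, which the paper leaves implicit).
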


The next result is used but not proved in \cite{vardy1996proof}. We present the proof for the quantum case; the proof of their original expression follows from this by restricting to $\dim S^\perp_{\mathfrak{p}_i} - \dim S^\perp_{\mathfrak{p}_{i + 1}} \leq 1$. Recall that $\{\dim S^\perp_{\mathfrak{p}_i}\}$ is an increasing sequence whose value changes when a new right index is encountered in the TOF. Let $\{R_i\}$ be the locations of the right indices. The idea behind the following proof is to divide the index set $\{0, \hdots, n\}$ into intervals of constant past dimension and then make arguments about the locations $\{R_i\}$. To be explicit, in this notation $\dim S^\perp_{\mathfrak{p}_{R_i}} < \dim S^\perp_{\mathfrak{p}_{R_{i + 1}}}$ and $\dim S^\perp_{\mathfrak{p}_i} = \dim S^\perp_{\mathfrak{p}_{R_a}}$ if and only if $R_a \leq i < R_{a + 1}$. The proof is purely algebraic.
\begin{restatable}[]{proposition}{proptwentyfour}
	\label{prop:ArunInduct}
	Let unprimed quantities be with respect to the syndrome trellis and primed quantities be with respect to any other trellis for the same code. Denote by
	\begin{equation*}
		\mathcal{E}_j =  \sum_{v \in V_j} \left( \deg_\mathrm{out}(v) - 1 \right)
	\end{equation*}
	the number of expansions at depth $j$ and likewise for the primes, and define
	\begin{equation*}
		\Delta_i \coloneqq \sum_{j = 0}^{i - 1} p^{\dim S^\perp_{\mathfrak{p}_j}} \left( \mathcal{E}^\prime_j - \mathcal{E}_j \right).
	\end{equation*}
	Then for $\di R_{\kappa} < i \leq R_{\kappa +1}$,
	\begin{equation*}
		\sum_{j = 0}^{i - 1} p^{\dim S^\perp_{\mathfrak{p}_{i - 1}}} \left( \mathcal{E}^\prime_j - \mathcal{E}_j \right) = \Delta_i + \sum_{a = 1}^\kappa p_{\kappa, a} (p_a - 1) \Delta_{R_a}
	\end{equation*}
	where $\di p_a = p^{\dim S^\perp_{\mathfrak{p}_{R_a}} - \dim S^\perp_{\mathfrak{p}_{R_{a-1}}}}$ and $\di p_{\kappa,a} = p^{\dim S^\perp_{\mathfrak{p}_{R_{\kappa}}} - \dim S^\perp_{\mathfrak{p}_{R_a}}}$.
\end{restatable}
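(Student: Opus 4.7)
The plan is to prove this identity by purely algebraic manipulation; no property specific to minimality or to the primed trellis enters, so I would treat $f(j) := \mathcal{E}'_j - \mathcal{E}_j$ as an abstract sequence and $d_j := \dim S^\perp_{\mathfrak{p}_j}$ as an abstract weakly increasing step function. The approach is to partition $\{0, 1, \ldots, i-1\}$ into blocks on which $d_j$ is constant, express both sides of the identity as linear combinations of the corresponding block sums of $f$, and then reduce the right-hand side correction by a telescoping argument.

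Setting $R_0 = 0$ so that $d_{R_0} = 0$, the function $d_j$ is constant on each interval $\{R_a, R_a+1, \ldots, R_{a+1}-1\}$ with value $d_{R_a}$, and the hypothesis $R_\kappa < i \leq R_{\kappa+1}$ gives $R_\kappa \leq i-1 < R_{\kappa+1}$, hence $d_{i-1} = d_{R_\kappa}$. Defining block sums $F_a := \sum_{j=R_a}^{R_{a+1}-1} f(j)$ for $a < \kappa$ and $F_\kappa := \sum_{j=R_\kappa}^{i-1} f(j)$, the left-hand side is $p^{d_{R_\kappa}} \sum_{a=0}^\kappa F_a$ while $\Delta_i = \sum_{a=0}^{\kappa-1} p^{d_{R_a}} F_a + p^{d_{R_\kappa}} F_\kappa$. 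The two $F_\kappa$ terms carry equal weight and cancel, so the left-hand side minus $\Delta_i$ equals $\sum_{a=0}^{\kappa-1} (p^{d_{R_\kappa}} - p^{d_{R_a}}) F_a$.

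The key remaining step is to show that the correction sum on the right-hand side reproduces this same expression. Rewriting $p_{\kappa,a}(p_a - 1) = p^{d_{R_\kappa} - d_{R_{a-1}}} - p^{d_{R_\kappa} - d_{R_a}}$ exhibits telescoping in $a$; substituting $\Delta_{R_a} = \sum_{b=0}^{a-1} p^{d_{R_b}} F_b$ and swapping the order of summation so that $b$ ranges from $0$ to $\kappa - 1$ and $a$ from $b + 1$ to $\kappa$, the inner telescoping sum collapses to $p^{d_{R_\kappa} - d_{R_b}} - 1$. This produces $\sum_{b=0}^{\kappa-1} (p^{d_{R_\kappa}} - p^{d_{R_b}}) F_b$, matching the computation from the previous paragraph.

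The main obstacle is not conceptual but notational: I would need to fix a consistent convention (in particular $R_0 = 0$ and $d_{R_0} = 0$) so that the bottom boundary of the telescoping sum vanishes, and verify carefully that the truncated final block $[R_\kappa, i-1]$ enters both the left-hand side and $\Delta_i$ with the same weight $p^{d_{R_\kappa}}$ so that it drops out before the telescoping step. Beyond this bookkeeping the argument is elementary, and in particular no property of the minimal trellis beyond the very definition of $\Delta_i$ is invoked.
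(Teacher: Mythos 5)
Your proof is correct, and it takes a genuinely different route from the paper: the paper proves the identity by induction on $\kappa$, splitting $\sum_{j=0}^{i-1}$ at $R_\kappa$ and pushing the inductive hypothesis through a fairly long chain of regroupings, whereas you prove it in one pass by partitioning $\{0,\hdots,i-1\}$ into the blocks on which $\dim S^\perp_{\mathfrak{p}_j}$ is constant, cancelling the truncated final block (which enters both the left-hand side and $\Delta_i$ with the same weight $p^{\dim S^\perp_{\mathfrak{p}_{R_\kappa}}}$), and then collapsing the correction term via the factorization $p_{\kappa,a}(p_a-1)=p^{d_{R_\kappa}-d_{R_{a-1}}}-p^{d_{R_\kappa}-d_{R_a}}$, the substitution $\Delta_{R_a}=\sum_{b=0}^{a-1}p^{d_{R_b}}F_b$, and a swap of the double sum that telescopes to $p^{d_{R_\kappa}-d_{R_b}}-1$; both sides then equal $\sum_{b=0}^{\kappa-1}\bigl(p^{d_{R_\kappa}}-p^{d_{R_b}}\bigr)F_b$. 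The bookkeeping you flag is exactly the right thing to pin down and is consistent with the paper's conventions ($\dim S^\perp_{\mathfrak{p}_{R_0}}=0$ and $\dim S^\perp_{\mathfrak{p}_j}=\dim S^\perp_{\mathfrak{p}_{R_a}}$ for $R_a\le j<R_{a+1}$, with $R_\kappa\le i-1<R_{\kappa+1}$ giving the weight $p^{d_{R_\kappa}}$ on the left). Your version buys a shorter, non-inductive argument that makes transparent that the proposition is a purely formal statement about an arbitrary sequence weighted by a step function — no trellis property is used, which the paper only remarks in passing — and it yields the closed form of the correction term explicitly; the paper's induction is more mechanical but verifies the stated formula exactly as written, which is convenient since Theorem~\ref{thm:EV1min} applies it verbatim with each $\Delta_{R_a}\ge 0$. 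Either way the downstream use is unaffected.
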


\begin{restatable}[Theorem 6 \cite{vardy1996proof}]{theorem}{thmtwentyfive}
	\label{thm:EV1min}
	Let $|E|$ and $|V|$ be with respect to the syndrome trellis and $|E^\prime|$ and $|V^\prime|$ be with respect to any other definition of a trellis for the same code. Then $|E^\prime| - |V^\prime| + 1 \geq |E| - |V| + 1$.
\end{restatable}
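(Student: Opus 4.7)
The plan is to reduce the desired inequality to the non-negativity of the quantities $\Delta_i$ from Proposition~\ref{prop:ArunInduct}. I would begin by observing that $|E| - |V| + 1 = \sum_{j=0}^{n-1} \mathcal{E}_j$: since $|E_{j+1}| = \sum_{v \in V_j} \deg_{\mathrm{out}}(v) = |V_j| + \mathcal{E}_j$ and $|V_n| = 1$, summing over $j$ telescopes to give this identity, and the same identity holds for the primed trellis. The theorem therefore reduces to showing $\sum_{j=0}^{n-1}(\mathcal{E}'_j - \mathcal{E}_j) \geq 0$.

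The central step will be establishing $\Delta_i \geq 0$ for every $i$. Applying the path-counting formula (Lemma 5 of \cite{vardy1996proof}, restated just above) to the syndrome trellis, every $v \in V_j$ satisfies $\mathscr{P}_j(v) = p^{\dim S^\perp_{\mathfrak{p}_j}}$, so the formula collapses to $\mathscr{P}_i = 1 + \sum_{j=0}^{i-1} p^{\dim S^\perp_{\mathfrak{p}_j}} \mathcal{E}_j$. For any other trellis the universal upper bound $\mathscr{P}'_j(v) \leq p^{\dim S^\perp_{\mathfrak{p}_j}}$ (noted in the paragraph preceding that lemma) upgrades the same formula into the inequality $\mathscr{P}'_i \leq 1 + \sum_{j=0}^{i-1} p^{\dim S^\perp_{\mathfrak{p}_j}} \mathcal{E}'_j$. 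Combining this with the matching lower bound $\mathscr{P}'_i \geq p^{\dim S^\perp_{|_{\mathfrak{p}_i}}} = \mathscr{P}_i$ immediately yields $\sum_{j=0}^{i-1} p^{\dim S^\perp_{\mathfrak{p}_j}}(\mathcal{E}'_j - \mathcal{E}_j) \geq 0$, i.e., $\Delta_i \geq 0$.

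With $\Delta_{R_a} \geq 0$ in hand at every right-index location, I would invoke Proposition~\ref{prop:ArunInduct} at $i = n$. Writing $R_m = n$ for the final right-index (so $\kappa = m - 1$ in the proposition's notation), its identity becomes
\[
p^{\dim S^\perp_{\mathfrak{p}_{n-1}}} \sum_{j=0}^{n-1} (\mathcal{E}'_j - \mathcal{E}_j) \;=\; \Delta_n \;+\; \sum_{a=1}^{m-1} p_{m-1,a}\,(p_a - 1)\,\Delta_{R_a}.
\]
Every $\Delta_{R_a}$ on the right-hand side is non-negative by the previous paragraph, and each coefficient $p_{m-1,a}(p_a - 1)$ is positive. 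Dividing through by the positive constant $p^{\dim S^\perp_{\mathfrak{p}_{n-1}}}$ delivers $\sum_{j=0}^{n-1}(\mathcal{E}'_j - \mathcal{E}_j) \geq 0$, which is the theorem.

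The only real obstacle here is bookkeeping: the path-counting argument naturally produces the \emph{weighted} sum $\Delta_i$, whereas $|E| - |V| + 1$ measures the \emph{unweighted} sum of expansions; reconciling the two is exactly what the algebraic identity of Proposition~\ref{prop:ArunInduct} is designed to do, and once that identity has been established the present theorem follows by a single substitution.
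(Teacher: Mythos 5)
Your proposal is correct and follows essentially the same route as the paper's proof: you establish $\Delta_i \geq 0$ by combining the path-counting lemma with the bounds $\mathscr{P}'_j(v) \leq p^{\dim S^\perp_{\mathfrak{p}_j}}$ and $\mathscr{P}'_i \geq \mathscr{P}_i$, and then apply Proposition \ref{prop:ArunInduct} at $i = n$ and divide by $p^{\dim S^\perp_{\mathfrak{p}_{n-1}}}$, exactly as done there. Your explicit telescoping derivation of $|E| - |V| + 1 = \sum_{j=0}^{n-1}\mathcal{E}_j$ (which the paper instead draws from its earlier merger/expansion discussion) and your careful identification of $\kappa$ via the last right index are fine refinements, not a different argument.
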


Since the syndrome trellis is a minimal representation of the code that is invariant with respect to the generators, following Theorems \ref{prop:Viterbicost} and \ref{thm:EV1min}, it is often argued in the classical literature that the quantity $|E|$ should be regarded as a fundamental description of how hard it is to decode a given code, rivaling in importance with $n$, $k$, and $d$. Adopting this philosophy shows, for example, that the color codes are fundamentally more difficult to decode than the rotated surface code of the same distance without invoking projections or hypergraph matching. We provide quantitative data on this in Example \ref{ex:edgecounts} in the next section after introducing the CSS splitting of trellises.

Proposition \ref{prop:iso} shows that the trellis essentially functions as a compact lookup table for $S^\perp$. The key to the efficiency of the Viterbi algorithm is its ability to make decisions about all of the elements of this set without computing every path individually. For a given $v \in V_i$, edge sharing in the trellis enables the algorithm to simultaneously check every element in $S^\perp_{|_{\mathfrak{p}_i}}$ ending at $v$. When it discards all outgoing edges for $v$, a total of $|S^\perp_{\mathfrak{f}_i}|$ elements are eliminated from $S^\perp$ for further consideration (see Figure \ref{fig:trelliscosets}).

Assuming $|E|$ scales at least cubically in $n$, the cost of the Viterbi algorithm also dominates that of finding the pure error $T$ given the syndrome. Finding a pure error with the given syndrome is not difficult. The trellis decoder may use any valid $T$ since the minimum weight solution is an element of the set $LST$ enumerated by the paths of the trellis and will hence be found by the Viterbi algorithm. Given a potentially high-weight pure error, decoding may therefore be interpreted as a refinement process to the minimum-weight solution. Here, we use pseudoinverses for the syndromes to find $T$.

\section{CSS Codes}\label{sec:CSS}
Calderbank-Shor-Steane (CSS) codes have the property that the generators of $S$ ($S^\perp$) split into those with only powers of $X$ or only powers of $Z$. It is common in quantum error correction to decode each set of generators independently then combine the results into a single correction. This has the advantage of reducing decoding complexity and enabling parallelization at the expense of ignoring potential $X$-$Z$ correlations. The same technique can, of course, be used with trellises. As a first example, consider the trellis diagrams in Appendix \ref{sec:knowncodes}. Figure \ref{fig:Surf17full} shows the trellis diagram of the distance three rotated surface code and Figures \ref{fig:Surf17Xmin} and \ref{fig:Surf17Zmin} show the effect on the trellis of considering the $X$- and $Z$-stabilizers separately. Figure \ref{fig:cc666d3full} shows the same for the distance three color code. The reduction in trellis and decoding complexity is immediately apparent even for these small examples.

An immediate consequence of decoding the $X$- and $Z$-stabilizers separately is that moving from left-to-right in the TOF, the past (future) can only increase (decrease) by a maximum of one stabilizer generator: $\dim S^\perp_{\mathfrak{f}_i} - \dim S^\perp_{\mathfrak{f}_{i + 1}} \leq 1$ and $\dim S^\perp_{\mathfrak{p}_i} - \dim S^\perp_{\mathfrak{p}_{i - 1}} \leq 1$.  Thus, in contrast to Corollary \ref{cor:inoutdegs}, none of the more complicated edge configurations in Table \ref{tab:edgeconfigs} with a two are allowed and the edge configurations with the highest contribution of edges do not occur, i.e.,
\begin{equation}\label{vertexresolutionB}
	|V_{X/Z}| - 1 = 1 \cdot \# \left( \Nzz \right) + 1 \cdot \# \left( \Noz \right) + p \cdot \# \left( \Nzo \right) + 1 \cdot \# \left( \Nooa \right) + p \cdot \# \left( \Noob \right),
\end{equation}
and
\begin{equation}\label{edgeresolutionB}
	|E_{X/Z}| = 1 \cdot \# \left( \Nzz \right) + p \cdot \# \left( \Noz \right) + p \cdot \# \left( \Nzo \right) + p \cdot \# \left( \Nooa \right) + p^2 \cdot \# \left( \Noob \right).
\end{equation}
Thus, stabilizer CSS codes which have unit dimension change in the TOF of the $X$- and $Z$- stabilizers at the same index $i$ will have higher dimension changes and hence more edges and will therefore be more difficult to decode than those which do not. It follows that self-dual codes are more difficult to decode than non-self-dual codes of the same parameters. Note that applying $\mathcal{E} = \mathcal{M}$ to CSS trellises gives \cite{kiely1996trellis}
\begin{equation*}
	\# \left( \Nzo \right) = \# \left( \Noz \right).
\end{equation*}

Let $S$ be a stabilizer code given by the CSS construction with $C_1 = [n, k_1, d_1]_p$, $C_2 = [n, k_2, d_2]_p$, and $C_2^T \subseteq C_1$. Denote by $G_i$ and $H_i$ the generator and parity-check matrices for $C_i$, respectively. The stabilizers of the quantum code are given, in symplectic form $(X \mid Z)$, by
\begin{equation*}
	\begin{pmatrix}
		H_1 & 0\\
		0 & H_2
	\end{pmatrix}.
\end{equation*}
The set $S^\perp$ is generated by the corresponding normalizer matrix \cite{li2008standard, wilde2009logical, grassl2011variations}
\begin{equation*}
	\begin{pmatrix}
		0 & G_1\\
		G_2 & 0
	\end{pmatrix}.
\end{equation*}
The trellis for the $X$- ($Z$-) stabilizers only in a CSS code is therefore determined by the classical trellis whose paths are in one-to-one correspondence with the codewords of $G_1$ ($G_2$). Edge labels for CSS trellises should be restricted to $I$ and $Z$ or $I$ and $X$ with weights $-\log \Pr(I) -\log \Pr(X)$ and $-\log \Pr(Y) - \log \Pr(Z)$ for $X$- and similarly for the $Z$-stabilizers.

It is well-known in classical trellis theory that the dimension of $V_i$ is equal to the dimension of the corresponding $V_i$ for the dual code. Since one has $p^{n - k}$ possible syndromes and the other $p^k$, we immediately get the following.
\begin{corollary}[Wolf Bound For CSS Codes \cite{wolf1978efficient}]
	Let $C_i$ be as above. Then for the $X$-stabilizer trellis, $|V_{X, i}| \leq p^{\min\{k_1, n - k_1\}}$, and for the $Z$-stabilizer trellis, $|V_{Z, i}| \leq p^{\min\{k_2, n - k_2\}}$.
\end{corollary}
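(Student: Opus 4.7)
The plan is to reduce this corollary to the Wolf bound for classical linear codes, leveraging the observation in the preceding paragraph that the $X$-only trellis of a CSS code coincides with the classical trellis whose paths are the codewords of $C_1$ (and symmetrically for $Z$ and $C_2$). With that identification in hand, the two halves of the minimum can be obtained from the two complementary descriptions of a trellis vertex: one via subcodes of $C_1$ and one via syndromes with respect to $H_1$.

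First I would apply Theorem \ref{thm:quantstate} directly to the $X$-only trellis, where $C_1$ (viewed as a group of $Z$-type Pauli strings of dimension $k_1$) plays the role of $S^\perp$. Equation \eqref{Vi} then yields
\[
|V_{X,i}| \;=\; p^{\dim C_1 - \dim C_{1,\mathfrak{p}_i} - \dim C_{1,\mathfrak{f}_i}} \;\leq\; p^{k_1},
\]
since both subtracted dimensions are nonnegative. For the complementary bound $|V_{X,i}| \leq p^{n - k_1}$, I would argue from the vertex-label construction: each vertex of $V_{X,i}$ is uniquely labeled by a partial syndrome with respect to the $n - k_1$ rows of $H_1$, and such syndromes take values in $\F_p^{n - k_1}$, so there can be at most $p^{n - k_1}$ distinct labels at any depth. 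Taking the minimum of the two gives $|V_{X,i}| \leq p^{\min\{k_1, n - k_1\}}$. The $Z$-stabilizer case follows from the identical argument applied to $C_2$ and $H_2$ in place of $C_1$ and $H_1$, since the $Z$-only trellis is the classical trellis of $C_2$.

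I do not anticipate a substantive obstacle: each inequality is essentially one line once the classical reduction is in place, and the CSS-induced decoupling of the $X$- and $Z$-trellises established earlier in this section is precisely what makes the classical Wolf argument directly applicable, with no symplectic-level bookkeeping required. The only subtle point is ensuring that the syndrome-label uniqueness from the defining trellis construction (item \emph{(iii)} of the definition) genuinely applies to the reduced $X$-only (respectively $Z$-only) trellis, but this is immediate since that trellis is obtained by restricting the edge alphabet and the stabilizer set to the corresponding CSS sector.
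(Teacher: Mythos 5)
Your proposal is correct, and it follows the same overall structure as the paper: identify the $X$- ($Z$-) trellis with the classical trellis of $C_1$ ($C_2$), bound $|V_{X,i}|$ in two complementary ways, and take the minimum. The one place you diverge is in how the $p^{k_1}$ half is obtained. The paper gets it by citing the classical fact that a code and its dual have identical state-space dimensions, so the $p^{k_1}$ bound is again a syndrome count, but for the dual code; you instead read it off the state-space formula. Note that the paper explicitly warns that Theorem \ref{thm:quantstate} does \emph{not} hold verbatim for CSS splittings; what saves your argument is that your substitution ``$C_1$ plays the role of $S^\perp$'' is precisely the corrected split-trellis formula $|V_{X,i}| = p^{\dim S^\perp_Z - \dim S^\perp_{Z,\mathfrak{p}_i} - \dim S^\perp_{Z,\mathfrak{f}_i}}$ that the paper records immediately after the corollary (here $S^\perp_Z \cong C_1$ has dimension $k_1$), and the first-isomorphism-theorem proof of the state space theorem carries over unchanged to the restricted edge alphabet. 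So your inequality $|V_{X,i}| \leq p^{k_1}$ is sound, and the $p^{n-k_1}$ half via uniqueness of the $(n-k_1)$-tuple syndrome labels matches the paper's syndrome-counting argument. The trade-off: your route is slightly more self-contained, resting only on results proved (or explicitly stated) in the paper, whereas the paper's route appeals to the well-known classical primal/dual state-complexity duality, which it does not re-prove; it would be cleaner to phrase your first step as invoking the displayed split-trellis formula rather than Theorem \ref{thm:quantstate} itself, to avoid appearing to use the theorem in a regime where the paper says it fails.
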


\begin{example}
	The distance three color code is equivalent to the well-known $[[7, 1, 3]]$ Steane code. This is a CSS code constructed with the $[7, 4, 3]$ binary Hamming code and its dual. Hence, $|V_{X/Z, i}| \leq 2^{\min\{3, 4\}} = 8$, which agrees with Figure \ref{fig:cc666d3full}.
\end{example}

It is perhaps not surprising that the full trellis for a CSS code turns out to be a product of the two trellises of its component codes. For instance, the active generators of the full code are the active generators of the $X$-code and the active generators of the $Z$-code. Thus,
\begin{equation*}
	|V_i| = p^{\dim S_{X, \mathfrak{a}} + \dim S_{Z, \mathfrak{a}}} = p^{\dim S_{X, \mathfrak{a}}} p^{\dim S_{Z, \mathfrak{a}}} = |V_{X, i} | | V_{Z, i} |.
\end{equation*}
\begin{definition}[Trellis Product \cite{kschischang1995trellis}]\label{def:tp}
	The \emph{trellis product} of two trellises with vertices $V$, $V^\prime$ and edges $E$, $E^\prime$, respectively, is denoted by $\times_i$ and has vertices
	\begin{equation*}
		V \times_i V^\prime = \bigcup_{i = 0}^n V_i \times V_i^\prime
	\end{equation*}
	and edges
	\begin{align*}
		E \times_i E^\prime &= \bigcup_{i = 0}^n E_i \times E_i^\prime,\\
		&= \{ ((v_i, v^\prime_i), P P^\prime,  (v_{i + 1}, v^\prime_{i + 1})) \mid (v_i, P, v_{i + 1}) \in E_i, (v^\prime_i, P^\prime, v^\prime_{i + 1}) \in E^\prime_i \}.
	\end{align*}
\end{definition}

\noindent {\bf Remark:} This is sometimes called the \emph{Shannon product} \cite{sidorenko1996minimal} for historical reasons after Claude Shannon who first described the product of two channels operating at the same time and is distinct from the numerous other standard graph products in the mathematical literature, including those also denoted by $\times$. It was pointed out in \cite{kschischang1995trellis} that the trellis product may or may not give a minimal trellis. For example, the trellis product of Figure \ref{fig:trellisex} with itself gives an improper trellis with $\deg_\text{out}(v_0) = 16$ over a symbol alphabet of size four. In all cases considered in this work, the trellis products will be minimal.\\

\begin{example}
	Using shorthand $(\Delta \dim S^\perp_{\mathfrak{p}_i}, \Delta \dim S^\perp_{\mathfrak{f}_i}) = (m, n)$ for edge configurations in Table \ref{tab:edgeconfigs}, it may be checked that $(m, n) \times_i (m^\prime, n^\prime) = (m + m^\prime, n + n^\prime)$. Such configurations occur in the full trellis as graph products of lower dimensional configurations in the CSS trellises. The vertices of Figure \ref{fig:Surf17full} are ordered by the trellis product of Figure \ref{fig:Surf17Xmin} by Figure \ref{fig:Surf17Zmin} and the vertices of Figure \ref{fig:cc666d3full} (b) by the trellis product of (a) with itself.
\end{example}

\begin{lemma}\label{lem:CSStp}
	Let $S$ be a CSS code with $X$-stabilizers, $S_X$, and $Z$-stabilizers, $S_Z$. Let $V_X$, $E_X$, $\deg_{\text{in}, X}$, and $\deg_{\text{out}, X}$ denote the appropriate quantities for the trellis of $S_X$, and likewise for $S_Z$. Then the trellis for $S$ is given by the trellis product of the trellises for $S_X$ and $S_Z$. Furthermore,
	\begin{enumerate}
		\item $|V_i| = |V_{X, i} | | V_{Z, i} |$,
		\item $|E_i| = |E_{X, i} | | E_{Z, i} |$,
		\item $\deg_{\text{in}, i} = |\deg_{\text{in}, X, i} | | \deg_{\text{in}, Z, i} |$,
		\item $\deg_{\text{out}, i} = |\deg_{\text{out}, X, i} | | \deg_{\text{out}, Z, i} |$.
	\end{enumerate}
\end{lemma}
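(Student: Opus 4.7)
The plan is to exhibit an explicit isomorphism between the trellis for $S$ and the trellis product of the trellises for $S_X$ and $S_Z$ (in the sense of Definition \ref{def:tp}), from which items \textit{(i)}--\textit{(iv)} follow as immediate consequences.

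First I would exploit the CSS structure: in symplectic form, the generators of $S^\perp$ split into purely $X$-type and purely $Z$-type rows as displayed by the normalizer matrix in Section \ref{sec:CSS}, so $S^\perp = S^\perp_X \times S^\perp_Z$ with $S^\perp_X$ the $X$-only subgroup and $S^\perp_Z$ the $Z$-only subgroup. Every $P \in S^\perp$ hence decomposes uniquely as $P = P_X P_Z$. Moreover, because $X$-type generators of $S$ commute with all $X$-only Paulis and $Z$-type generators commute with all $Z$-only Paulis, the syndrome map splits: writing $\sigma_i^X$ and $\sigma_i^Z$ for the restrictions of $\sigma_i$ to the $X$-type and $Z$-type generator blocks, we have $\sigma_i(P) = (\sigma_i^X(P_Z), \sigma_i^Z(P_X))$, where $\sigma_i^X$ is precisely the syndrome function used to build the trellis for $S_X$ and $\sigma_i^Z$ that for $S_Z$.

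Next, using Proposition \ref{prop:iso}, I would define a map $\phi$ on vertices by $\sigma_i(P) \mapsto (\sigma_i^X(P_Z), \sigma_i^Z(P_X))$ and on edges by sending $(\sigma_{i-1}(P), X^a Z^b, \sigma_i(P))$ to the pair $(\sigma_{i-1}^X(P_Z), Z^b, \sigma_i^X(P_Z)) \in E_{X,i}$ and $(\sigma_{i-1}^Z(P_X), X^a, \sigma_i^Z(P_X)) \in E_{Z,i}$. Injectivity follows from the uniqueness of the decomposition $P = P_X P_Z$; for surjectivity, any $v_X \in V_{X,i}$ and $v_Z \in V_{Z,i}$ arise from some $P_Z' \in S^\perp_Z$ and $P_X' \in S^\perp_X$, and then $P' = P_X' P_Z' \in S^\perp$ produces the preimage $(v_X, v_Z)$. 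Applied section-by-section to edges, together with the convention $\omega^c = 1$ from Section \ref{sec:construction} (so that $X^a \cdot Z^b = X^a Z^b$ as Pauli strings), this shows $\phi$ matches Definition \ref{def:tp} exactly.

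Once the isomorphism is in hand, items \textit{(i)} and \textit{(ii)} are exactly the cardinality statements for the bijections $\phi$ restricted to $V_i$ and $E_i$. For \textit{(iii)} and \textit{(iv)} I would apply Corollary \ref{cor:inoutdegs}, observing that the direct-product decomposition $S^\perp = S^\perp_X \times S^\perp_Z$ yields $\dim S^\perp_{\mathfrak{p}_i} = \dim S^\perp_{X, \mathfrak{p}_i} + \dim S^\perp_{Z, \mathfrak{p}_i}$ (and similarly for $\mathfrak{f}_i$), since an element of $S^\perp$ has identity outside $\mathfrak{p}_i$ if and only if both its $X$- and $Z$-parts do. Substituting into the exponent in Corollary \ref{cor:inoutdegs} converts the sum into a product, yielding the multiplicative degree formulas. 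The main subtlety---and where the CSS structure is essential---is ensuring the product trellis is proper and minimal: the warning after Definition \ref{def:tp} shows this can fail in general, but here minimality follows automatically because $|V_{X,i}| \, |V_{Z,i}|$ saturates the lower bound of Proposition \ref{prop:minbounds} for $S$ via the dimension decomposition just described.
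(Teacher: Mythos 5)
Your proposal is correct. One point worth keeping straight as you write it up: the cross-over of roles for the degree formulas (the trellis for $S_X$ has its paths, and hence its past/future dimension increments, governed by the $Z$-only subgroup $S^\perp_Z$, and vice versa); since you only ever multiply the two factors, this does not affect the conclusion, but the subscripts in your dimension decomposition should be matched to the right trellis when you invoke Corollary \ref{cor:inoutdegs}.

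As for the comparison: the paper does not actually spell out a proof of this lemma. It asserts that the enumerated items follow from Definition \ref{def:tp}, calls the main claim a restatement of the well-known fact that $X$- and $Z$-errors decode independently for CSS codes, leaves the rigorous argument to the reader, and then observes that the main claim also follows from the more general Theorem \ref{thm:split} (the minimal trellis is the trellis product of the single-generator trellises), whose proof is an induction on generators that computes vertex profiles and verifies that the product saturates the minimality bound, combined with associativity of the trellis product. Your route is genuinely different in flavor: you build an explicit isomorphism between the syndrome trellis of $S$ and the product trellis, starting from the group splitting $S^\perp = S^\perp_X \times S^\perp_Z$ and the block structure of the syndrome map ($X$-type generators see only the $Z$-part and conversely), and then read off \textit{(i)}--\textit{(iv)} as cardinality and degree statements. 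What your approach buys is concreteness: well-definedness, properness (the $X$- and $Z$-label alphabets are disjoint, so no label collisions of the kind warned about after Definition \ref{def:tp}), and minimality all transport through the bijection, and the items \textit{(iii)}--\textit{(iv)} come out of Corollary \ref{cor:inoutdegs} plus the additivity $\dim S^\perp_{\mathfrak{p}_i} = \dim (S^\perp_X)_{\mathfrak{p}_i} + \dim (S^\perp_Z)_{\mathfrak{p}_i}$, which holds because an element is supported in the past iff both its $X$- and $Z$-parts are. What the paper's route buys instead is generality: Theorem \ref{thm:split} covers arbitrary partitions of the generators (the basis of the split-decoding discussion that follows), with the CSS statement as a special case, at the cost of being a counting/minimality argument rather than an explicit identification.
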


The proof of the enumerated parts of this lemma follow from Definition \ref{def:tp} and the larger claim is a restatement of the well-known fact that $X$- and $Z$-errors may be decoded independently for CSS codes. A rigorous proof is fairly trivial and we leave it to the reader. Theses results may be demonstrated with the diagrams in Appendix \ref{sec:knowncodes}. Note that the Space Theorems \ref{thm:quantstate} do not hold for CSS splittings. Following the proof of the theorem, the number of vertices is isomorphic to $|S^\perp / \ker \sigma_i |$ but now the kernel with respect to $S_X$ includes $S^\perp_X$ as well as the past and future elements of $S^\perp_Z$:
\begin{equation}
	| V_{X, i} | = p^{\dim S^\perp - \dim S^\perp_{Z, \mathfrak{p}_i} - \dim S^\perp_{Z, \mathfrak{f}_i} - \dim S^\perp_X} = p^{\dim S^\perp_Z - \dim S^\perp_{Z, \mathfrak{p}_i} - \dim S^\perp_{Z, \mathfrak{f}_i}}.
\end{equation}
Similar equations hold for $|V_{Z, i} |$ and the edges. The same idea holds for counting vertices and edges for an arbitrary subset of generators of $S$. This split trellis idea may be used in more generality. The proof of the following with the clear associativity of the trellis product implies the main part of the lemma above.
\begin{restatable}[]{theorem}{thmIVfour}
	\label{thm:split}
	For $S = \< s_1, \hdots, s_{n - k} \>$, the minimal trellis for $S$ is given by the trellis product of the minimal trellises for each $s_i$.
\end{restatable}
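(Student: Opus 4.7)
The plan is induction on $m = n - k$ combined with the associativity of the trellis product $\times_i$, which is immediate from Definition~\ref{def:tp}. The base case $m = 1$ is trivial: the product of a single trellis is itself. For the inductive step, write $S = S' \cdot \langle s_m \rangle$ with $S' = \langle s_1, \ldots, s_{m-1}\rangle$. By associativity, it suffices to prove the two-factor identity
\[
\mathrm{Trellis}(S) = \mathrm{Trellis}(S') \times_i \mathrm{Trellis}(\langle s_m \rangle),
\]
since the inductive hypothesis further factors $\mathrm{Trellis}(S')$ into trellises for individual generators.

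For the two-factor statement, my plan is to compare vertex sets, edge sets, and incidence structure on both sides, then invoke Lemma~\ref{lem:lemmafourteen} (minimal trellises are unique up to isomorphism) to conclude. On the vertex side, the $m$-tuple partial syndrome at depth $i$ splits naturally as an $(m-1)$-tuple (a vertex of $\mathrm{Trellis}(S')$) together with a single coordinate (a vertex of $\mathrm{Trellis}(\langle s_m\rangle)$), giving a natural map $V_i^S \to V_i^{S'} \times V_i^{\langle s_m\rangle}$. To show bijectivity, I would compare $|V_i|$ on both sides via Theorem~\ref{thm:quantstate}: expressing each factor's past/future dimensions through its TOF via Proposition~\ref{prop:dimargs} and checking that the single generator $s_m$ contributes precisely the missing shifts to $\dim S^\perp$, $\dim S^\perp_{\mathfrak{p}_i}$, and $\dim S^\perp_{\mathfrak{f}_i}$. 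The corresponding cardinality identity for $|E_i|$ follows from the Quantum Branch Space Theorem. Incidence agreement then follows from Corollary~\ref{cor:bruteforce}, since any consistent source-label-terminus triple on one side is realized as an edge on the other.

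The main obstacle is that, as flagged in the remark after Definition~\ref{def:tp}, the trellis product need not be proper or minimal in general. Here, an edge label in the product is the single-qudit Pauli $P_{i,1} P_{i,2}$, and a given Pauli may admit several factor-label decompositions, raising the possibility of spurious duplicate edges. To rule this out I would pin down the in-/out-degrees via Corollary~\ref{cor:inoutdegs} and use the bipartite structure of Theorem~\ref{thm:bipartthm}, which together force the product's edge multiplicities to match those of $\mathrm{Trellis}(S)$ exactly and to leave no duplicates. Once the two-factor case is established, the inductive hypothesis combined with associativity extends the conclusion to arbitrary $m$.
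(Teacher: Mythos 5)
Your overall strategy---induction on the number of generators, associativity of $\times_i$, and matching cardinalities against Theorem~\ref{thm:quantstate} with TOF bookkeeping---is the same route the paper takes: its proof establishes only the counting statement (``minimality, the non-trivial part''), writing the single-generator vertex profiles as $p^{\dim S^\perp - \dim \tilde{S}^\perp_i}$ with $\tilde{S}^\perp_i$ the generators of the \emph{full} $S^\perp$ having trivial partial syndrome with respect to $s_i$, multiplying profiles via Lemma~\ref{lem:CSStp}~(i), and dismissing the edge case with ``similar''; it never needs Lemma~\ref{lem:lemmafourteen}, Corollary~\ref{cor:bruteforce}, Corollary~\ref{cor:inoutdegs}, or Theorem~\ref{thm:bipartthm}. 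So the skeleton of your argument matches, and your extra attention to incidence and uniqueness is in principle a completion of what the paper waves off.

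The concrete problem is your two-factor edge step. First, the dimension bookkeeping must be done inside the full $S^\perp$ (as the paper's $\tilde{S}^\perp_i$ does), not through the standalone duals: $\dim (S')^\perp + \dim \< s_m \>^\perp = (n+k+1) + (2n-1)$ does not relate to $\dim S^\perp = n+k$, and while the vertex counts still multiply for a TOF generating set (active-generator counts add), the branch counts do not. Take $s_1 = XXII$, $s_2 = IIXX$ (already in TOF, disjoint spans): the minimal trellis for $S = \<s_1, s_2\>$ has $|E_1| = 2^{6-0-4} = 4$, whereas each standalone single-generator trellis has $|E_1| = 2^{7-0-5} = 4$, so the Shannon product of Definition~\ref{def:tp} has $16$ edges in section~1, each single-qudit label occurring four times via different factorizations $P_{i,1}P_{i,2}$. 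Hence the Quantum Branch Space Theorem comparison you propose does not close, and no degree or bipartite argument (Corollary~\ref{cor:inoutdegs}, Theorem~\ref{thm:bipartthm}) can ``force the multiplicities to match''---the duplicate edges are genuinely present. The CSS case escapes this only because the factor alphabets $\{I,Z\}$ and $\{I,X\}$ make every label factor uniquely; a generic single-generator factor carries all of $\mathcal{P}(1,p)$, so the raw product is improper and strictly larger than the minimal trellis. To repair your argument you must either pass to the proper quotient of the product (identify edges with equal source, label, and terminus) before comparing with $|E_i|$, or restrict the factor label alphabets so that factorizations are unique; with that fixed, your vertex-count comparison plus Lemma~\ref{lem:lemmafourteen} goes through and would in fact be more complete than the paper's own proof, which verifies only the vertex profile and leaves the edge claim unexamined.
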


This splitting idea may be used to decode in a similar manner to the CSS codes where each grouping of syndrome bits are corrected independently and then combined. The success of this approach is tied to finding an appropriate grouping of generators with respect to which the logical operators split ``nicely" such that the combined corrections do not introduce a logical error not returned by the individual trellises.

If we adopt the idea that $|E|$ represents a fundamental description of how difficult it is to decode, then the two quantities of interest to compare are $|E_X| + |E_Z|$ and $|E|$, where the latter is with respect to the full stabilizer code. In particular, for a self-dual code $2|E_{X/Z}| \leq |E_{X/Z}|^2$. It follows that splitting the decoding of a self-dual CSS code is square-root easier than decoding the full code. A rather trivial bound on the more general case is immediate.
\begin{restatable}[]{proposition}{propIVfive}
	Let $|E|$ be the total number of edges in a trellis diagram for a CSS code whose $X$- and $Z$-codes have trellises with total number of edges $|E_X|$ and $|E_Z|$, respectively. Then, 
	\begin{equation}\label{ineq}
		|E_X| + |E_Z| \leq |E| \leq |E_X| |E_Z| - p^2 n (n - 1)
	\end{equation}
\end{restatable}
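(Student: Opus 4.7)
The plan is to reduce both inequalities to a single sectionwise lower bound, $|E_{X,i}|, |E_{Z,i}| \geq p$, by way of the trellis-product formula in Lemma~\ref{lem:CSStp}. That lemma gives $|E_i| = |E_{X,i}||E_{Z,i}|$ at every section $i$, so
\begin{equation*}
    |E| = \sum_{i=1}^n |E_{X,i}|\,|E_{Z,i}|, \qquad |E_X| = \sum_{i=1}^n |E_{X,i}|, \qquad |E_Z| = \sum_{i=1}^n |E_{Z,i}|.
\end{equation*}

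For the upper bound I would simply expand the product and split off the diagonal,
\begin{equation*}
    |E_X|\,|E_Z| = \sum_{i,j} |E_{X,i}|\,|E_{Z,j}| = |E| + \sum_{i \neq j} |E_{X,i}|\,|E_{Z,j}|,
\end{equation*}
so that it suffices to bound $|E_{X,i}|\,|E_{Z,j}| \geq p^2$ on each of the $n(n-1)$ ordered off-diagonal pairs. For the lower bound I would rewrite
\begin{equation*}
    |E| - |E_X| - |E_Z| = \sum_{i=1}^n \bigl[(|E_{X,i}| - 1)(|E_{Z,i}| - 1) - 1\bigr],
\end{equation*}
which is non-negative as soon as $(|E_{X,i}| - 1)(|E_{Z,i}| - 1) \geq 1$ for every $i$, i.e.\ as soon as $|E_{X,i}|, |E_{Z,i}| \geq 2$. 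Since every $|E_{X,i}|$ and $|E_{Z,i}|$ is a power of the prime $p$, being at least $2$ is equivalent to being at least $p$, so both bounds reduce to the identical sectionwise claim.

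The main obstacle is therefore justifying $|E_{X,i}|, |E_{Z,i}| \geq p$ for all $i$. Here I would invoke Theorem~\ref{thm:quantstate} restricted to the constituent $X$- and $Z$-subcodes — equivalently, read Proposition~\ref{prop:dimargs} off the TOF of each subcode — to write $|E_{X,i}| = p^{d_{X,i}}$, where $d_{X,i}$ counts the $X$-generators whose span straddles depth $i$, and similarly for $Z$. Under the paper's standing hypotheses that the code is non-trivial and each qudit lies in the support of some generator of both $S_X$ and $S_Z$, there is always at least one such straddling generator in each subcode at every depth $i \in \{1,\dots,n\}$, forcing $d_{X,i}, d_{Z,i} \geq 1$ and hence $|E_{X,i}|, |E_{Z,i}| \geq p$. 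Substituting this uniform estimate into the two displays above yields $|E_X||E_Z| - |E| \geq p^2 n(n-1)$ from the off-diagonal sum and $|E| - |E_X| - |E_Z| \geq \sum_i\bigl[(p-1)^2 - 1\bigr] \geq 0$, completing both inequalities in~\eqref{ineq}.
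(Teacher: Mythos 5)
Your proposal is correct and follows essentially the same route as the paper: Lemma \ref{lem:CSStp} gives the sectionwise factorization $|E_i| = |E_{X,i}|\,|E_{Z,i}|$, the upper bound comes from bounding the $n(n-1)$ off-diagonal products by $p^2$ (the paper packages the same diagonal/off-diagonal split inside an expansion of $(|E_X|+|E_Z|)^2$), and the lower bound is your sectionwise identity, which is what the paper dismisses as ``clear.'' The only substantive difference is that you attempt to justify the per-section bound $|E_{X,i}|, |E_{Z,i}| \geq p$, which the paper merely asserts (``recall that $\xi_i, \eta_i \geq 1$''); note, however, that the hypothesis you cite (every qudit in the support of both an $X$- and a $Z$-generator) is not one of the paper's stated standing assumptions, whereas the paper's actual assumption that every element of $S$ has weight at least two already suffices, since $|E_{X,i}| = 1$ would force coordinate $i$ of $C_1$ to vanish identically and hence put the weight-one operator $X_i$ in $S$ (and symmetrically for $Z$).
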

The right-hand side of \eqref{ineq} is tight for the case that there are a minimal number of edges at each depth such that $|E_i| = 4$ for all $i$. This, of course, is rare, usually only occurring around the left and right ends of the trellis, and the more the trellis deviates from this value the worst this upper bound becomes.

It would be more useful to obtain bounds on the number of edges in any of the diagrams with only the values $n$, $k$, and $d$. We leave this for future work and simply note that several classical trellis bounds could potentially be exploited for CSS codes. An easy one in particular worth mentioning is the case of classical maximum distance separable (MDS) codes where it is well-known that the sequence $\{|V_i|\}$ has the following pattern \cite{wolf1978efficient}
\begin{equation*}
	\{1, p, p^2, \hdots, p^{\min\{k, n- k\}}, p^{\min\{k, n - k\}}, \hdots, p^2, p, 1\}.
\end{equation*}
Combining this with the fact that $E_i \geq V_i$, removing the contribution from $V_0$ and summing the geometric series we get
\begin{equation}\label{geopsum}
	|E_{X/Z}| \geq \frac{\di 2p^{\min\{k, n - k\}} - p - 1}{p - 1},
\end{equation}
which is tight for (unrealistic) case that the outgoing degree at every $i$ is one. Replacing $p$ with $p^2$ in \eqref{geopsum} provides the corresponding bound on $|E|$ for a self-dual CSS code. Combining the two loose bounds \eqref{ineq} and \eqref{geopsum} does not appear useful.

\begin{example}\label{ex:edgecounts}
	Returning to the proposal that $|E|$ represents a fundamental parameter for the code, vertex and edge counts for the 4.8.8 and 6.6.6 color codes, rotated surface codes, and their CSS splittings are given in the following tables. The non-CSS XZZX surface codes have the same values as the full rotated surface codes. A standard qubit numbering order was applied to the surface codes but the qubit numbering for the color codes was assigned greedily to minimize the trellis. The difference in values between distances may become more consistent given more consistent numbering schemes. Note that the total vertex and edge counts for the CSS trellises are the sum of the $X$- and $Z$-counts.
	\begin{table}[h!]
		\centering
		\begin{tabular}{|l|c|c|c|c|c|c|c|c|c|c|}
			\hline
			& 3 & 5 & 7 & 9 & 11 & 13 & 15 & 17 & 19 & 21\\
			\hline
			4.8.8 & 122 & 4042 & 83402 & 2126282 & 8673370 & 108195242 & 2074018922 & 36433758250 & 618938698730 & 10475888643050\\
			4.8.8 $X/Z$ & 26 & 230 & 1382 & 8198 & 20058 & 66710 & 327278 & 1498758 & 6610590 & 28827294\\
			6.6.6 & 122  & 2522 & 49802 & 496010 & 4719242 & 54470282 & 604814042 & 5357974490 & 40005091802 & 326581575962\\
			6.6.6 $X/Z$ & 26 & 170 & 974 & 3966 & 14414 & 48878 & 174170 & 617322 & 1728842 & 5102498\\
			RSurf & 74 & 1098 & 10058 & 73034 & 464202 & 2708810 & 14898506 & 78468426 & 399856970 & 1985303882\\ 
			RSurf $X$ & 22 & 118 & 470 & 1590 & 4854 & 13814 & 37366 & 97270 & 245750 & 606198\\
			RSurf $Z$ & 30 & 198 & 854 & 2998 & 9334 & 26870 & 73206 & 191478 & 485366 & 1200118\\
			\hline
		\end{tabular}
		\caption{Vertex counts by distance for common codes.}
		\label{tab:vcounts}
	\end{table}
	\begin{table}[h!]
		\centering
		\begin{tabular}{|l|c|c|c|c|c|c|c|c|c|c|}
			\hline
			& 3 & 5 & 7 & 9 & 11 & 13 & 15 & 17 & 19 & 21\\
			\hline
			4.8.8 & 232 & 7080 & 143272 & 3559336 & 14506536 & 175628968 & 3358695592 & 60870993576 & 1031533604008 & 17421128805544\\
			4.8.8 $X/Z$ & 36 & 316 & 1884 & 11100 & 27084 & 89628 & 439100 & 2020188 & 8901500 & 38785916\\
			6.6.6 & 232 & 4648 & 89512 & 832936 & 7708072 & 89669032 & 1007967784 & 8733820456 & 64652391976 & 532838443048\\
			6.6.6 $X/Z$ & 36 & 236 & 1340 & 5372 & 19388 & 65852 & 234956 & 828556 & 2316044 & 6847020\\
			RSurf & 152 & 2152 & 19688 & 143336 & 913384 & 5341160 & 29425640 & 155189224 & 791674856 & 3934257128\\ 
			RSurf $X$ & 30 & 172 & 700 & 2388 & 7316 & 20852 & 56436 & 146932 & 371188 & 915444\\
			RSurf $Z$ & 44 & 284 & 1228 & 4332 & 13548 & 39148 & 106988 & 280556 & 712684 & 1765356\\
			\hline
		\end{tabular}
		\caption{Edge counts by distance for common codes.}
		\label{tab:edgecounts}
	\end{table}
	\begin{figure}[h]
		\centering
		\begin{subfigure}{0.475\textwidth}
		\begin{tikzpicture}
        			\foreach \i in {0, 1, ..., 4}{
        				\foreach \j in {0, 1, ..., 4}{
        					\pgfmathtruncatemacro{\label}{5*\i + \j+1};
        					\node[inner sep=0pt, outer sep=0pt] (a\i\j) at (\i, \j){};
        				}
        			}
			
        			\foreach \i in {0, 1, ..., 4}{
        				\foreach \j in {0, 1, ..., 3}{
        					\pgfmathtruncatemacro{\y}{\j + 1};
        					\draw (a\i\j) -- (a\i\y);
        					\draw (a\j\i) -- (a\y\i);
        				}
        			}
			
        			\foreach \i in {0, 2, ..., 3}{
 		       		\pgfmathtruncatemacro{\y}{\i + 1};
        				\pgfmathtruncatemacro{\x}{\i + 2};
				\draw (a\i0) to [out=-90, in=-90] (a\y0);
        				\draw (a0\y) to [out=174, in=186] (a0\x);
        				\draw (a\y4) to [out=90, in=90] (a\x4);
        				\draw (a4\i) to [out=6, in=-6] (a4\y);
        			}
			
		        	\foreach \i in {0, 2, ..., 3}{
        				\foreach \j in {0, 2, ..., 3}{
	        				\pgfmathtruncatemacro{\x}{\i + 1};
        					\pgfmathtruncatemacro{\y}{\j + 1};
        					\pgfmathtruncatemacro{\s}{\j + 2};
        					\fill[TangoSkyBlue2] (\i,\j) -- (\i,\y) -- (\x,\y) -- (\x,\j) -- cycle; 
        					\fill[TangoAluminium2] (\i,\y) -- (\i,\s) -- (\x,\s) -- (\x,\y) -- cycle;
      	  				\pgfmathtruncatemacro{\r}{\i + 1};
        					\pgfmathtruncatemacro{\s}{\j + 1};
        					\pgfmathtruncatemacro{\x}{\r + 1};
        					\pgfmathtruncatemacro{\y}{\s + 1};
        					\fill[TangoSkyBlue2] (\r,\s) -- (\r,\y) -- (\x,\y) -- (\x,\s) -- cycle; 
        					\fill[TangoAluminium2] (\r,\j) -- (\r,\s) -- (\x,\s) -- (\x,\j) -- cycle; 
        				}
       		 	}
			
		        	\foreach \i in {0, 2}{
        				\pgfmathtruncatemacro{\y}{\i + 1};
        				\pgfmathtruncatemacro{\x}{\i + 2};
        				\draw[fill,TangoSkyBlue2] (0, \y) to [out=180, in=180] (0, \x);
        				\draw[fill,TangoSkyBlue2] (4, \i) to [out=0, in=-0] (4, \y);
        				\draw[fill,TangoAluminium2] (\y, 4) to [out=90, in=90] (\x, 4);
        				\draw[fill,TangoAluminium2] (\i, 0) to [out=-90, in=-90] (\y, 0);
        			}
			
		        	\foreach \i in {0, 1, ..., 4}{
        				\foreach \j in {0, 1, ..., 4}{
	        				\pgfmathtruncatemacro{\label}{5*(4 - \j) + \i + 1};
        					\node[fill, white, circle, inner sep=0pt, minimum size=13pt] (a\i\j) at (\i, \j){\label};
        					\node[draw, circle, inner sep=0pt, minimum size=13pt] (a\i\j) at (\i, \j){\label};
        				}
        			}
        		\end{tikzpicture}
		\caption{}
		\label{fig:rsurfd5}
		\end{subfigure}
    			\begin{subfigure}{0.475\textwidth}
			       		\centering
		\includegraphics[scale=0.23]{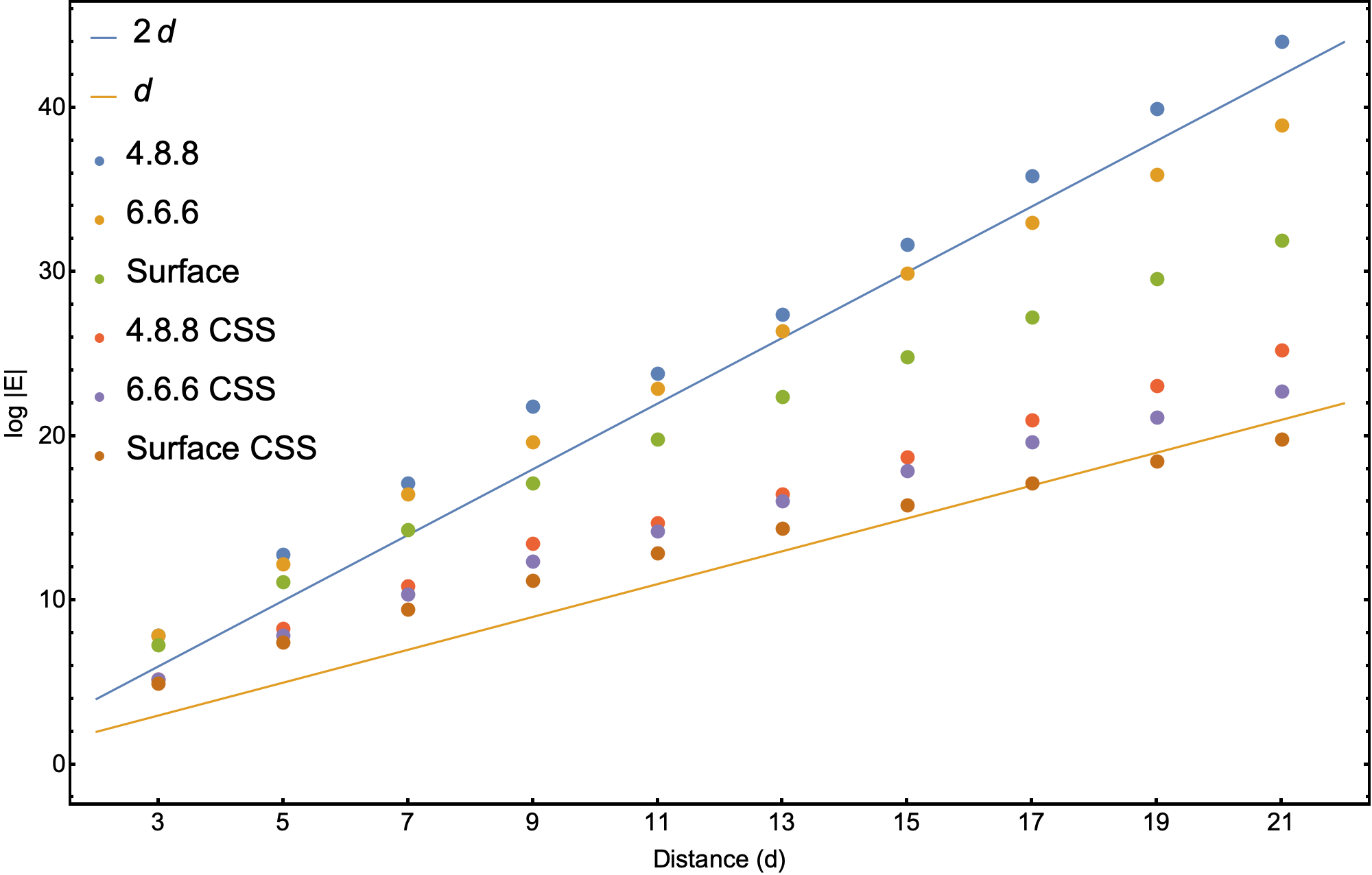}
		\caption{}
		\label{fig:Escaling}
		\end{subfigure}
		\caption{(a) The distance five rotated surface code used in this work: $X$-stabilizers are given by grey (light) faces and $Z$-stabilizers by blue (dark) faces. (b) The scaling of the total edge counts for the trellises listed in Table \ref{tab:edgecounts}.}
	\end{figure}
	
	The difference in sizes between the $X$- and $Z$-trellises for the rotated surface code family is an artifact of the particular numbering system used in this work. Consider the $X_2 X_3$ stabilizer in Figure \ref{fig:rsurfd5}. This is active only at depth two since at depth three its syndrome is already zero. On the other hand, the $Z_1 Z_6$ stabilizer is active over five depths and hence contributes vertices over this entire range. Since the $X$- and $Z$-stabilizer measurements are independent, arranging the data with respect to two different numbering schemes will allow both the $X$- and the $Z$-trellises to be isomorphic. This demonstrates the strong affect permutations have on the trellis.

	While an analytical formula for the edge scaling is currently missing, we can gain some numerical insights from Table \ref{tab:edgecounts}. In Figure \ref{fig:Escaling} we see that $|E|$ is exponential in the minimum distance $d$ (compare to the lines $y = 2d$ (blue) and $y = d$ (orange)). The slight bend in the data points, most predominately seen for the CSS rotated surface codes, suggest a more complicated behavior. It is possible that the color codes also have the same trend but were not simulated to high enough distances to make this as visually apparent. It is also possible that the greedy algorithm used to assign qubits to the color code geometries was suboptimal enough to erase this effect.
\end{example}

\begin{example}
	To further emphasize the affect permutations have on the trellis, draw the distance five and seven 4.8.8 color codes, choose an arbitrary boundary, and number the qubits from left to right, level by level. The minimum trellis for this configuration for the distance five code has $|V| = 5,242$ and $|E| = 9,000$ and has $|V| = 177,018$ and $|E| = 293,928$ for distance seven. Comparing to Tables \ref{tab:vcounts} and \ref{tab:edgecounts}, using this numbering scheme would increase the slope of the blue dots in Figure \ref{fig:Escaling}, potentially limiting the ability to do large-scale simulations at a lower distance.
\end{example}

\section{Simulations And Discussion}\label{sec:numerics}
Numerical simulations were performed in the Julia programming language with pre-simulation computations in the MAGMA quantum coding theory library \cite{magmahandbook}. Trellises were efficiently constructed using the theoretical guarantees provided by Corollary \ref{cor:bruteforce}, Theorem \ref{thm:quantstate}, Lemma \ref{lem:VEvs}, and Theorem \ref{thm:bipartthm}. By Theorem \ref{thm:quantstate}, the vertices at depth $i$ are given by the set of all $(n - k)$-tuples whose bits corresponding to generators active at $i$ range through all possible values. To construct the edges at section $i$, determine all of the vertices in $V_{i - 1}$ connecting to $\overline{0} \in V_i$ via Corollary \ref{cor:bruteforce}. Then choose one of these $v \in V_{i - 1}$ and determine all of the vertices in $V_i$ to which it connects, again via Corollary \ref{cor:bruteforce}. These are all of the vertices in the edge configuration by Theorem \ref{thm:bipartthm}, which may be completed with Corollary \ref{cor:bruteforce}. This is the subgroup of $E_i$ by Lemma \ref{lem:VEvs}. The rest of the section consists of all translations (cosets) of this subgroup. The $V_i$ are independent and may constructed in parallel, after which the $E_i$ may also be parallelized. For the codes considered in this work, only the distance 19 and 21 color code trellises were large enough to justify saving so as to not compute them on-the-fly later.

The vertex labels are required for the construction of and theoretical justification for the trellis, but an examination of the Viterbi algorithm shows that they serve no role in decoding. As such, we need not bother shifting the vertex labels for each measured syndrome, and generally we can safely discard them after the construction phase is completed. Even storing the labels as integers is similar to creating a lookup table and can take a non-trivial amount of memory starting at moderately sized codes. The storage requirements for the zero-syndrome trellis valid for any error model became slightly unwieldy to use for large-scale simulations for the distance 21 color codes, but this was implementation dependent. Choosing a specific error model allowed the trellis to be translated to a data structure a fraction of the storage size of the original. For example, the distance 21, 4.8.8 CSS trellises come out to around 300 MB in contrast to the roughly 2 GB file size of the original data structure which includes all the information and generality. We leave a discussion of such improvements to upcoming work. However, since the trellis searches all elements of $S^\perp$, the proper comparison to make here is to the size of a lookup table for the same code. Assuming the $2^{n + k}$ elements of $S^\perp$ are stored as length-$n$ strings of character size 1 byte, a lookup table for the $[[111, 1, 11]]$ code above would be on the order of $\sim 10^{17}$ exabytes.

\begin{figure}
	\begin{center}
		\includegraphics[scale=0.3]{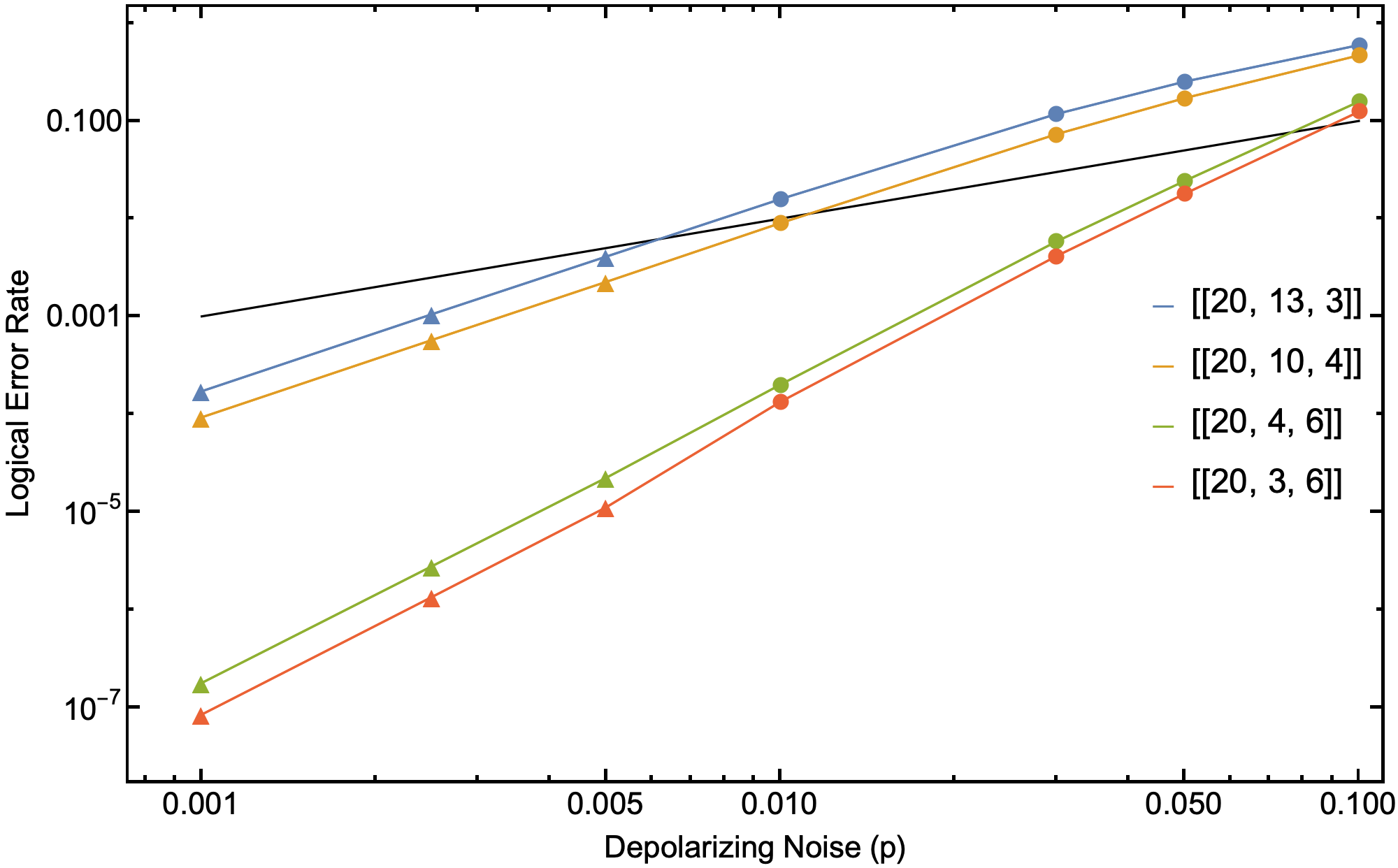}
	\end{center}
	\caption{Simulated logical error rates for the four codes in the row $n = 20$ satisfying our requirements at \cite{codetables}. Triangular data points are importance sampled to a tolerance of $10^{-9}$ and circular points are direct sampled; both methods were used and were found to agree at $p = 0.01$. The black line is $y = x$.}
	\label{fig:codetable}
\end{figure}
To exemplify the versatility of trellis decoding, the row $n = 20$ was selected at random from the quantum error-correcting codes table at \cite{codetables}. Extended codes (with weight one stabilizers) and codes with $k = 0$ or $d \leq 2$ were discarded, and code-capacity (memory model) numerical simulations were performed on the remaining four codes: $[[20, 3, 6]]$, $[[20, 4, 6]]$, $[[20, 10, 4]]$, and $[[20, 13, 3]]$. These codes and their properties were previously unknown to the authors of this work and were never investigated. In particular, it is unknown if another decoding method is known for these codes and if there exist previous pseudothreshold results in the literature. Going from copying the stabilizer matrices from the website to running the simulation took about one minute per code given the automated software tools developed for this work. Simulations consisted of a single round of error correction for a minimum of 30,000 uniformly-sampled non-trivial errors for each data point under a depolarizing noise model acting independently on each qubit,
\begin{equation*}
	\mathcal{E}(\rho) = (1 - p) \rho + \frac{p}{3} X \rho X + \frac{p}{3} Y \rho Y + \frac{p}{3} Z \rho Z.
\end{equation*}
Error correction was considered to have failed if the final state contains a logical error with respect to any of the $2k$ logical operator generators. Results are presented in Figure \ref{fig:codetable}, and the stabilizers for these codes are given in Appendix \ref{sec:codetablesstabs} for convenience.

In order to quantitatively assess the success of trellis decoding, the same simulations were carried out for three common qubit code families, the rotated surface and 4.8.8, 6.6.6 color codes for $Z$-only noise,
\begin{equation*}
	\mathcal{E}(\rho) = (1 - p) \rho + p Z \rho Z,
\end{equation*}
a single-axis dephasing channel, (using the CSS $X$-only trellis) up to at least distance 17. These codes and this noise model were chosen solely for the purpose of comparison with existing decoding literature. Due to the $X-Z$ symmetry of the codes, the same results are obtained for $X$-only noise and depolarizing thresholds are at $3/2$ times the $X$- or $Z$-noise results decoded independently, which we numerically verified for low distances. There are, of course, many other highly-optimized decoders for topological codes, and these choices may not sufficiently demonstrate the value of trellis decoding which shines for moderately-sized quantum error-correcting codes for which there are no efficient decoding strategies. Results for distances three and five were computed exactly and served as a baseline check for our methods. Distances 7 - 21 were sampled to 500,000, 300,000, 100,000, 100,000, 50,000, 50,000, 30,000, and 30,000 non-trivial errors, respectively. Data was taken at intervals of size 0.0025 around the suspected threshold to reduce the error caused by sampling and fitting.

Previous studies have reported a surface code $Z$-only threshold of 10.3\% under minimum-weight perfect matching (MWPM) \cite{wang2003confinement} and 10.9\% using the tensor network decoder of \cite{bravyi2014efficient}. The statistical mechanical threshold is estimated to be 10.9\% \cite{dennis2002topological} without correlations and 12.6\% with $X$- and $Z$-correlations taken into account \cite{bombin2012strong}. Both the trellis and MWPM are minimum-weight decoders, so they should agree for uncorrelated error models. For correlated error models trellis decoding should beat MWPM since the minimum-weight correction on the full trellis will not always match the minimum-weight $X$-correction combined with the minimum-weight $Z$-correction returned by MWPM. For example, consider the error $Y_8 Y_{13} Y_{18}$ on the distance five rotated surface code of Figure \ref{fig:rsurfd5}. The full trellis will return the correction $Y_8 Y_{13} Y_{18}$, while the $X$- and $Z$-trellises will return $Z_3 Z_{23}$ and $X_8 X_{13} X_{18}$, respectively; the latter of which is a logical error. The Viterbi algorithm takes the same number of steps every time and could be slower than using a small matching graph for a large code, so the expected number of errors should be taken into account when comparing the runtime of these two algorithms for uncorrelated error models. The results of our simulations are shown in Figure \ref{fig:rsurfresults}.
\begin{figure}[h!]
	\centering
     	\begin{subfigure}[b]{0.475\textwidth}
        		\centering
		\includegraphics[scale=0.23]{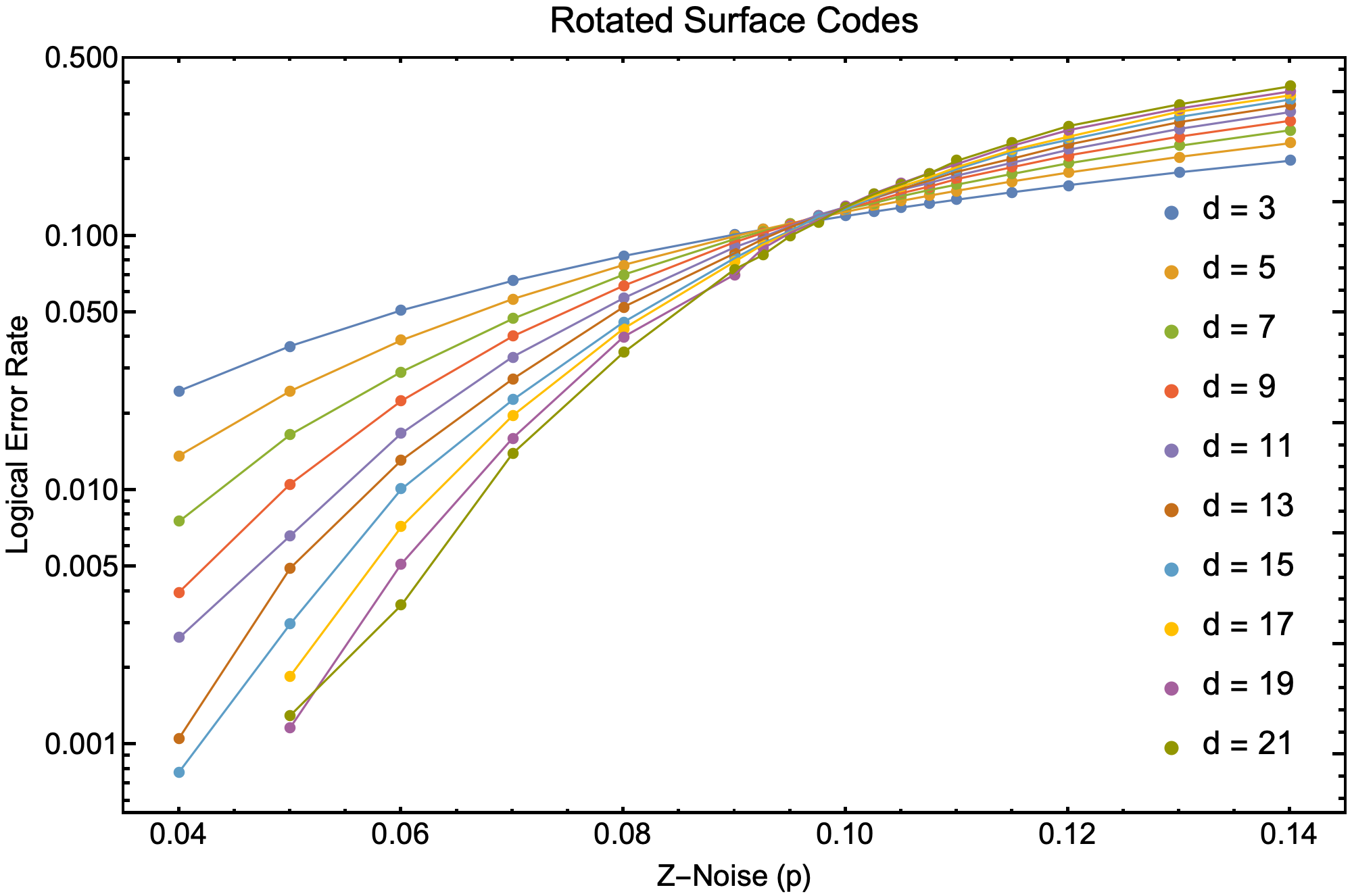}
		\caption{}
		\label{fig:rsurfresults}
	\end{subfigure}
	\hfill
	\begin{subfigure}[b]{0.475\textwidth}
        		\centering
		\includegraphics[scale=0.23]{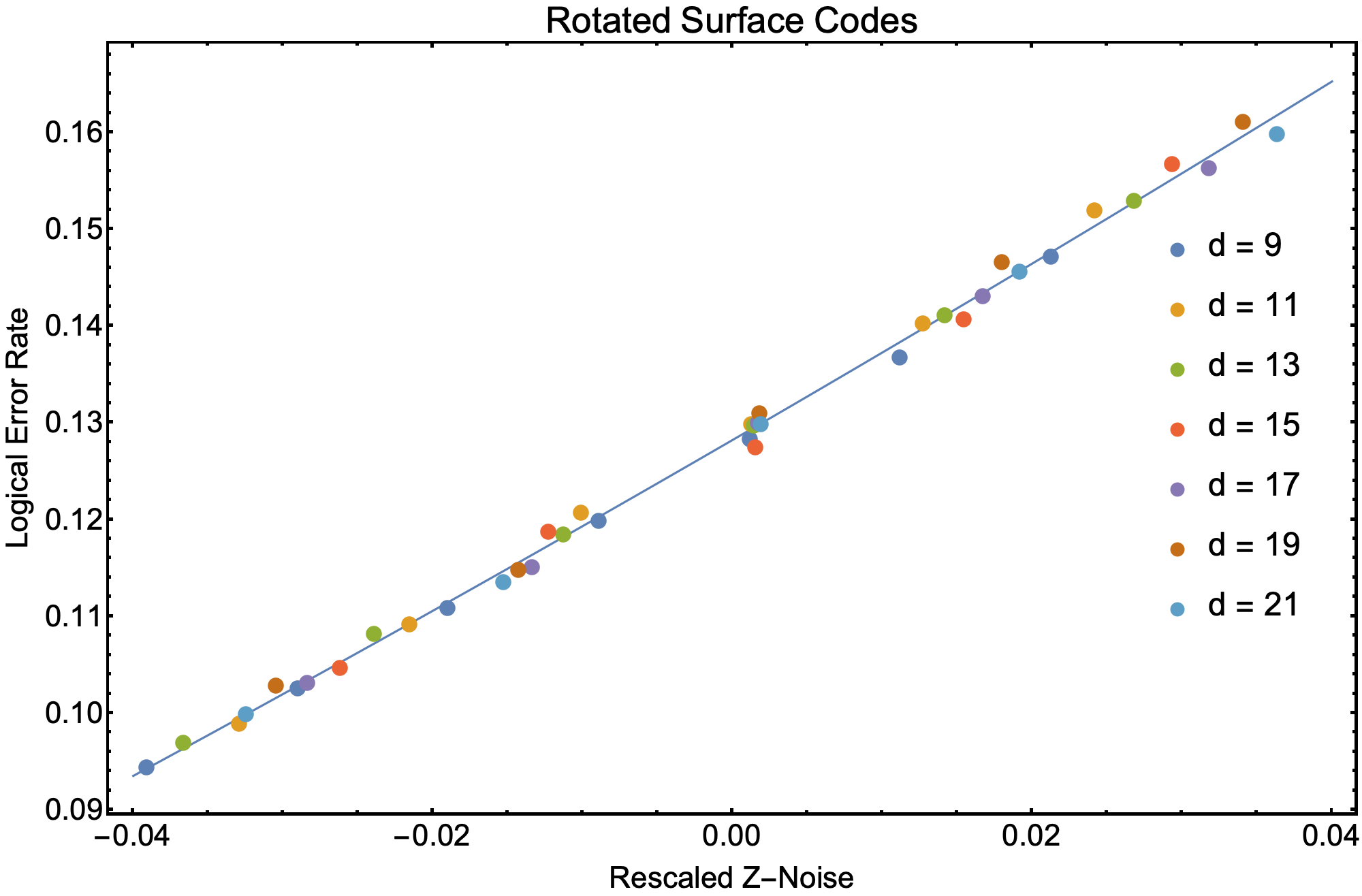}
		\caption{}
		\label{fig:rsurfthresh}
	\end{subfigure}
	\vskip\baselineskip
	\begin{subfigure}[b]{0.475\textwidth}
        		\centering
		\includegraphics[scale=0.23]{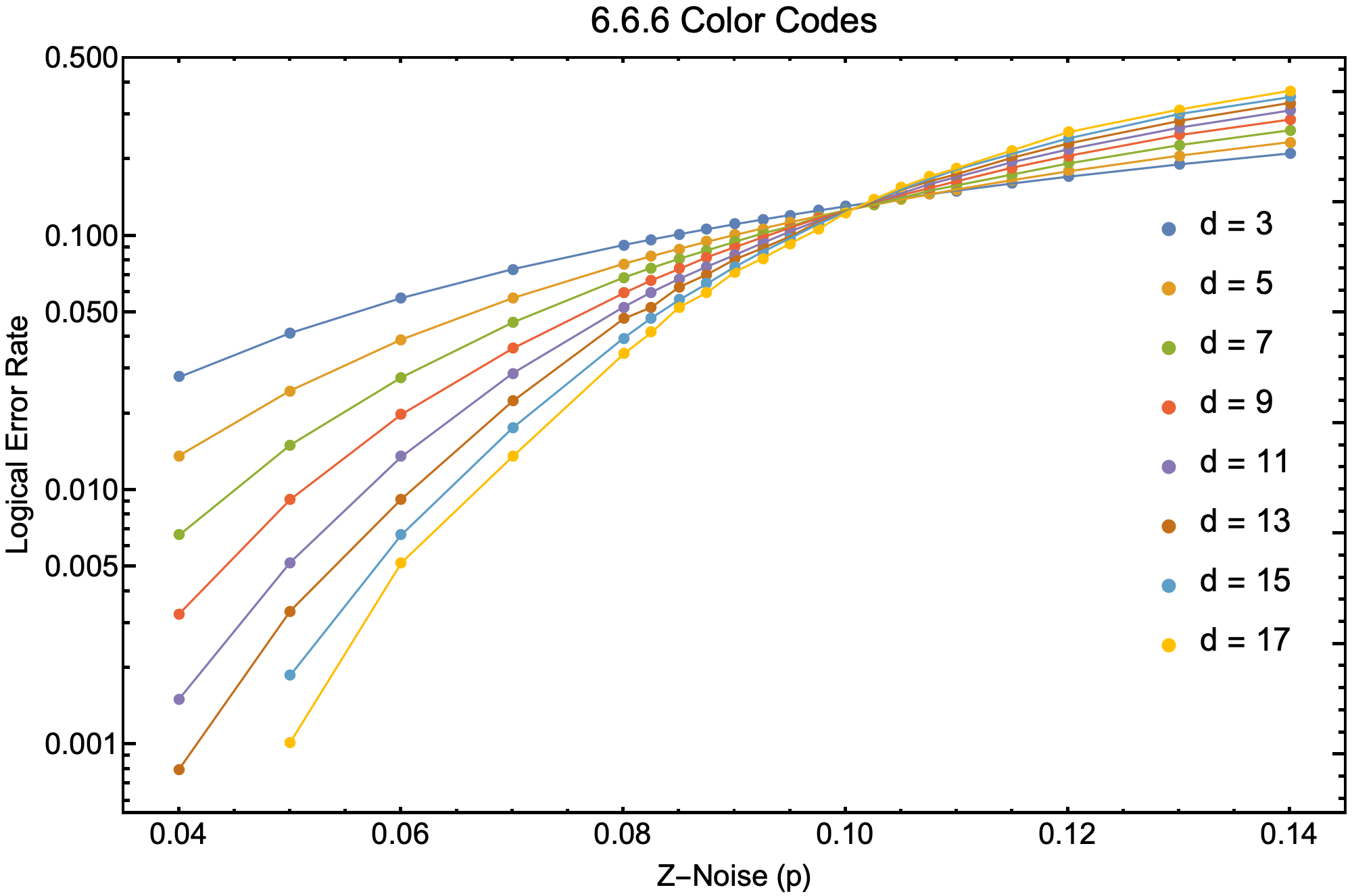}
		\caption{}
		\label{fig:666results}
	\end{subfigure}
	\hfill
	\begin{subfigure}[b]{0.475\textwidth}
        		\centering
		\includegraphics[scale=0.23]{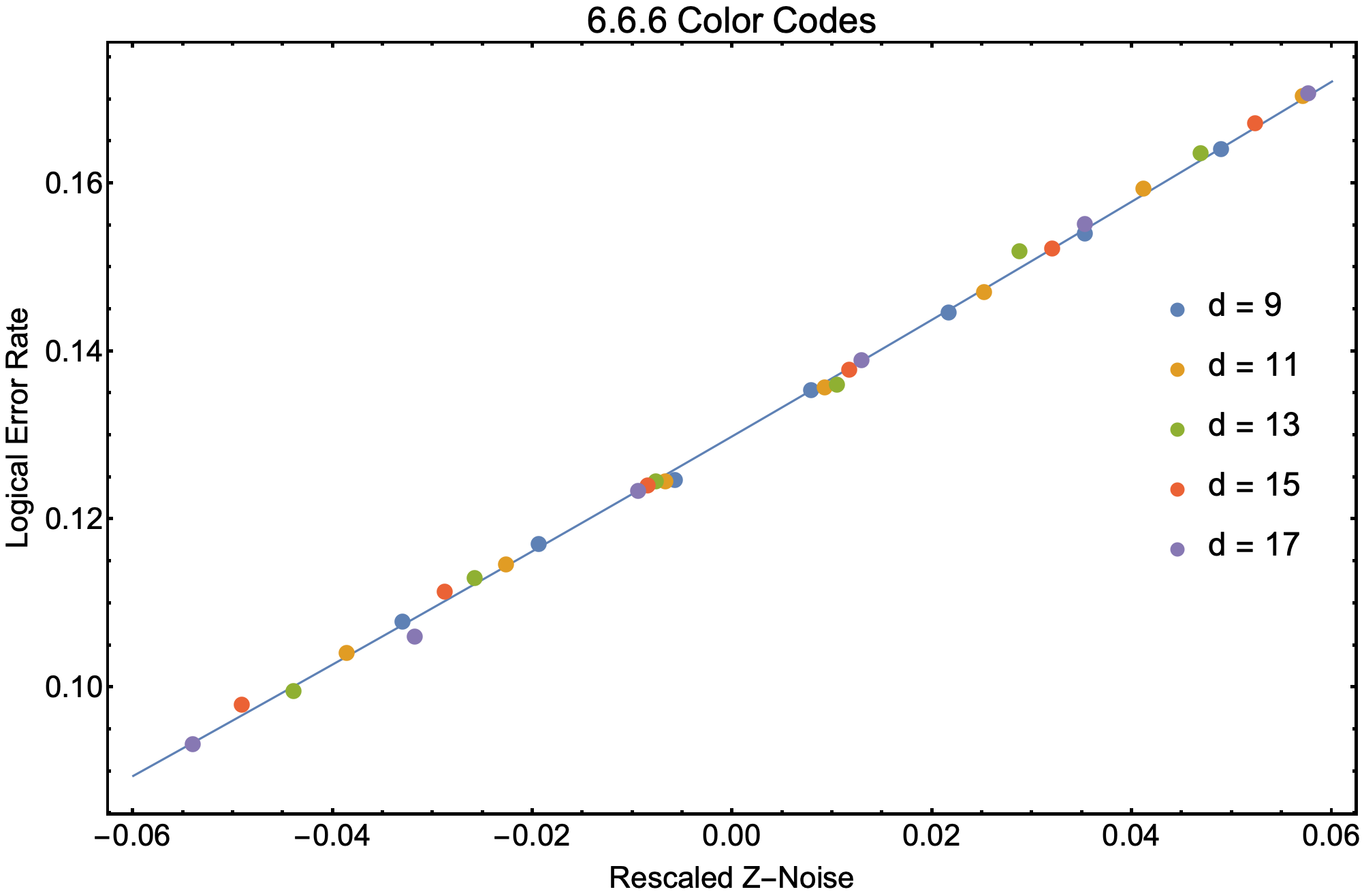}
		\caption{}
		\label{fig:666thresh}
	\end{subfigure}
	\vskip\baselineskip
	\begin{subfigure}[b]{0.475\textwidth}
		\centering
		\includegraphics[scale=0.23]{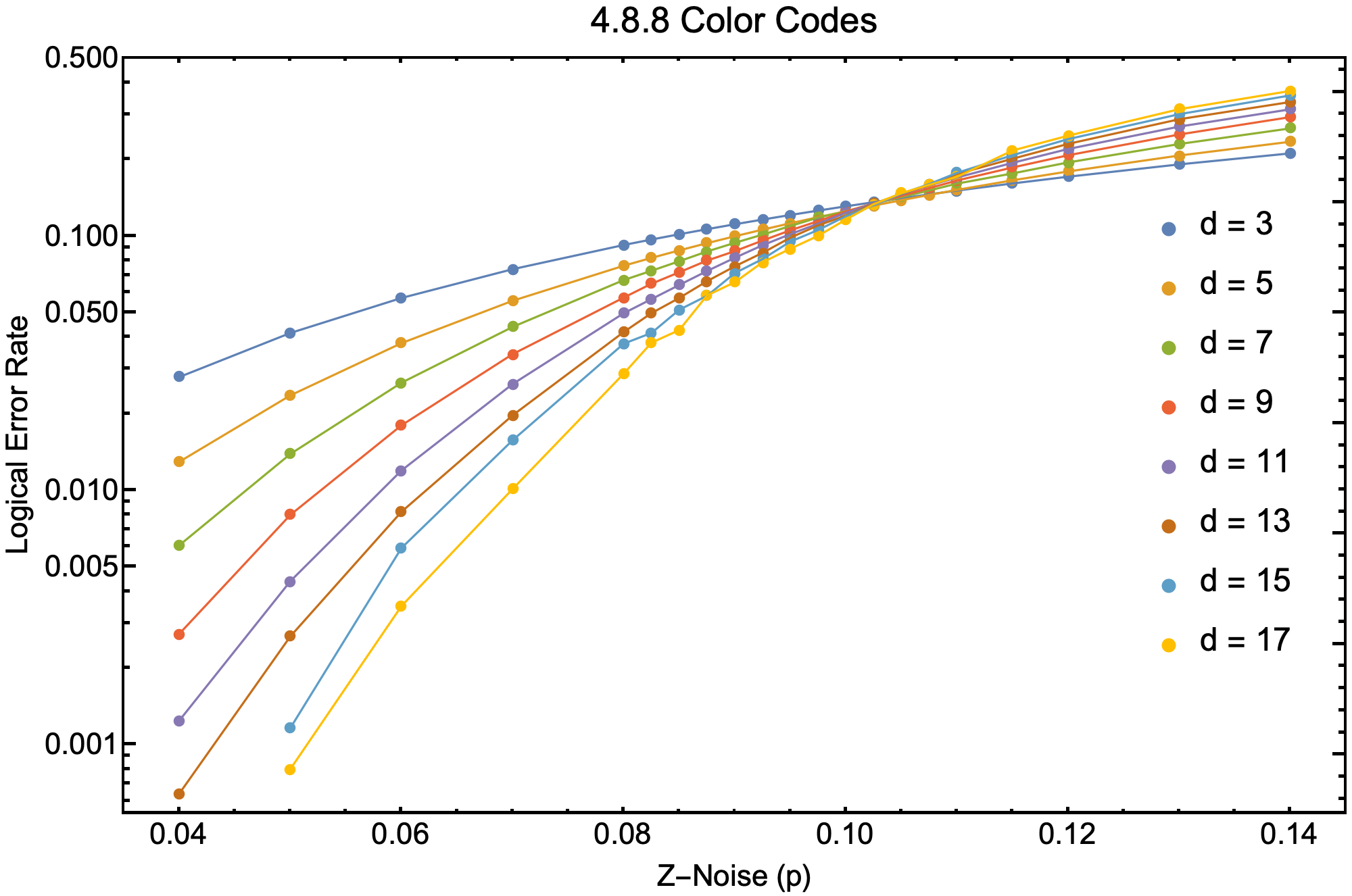}
		\caption{}
		\label{fig:488results}
	\end{subfigure}
	\hfill
	\begin{subfigure}[b]{0.475\textwidth}
		\centering
		\includegraphics[scale=0.23]{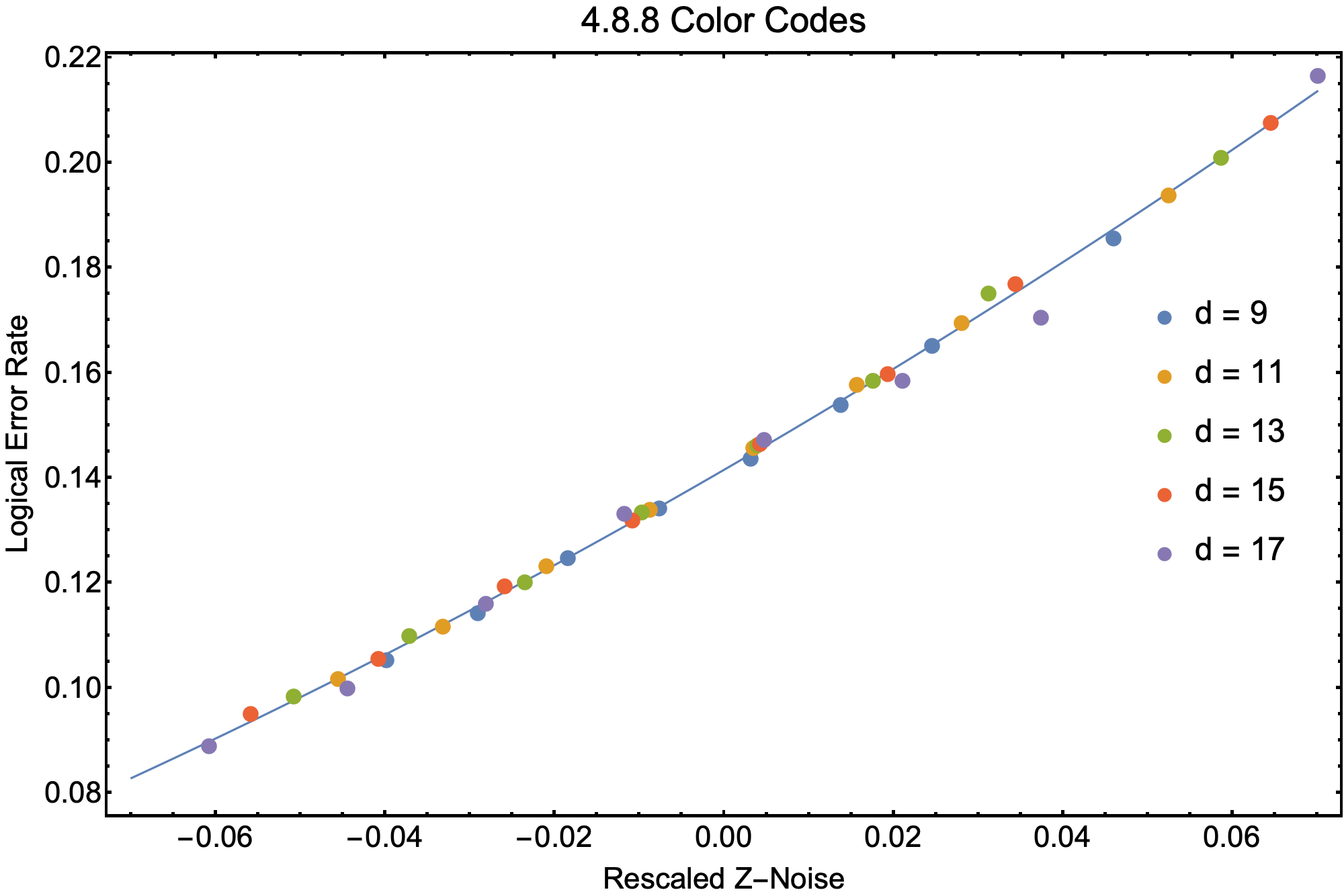}
		\caption{}
		\label{fig:488thresh}
	\end{subfigure}
	\caption{(a), (c), (e) - Threshold results for code capacity (memory model) simulations under $Z$-only noise. (b), (d), (f) - Finite-size threshold analysis for the corresponding codes near threshold. The logical error rate is shown as a function of the rescaled $Z$-noise probability $x = (p - p_{\mathrm{th}}) d^{1/\nu}$. The solid line is the line of best fit to $A + Bx + Cx^2$. Note that the higher distance codes show some signs of under sampling for lower $p$.}
	\label{fig:numresults}
\end{figure}

The Restriction Decoder of \cite{kubica2019efficient} achieves a $Z$-only threshold of 10.2\% for the 4.8.8 color codes with periodic boundary conditions. This was modified to include boundaries in \cite{chamberland2020triangular} which reports a full depolarizing noise threshold of 12.6\% for the 6.6.6 lattice, which results in a $Z$-threshold of roughly 8.4\%. Reference \cite{bombin2012universal} finds a $Z$-threshold of roughly 8.7\% for the 4.8.8 lattice. The statistical mechanical thresholds for both lattices are estimated to be 12.6\% and 12.52\% for the 6.6.6 and 4.8.8 families, respectively \cite{katzgraber2009error, bombin2012strong}. The results of our simulations for the 6.6.6 and 4.8.8 families are shown in Figures \ref{fig:666results} and \ref{fig:488results}, respectively.

Following references such as \cite{stephens2014fault} we perform a finite-size threshold analysis on our data by fitting to the ansatz
\begin{equation*}
	A + B(p -p_{\mathrm{th}}) d^{1/\nu} + C(p -p_{\mathrm{th}})^2 d^{2/\nu}
\end{equation*}
for $A, B, C, p_{\mathrm{th}}$, and $\nu$ starting with $d \geq 9$. For the rotated surface codes we find $p_{\mathrm{th}} \approx 10\%$ ($\nu = 1.58$). We believe the distances sampled in this work are too low to stabilize the third decimal place, but as previously mentioned, we should have a threshold of 10.3\% as $d$ increases. For the color codes, it's possible that all of the distances reported in this work are within the finite-size-effect regime. Nevertheless, we see strong evidence for threshold-like behavior at 10.1\% ($\nu = 1.29$) and 10.4\% ($\nu = 1.51$) for the 6.6.6 and 4.8.8 families, respectively. Further simulations and analysis are required to better estimate the thresholds for these codes using this decoder. Figures \ref{fig:rsurfthresh}, \ref{fig:666thresh}, and \ref{fig:488thresh} show the logical error rate versus the rescaled {\color{red} $Z$-}noise probability $x = (p - p_{\mathrm{th}}) d^{1/\nu}$ for data near each threshold, along with the line of best fit.

Judging by the time it took to complete each simulation, we believe that the CSS, distance 21 trellis for the 4.8.8 color code ($[[421, 1, 21]]$) roughly represents the limit for large-scale simulations. Considering $|E|$ as a measure of the difficulty of decoding, trellis profiles for codes were compared to Tables \ref{tab:vcounts} and \ref{tab:edgecounts} before using this method. We offer this as a rough benchmark but acknowledge that some of this may due in part to not properly leveraging the limited computational resources available for this work. For example, we noticed that a large percentage of the total simulation time for the distance 19 and 21 color codes was consumed by distributing the trellis to all the processors. Switching to the previously alluded to alternative data structure improved this but the data showed slight signs of under-sampling. To fix this we used a process called \textit{sectionalization}, well-known in the classical literature, to reduce the trellis size even further. For example, using this method the distance 21 4.8.8 color code gives a trellis no more than 45\% of the edge size reported in this paper. We have chosen to use this paper as a baseline for trellis theory, including its limits. As such, reporting all improvements to the theory or its practical implementation has been relegated to a future work. The special structure of some codes may be utilized to reduce $|E|$ without the need for further theory, as demonstrated by the next example.

\begin{example}\label{ex:concat}
	\begin{figure}[t]
		\begin{center}
			\includegraphics[scale=0.3]{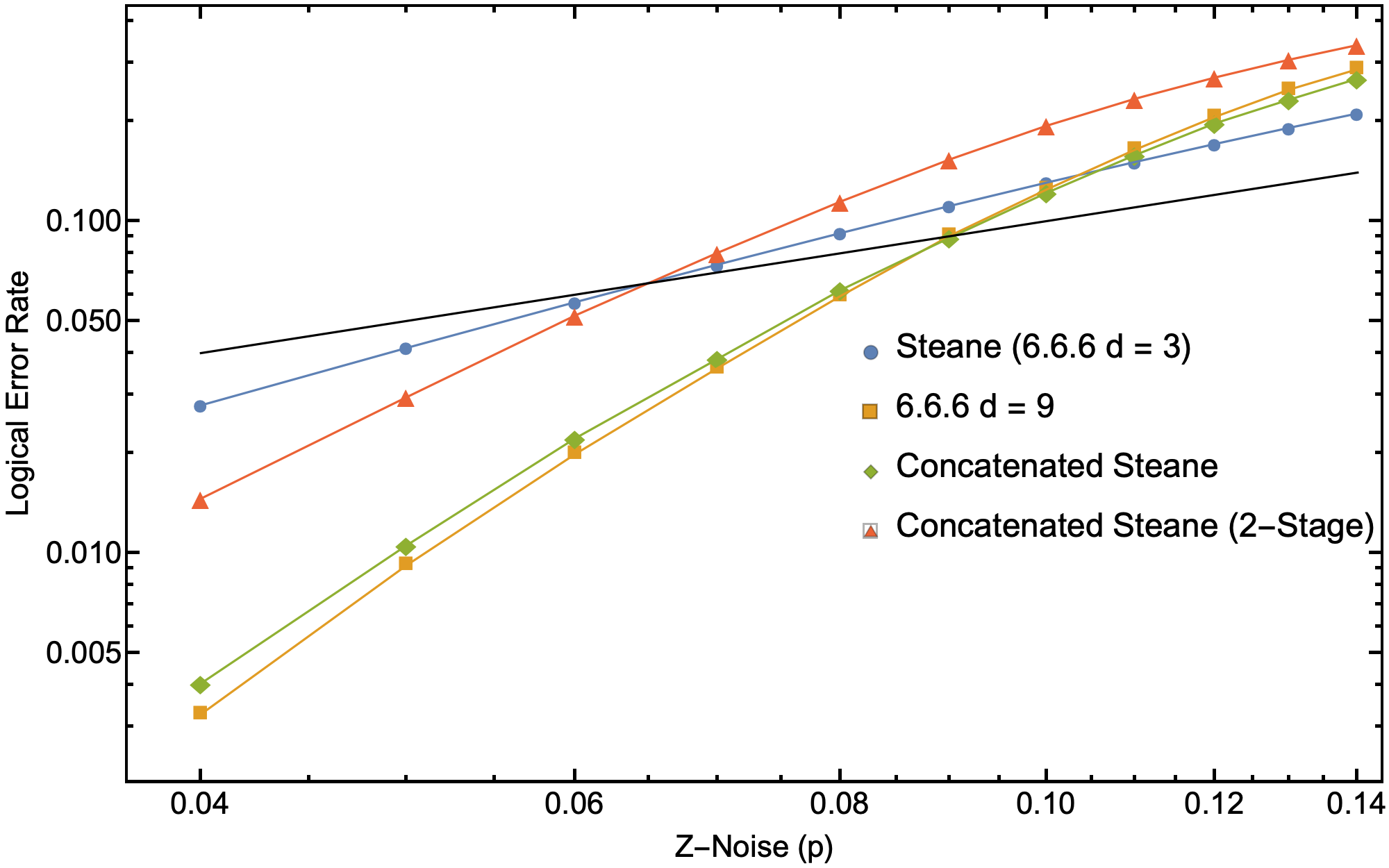}
		\end{center}
		\caption{Simulated logical error rates for Example \ref{ex:concat}. The black line is $y = x$.}
		\label{fig:concat}
	\end{figure}
	Let $H$ denote the numerical matrix representing the $X$- or $Z$-stabilizers for the $[[7,1,3]]$ Steane code,
	\begin{equation*}
		H = \begin{pmatrix}
				1 & 1 & 0 & 1 & 1 & 0 & 0\\
				0 & 1 & 1 & 0 & 1 & 1 & 0\\
				0 & 0 & 0 & 1 & 1 & 1 & 1
			\end{pmatrix}.
	\end{equation*}
	The level-2 concatenated Steane code is the $[[49, 1, 9]]$ code whose $X$- or $Z$-stabilizers are given by the Pauli operators corresponding to $I \otimes H$ and $H \otimes \vec{1}$, where $I$ is the $7 \times 7$ identity matrix and $\vec{1}$ is the length-7 all-ones vector. The trellis for this code has $|V| = 626$ and $|E| = 844$. Using Theorem \ref{thm:split} and the subsequent discussion, we consider a 2-stage, suboptimal decoder. The seven copies of $H$ are decoded independently at a cost of 36 edges each. The resulting corrections are made then the syndrome updated and the bits corresponding to $H \otimes \vec{1}$ are again decoded using the trellis for $H$. The correction returned from this trellis is tensored with $\vec{1}$ to map back to the original problem. Since the first seven decodes can be done simultaneously, the time complexity of this decoder is determined by only $36 + 36 = 72$ edges, a 91.5\% savings. Numerical simulations similar to the ones above were performed for both this decoder and the normal, single stage trellis. Results are presented in Figure \ref{fig:concat} where we see that this is indeed suboptimal, trading speed for accuracy.

The calculated pseudothreshold for dephasing noise between the Steane code and concatenated Steane code using 2-stage block decoding is 6.44\%, in agreement with the block decoding dephasing channel threshold of 6.46\% estimated from the exact depolarizing channel threshold of 9.69\% \cite{RahnPRA2002}. The calculated pseudothreshold for dephasing noise between the Steane and the concatenated Steane code using the full trellis is 10.55\%. This is known to be below the true threshold for continued concatenation based on message passing of 12.5\% estimated from a depolarizing channel threshold of 18.8\% \cite{PoulinPRA2006}. The $[[7, 1, 3]]$ Steane code can also be considered a distance 3 6.6.6 color code. It is informative to compare the performance of the $[[61, 1, 9]]$ 6.6.6 color code to the level-2 concatenated Steane code.  We see that the color code has a slightly better performance at the cost of 12 more qubits.\\
\end{example}

The numerical threshold estimates obtained in this work are competitive with, and sometimes even better than, current state-of-the-art decoders. We attribute this success to two main things. First, with the exception of organizing the data into logical cosets, the decoder is provided with close to the maximum amount of information possible of both the code and the error channel. Traditional decoders often only rely on a fraction of this information. Second, since every Pauli string represented by a path in the trellis has the same syndrome as was measured, any correction returned by the decoder will, at minimum, force the resulting state to have zero syndrome. If there are few or no logical operators of the weight of the true error, the decoder will therefore correct the majority of the correctable errors of that weight regardless of the minimum distance of the code. 

\section{Conclusion}\label{sec:conclusion}
We hope to reinitiate interest in trellis decoding for stabilizer codes with this work. Starting from the basic construction in \cite{ollivier2006trellises}, we have provided a theoretical foundation required for any future analysis. In doing so, we have detailed the use of, improved the runtime of, and demonstrated the effectiveness of trellis decoders. The qualitative metric we propose shed light on the difference between color codes and surface codes and CSS and non-CSS stabilizer codes. In particular, the trellis is a native color code decoder without invoking projections, restrictions, homology, charges, excitations, string nets, or boundaries.

The theoretical work here attempted to closely parallel classical trellis theory, yet also introduced new results unique to the quantum setting. Comparing this work with the remaining classical literature, a considerable number of open problems remain. In particular, many, if not most, fundamental results of the classical theory rely on the relationship between a code and its dual. The concept of duality is a bit tricker for stabilizer codes and we plan to address this in subsequent work.

An important set of classical results missing quantum analogues are bounds for various quantities given only $n$, $k$, and $d$ \cite{muder1988minimal, zyablov1993bounds, kasami1993complexity, forney1994dimension, lafourcade1995lower, lafourcade1995asymptotically, ytrehus1995trellis, kiely1996trellis}. This is easier in the classical setting as the minimum distance is related to the number of paths between a vertex in $V_i$ and a vertex in $V_{i + j}$ for some $j$. Here, this would be the minimum distance of $S^\perp$ and not $S^\perp \backslash S$. A thorough analysis will also shed light on the critical open question of the scalability of trellis decoding, which is required to properly compare it with other decoding techniques. Having a complete answer to this question will allow for the analysis of the asymptotic trellis decoding behavior of various code families. The question then becomes at what point this scaling becomes too cumbersome for modern quantum controller hardware. In such cases, trellis pruning \cite{bertrand2004simplified} or coset techniques may possibly be employed. Classically, trellis decoding becomes computationally more intensive as the encoding rate increases, since $\max_i |V_i |$ grows with $k$.

While the code capacity results are promising, further numerical studies are needed to determine the behavior of this method under the more realistic and interesting phenomenological, circuit-level, and biased noise models. Incorporation into other paradigms such as flag-syndrome extraction and single-shot error correction are also interesting. Trellises are a hidden Markov model for the decoding process and it's intriguing to consider whether a Markov chain with memory could be used to handle correlated error models. Qudit codes were not investigated during this work are important to investigate as decoders for such codes are fewer in number and are not as well understood as their qubit counterparts. The treatment of CSS codes in this work may lead to further quantitative results for these codes. Along these lines, it is common in classical coding theory to exploit the coset structure of codes, which make a visual appearance in the trellis \cite{fujiwara1998trellis, morelos1999constructions}. The study of such trellis substructures may provide new insight into the structure of stabilizer codes.

\section*{Acknowledgements}
E.S. would like to thank Evans Harrell and Benjamin Ide for helpful discussions. This work was supported by the Office of the Director of National Intelligence - Intelligence Advanced Research Projects Activity through an Army Research Office contract (W911NF-16-1-0082), the Army Research Office (W911NF-21-1-0005) and the  Army Research Office Multidisciplinary University Research Initiative (W911NF-18-1-0218). E.S. was also funded in part by the NSF QISE-NET fellowship through NSF award DMR-17474266.

\bibliographystyle{unsrt}
\bibliography{Quantum_Trellis_Decoding.bib}

\newpage
\appendix
\section{The Viterbi Algorithm}\label{sec:viterbiapp}
\begin{figure}[H]
	\centering
     	\begin{subfigure}{\textwidth}
        		\centering
		\includegraphics[scale=1]{trellis.pdf}
		\caption{Section 1}
		\label{fig:Vit1}
	\end{subfigure}
	\begin{subfigure}{\textwidth}
        		\centering
		\includegraphics[scale=1]{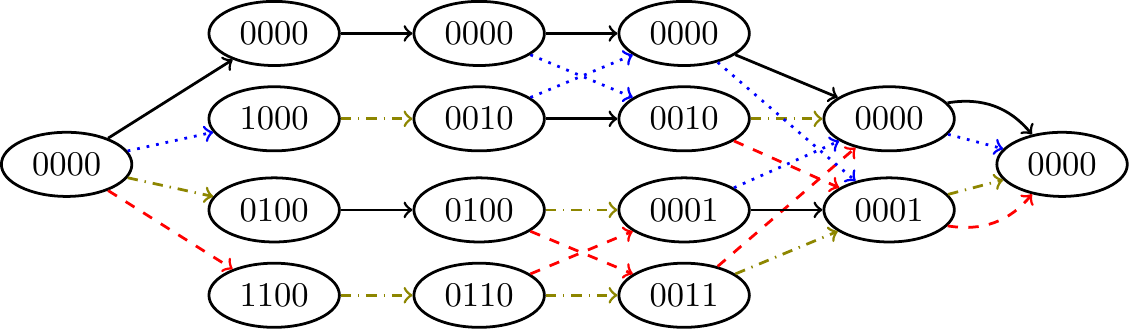}
		\caption{Section 2}
		\label{fig:Vit2}
	\end{subfigure}
	\begin{subfigure}{\textwidth}
        		\centering
		\includegraphics[scale=1]{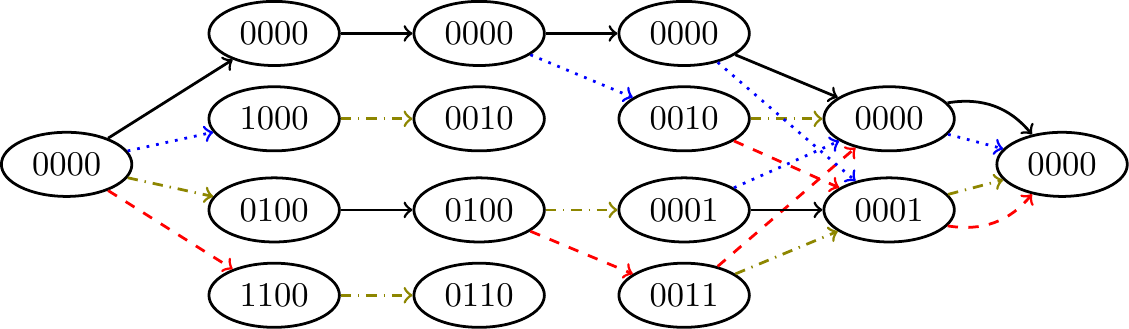}
		\caption{Section 3}
		\label{fig:Vit3}
	\end{subfigure}
	\begin{subfigure}{\textwidth}
        		\centering
		\includegraphics[scale=1]{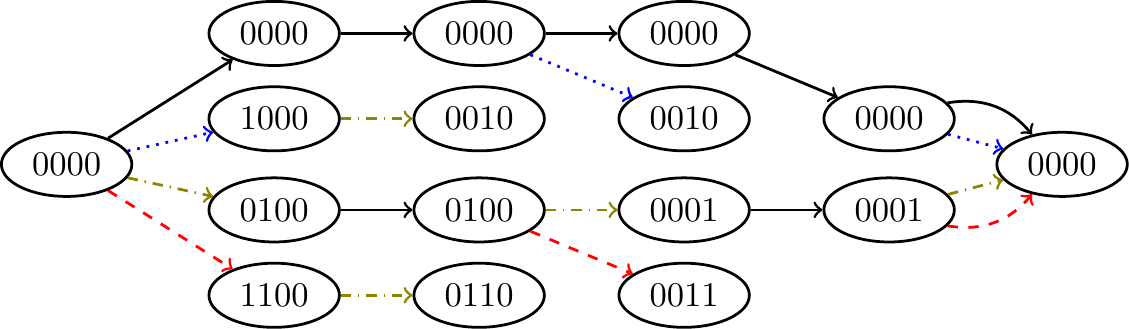}
		\caption{Section 4}
		\label{fig:Vit4}
	\end{subfigure}
	\begin{subfigure}{\textwidth}
        		\centering
		\includegraphics[scale=1]{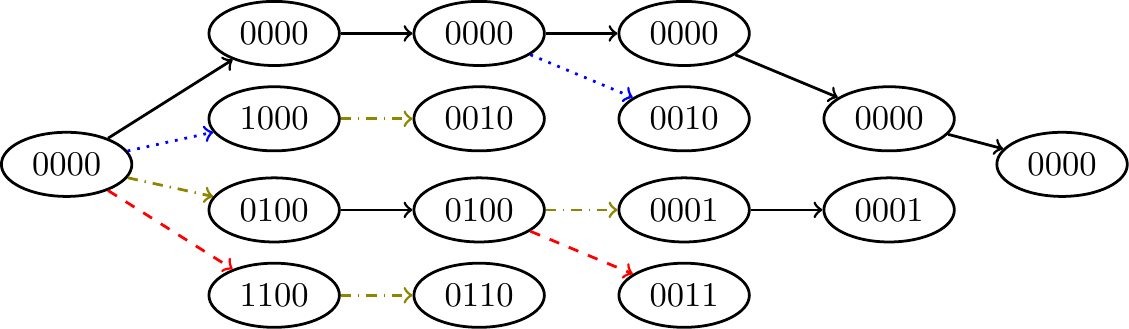}
		\caption{Section 5}
		\label{fig:Vit5}
	\end{subfigure}
	\caption{The Viterbi algorithm applied to the example trellis of Figure \ref{fig:trellisex} for the error model $\Pr(I) = 0$, $\Pr(X) = \Pr(Z) = 1$, and $\Pr(Y) = 2$. Ties were broken in a manner to keep the result looking clean. The path from $V_0$ to $V_n$ provides the final correction, in this case $IIIII$ - which is true since there was no error.}
	\label{fig:VitEx}
\end{figure}

\newpage
\begin{algorithm}[h!]
	\DontPrintSemicolon
	\SetAlgoLined
	\SetKwComment{Comment}{$\triangleright$\ }{}
	\SetKwProg{MyStruct}{Struct}{ contains}{end}
	
	\KwInput{A vertex set, $V$, along with an edge set, $E$, corresponding to a valid trellis.}
	\KwOutput{Pauli string}
	
	\BlankLine
	
	\MyStruct{Vertex}{
  		int prev \Comment*[r]{Final step acts as a linked list}
		float value\;
		Edge edge \Comment*[r]{Initialized to $null$}
	}
	
	\BlankLine
	
	\MyStruct{Edge}{
  		Vertex source\;
		float weight\;
		char label\:
	}
	
	\BlankLine

	$V_0 \ni \overline{0}\text{.value} \gets 0$\;
	 \For{$i \gets 1$ \textbf{to} $n$}{
	 	\ForEach{$v \in V_i$}{
			$v\text{.value} \gets \di \min_{\substack{e \in E_i\\ t(e) = v}} \{e\text{.source.value} + e\text{.weight}\}$ \Comment*[r]{For $e$ which achieves minimum}
			$v\text{.prev} \gets e\text{.source}$\;
			$v\text{.edge} \gets e$\;
		}
	}
	
	\BlankLine
	
	$v \gets \overline{0} \in V_n$ \Comment*[r]{Trace optimal path backward for correction}
	correction $\gets$ ``" \Comment*[r]{Empty string}
	\For{$i \gets n$ \textbf{to} $2$ \textbf{by} $-1$}{
		$\text{correction} \gets v\text{.edge.label} + \text{correction}$ \Comment*[r]{Concatenate to left of string}
		$v \gets v\text{.prev}$\;
	}
	
	\BlankLine
	
	\KwRet{correction}\;
	
	\caption{Viterbi}
	\label{alg:Viterbi}
\end{algorithm}

\newpage
\section{Distance Three Rotated Surface And Color Code Diagrams}\label{sec:knowncodes}
\begin{figure}[h]
     	\begin{subfigure}{\textwidth}
		\includegraphics[scale=1, center]{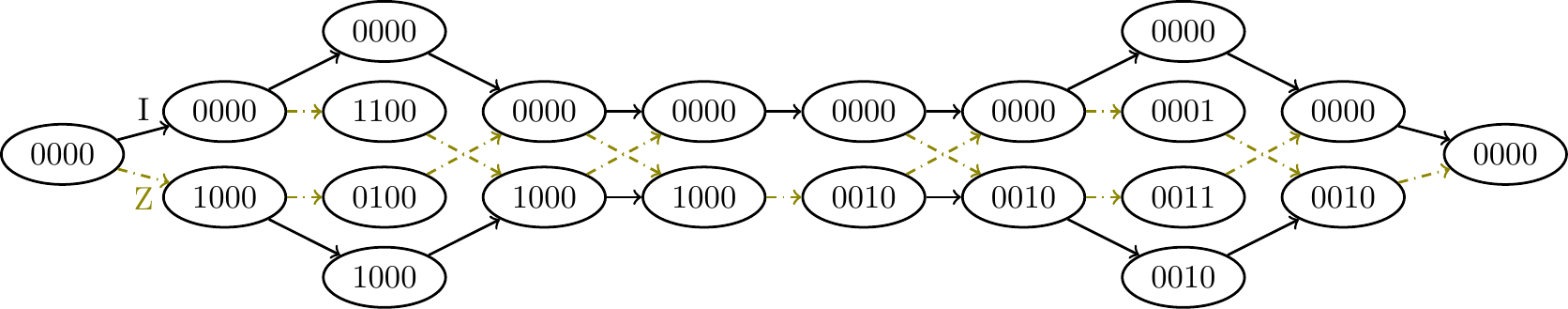}
		\caption{}
		\label{fig:Surf17Xmin}
	\end{subfigure}
	\begin{subfigure}{\textwidth}
		\includegraphics[scale=1, center]{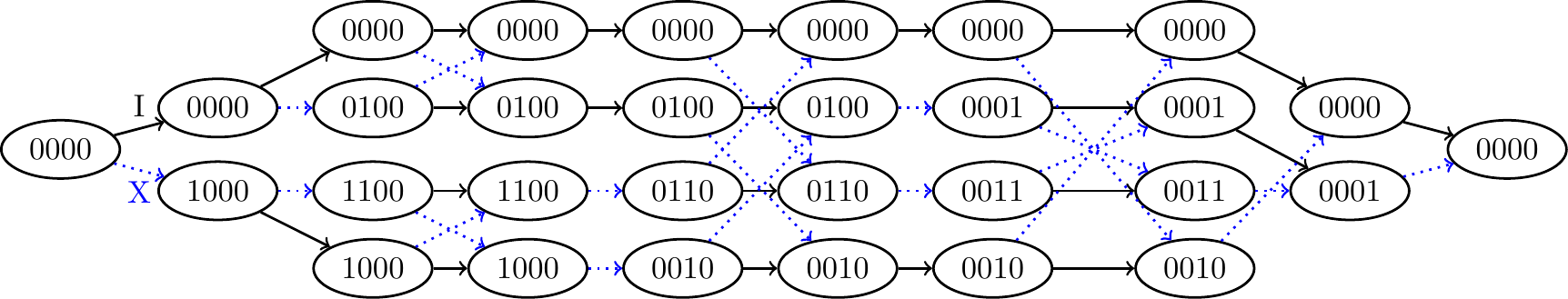}
		\caption{}
		\label{fig:Surf17Zmin}
	\end{subfigure}
	\begin{subfigure}{\textwidth}
		\hspace*{-2.5cm} \includegraphics[scale=0.65, left]{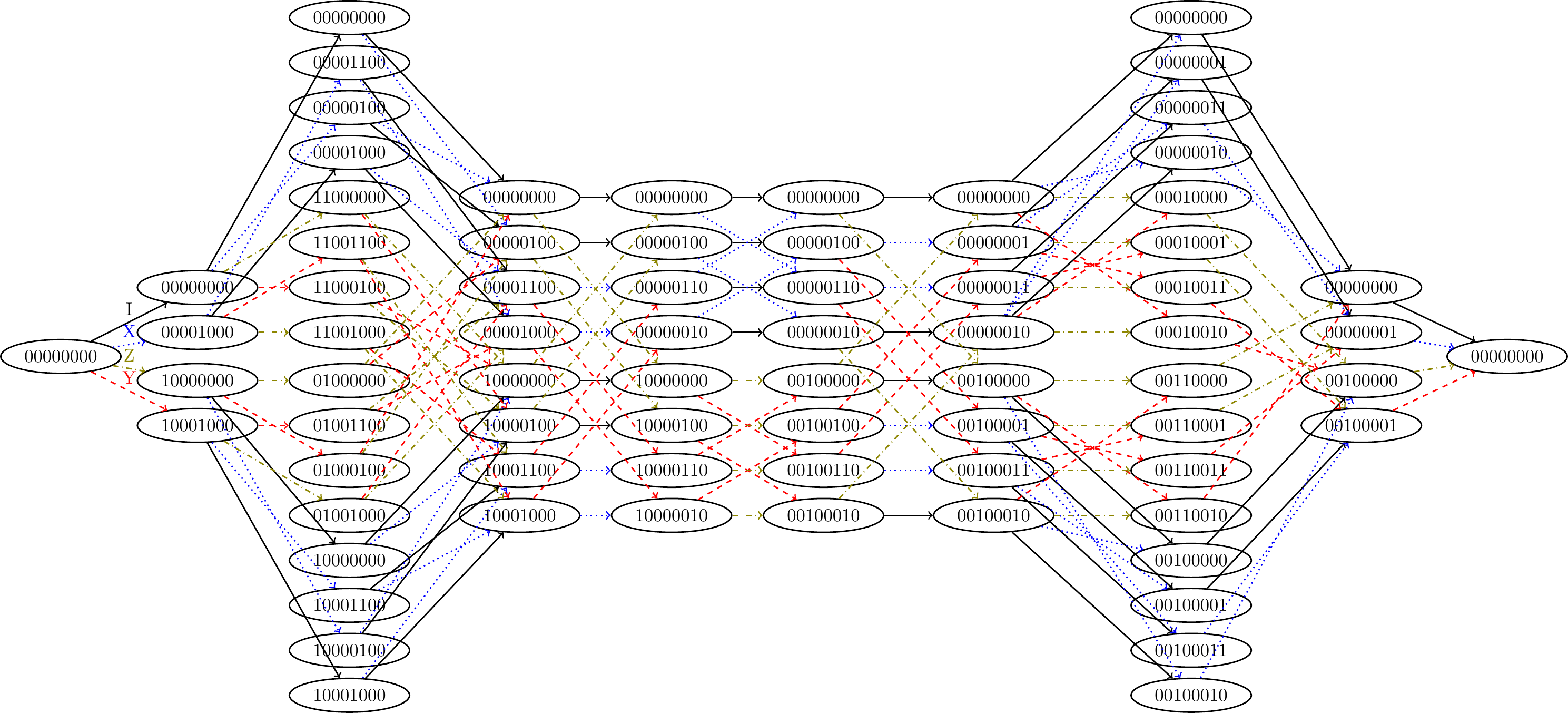}
		\caption{}
    		\label{fig:Surf17full}
	\end{subfigure}
	\caption{Trellis diagrams for the distance three rotated surface codes: (a) $X$-stabilizers only, (b) $Z$-stabilizers only, (c) the full code with vertices organized by the trellis product of (a) with (b). It is common to collapse edge sections when $\deg_{\text{out}} = 1$, combining labels with the section before it. This is called \textit{sectionalization} and will not be discussed further in this work. It is known that sectionalization can reduce $\max |V_i|$ but not $\max |E_i|$ \cite{forney1994dimension}.}
	\label{fig:CSSfigs}
\end{figure}

\begin{figure}[h!]
	\begin{center}
    		\includegraphics[scale=0.37]{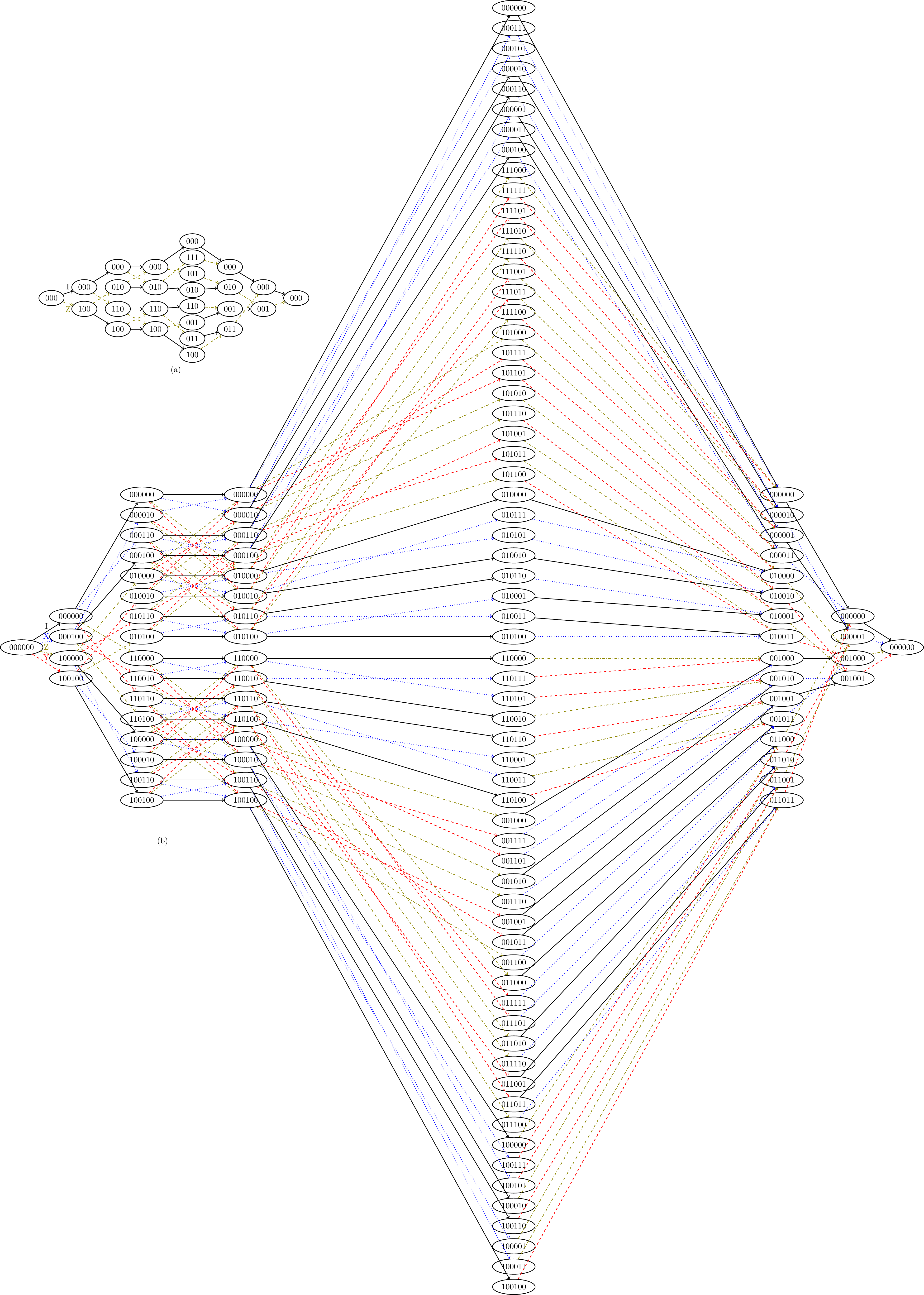}
    	\end{center}
    	\caption{Trellis diagrams for the distance three color code: (a) $X$- (or $Z$-) stabilizers only, (b) the full code with vertices organized by the trellis product of (a) with itself.}
    	\label{fig:cc666d3full}
\end{figure}

\newpage
\section{The Codes Of Figure \ref{fig:codetable}}\label{sec:codetablesstabs}
\begin{tabular}[t]{@{}c@{}}
	$[[20, 3, 6]]$\\
	\hline
	\stretchit{XIIIIIIIIYIXIYZYXZXI}\\
	\stretchit{ZIIIIIIIZXIZIYXYIXZY}\\
	\stretchit{IXIIIIIIZYZYXIXYYYIZ}\\
	\stretchit{IZIIIIIIIXYXZYXZZZIZ}\\
	\stretchit{IIXIIIIIZXIXYXZIYXXY}\\
	\stretchit{IIZIIIIIIZIZXXIYZXZI}\\
	\stretchit{IIIXIIIIIIXXXIZIIIXX}\\
	\stretchit{IIIZIIIIZIZZZZXZZZZX}\\
	\stretchit{IIIIXIIIIXYXXXXYXXXX}\\
	\stretchit{IIIIZIIIIZXZZZZXZZZZ}\\
	\stretchit{IIIIIXIIIYXZXZYZZYIX}\\
	\stretchit{IIIIIZIIZXZYZXYXXYIX}\\
	\stretchit{IIIIIIXIZYYZZXIZZIXY}\\
	\stretchit{IIIIIIZIIXXYYXYIIYZI}\\
	\stretchit{IIIIIIIXZXXIZZYXZIZZ}\\
	\stretchit{IIIIIIIZIZZIYIZXIYYZ}\\
	\stretchit{IIIIIIIIXIIIIXXXXXIZ}
\end{tabular}
\hspace{1cm}
\begin{tabular}[t]{@{}c@{}}
	$[[20, 4, 6]]$\\
	\hline
	\stretchit{XIIIIIIIXYIXIZYZIYXZ}\\
	\stretchit{ZIIIIIIIZXIZIYXYIXZY}\\
	\stretchit{IXIIIIIIYYZYXXIZZZII}\\
	\stretchit{IZIIIIIIXXYXZZIYYYII}\\
	\stretchit{IIXIIIIIYXIXYIYXZIXX}\\
	\stretchit{IIZIIIIIXZIZXIXZYIZZ}\\
	\stretchit{IIIXIIIIXIXXXXYXXXXY}\\
	\stretchit{IIIZIIIIZIZZZZXZZZZX}\\
	\stretchit{IIIIXIIIIXYXXXXYXXXX}\\
	\stretchit{IIIIZIIIIZXZZZZXZZZZ}\\
	\stretchit{IIIIIXIIXYXZXYZYYZIY}\\
	\stretchit{IIIIIZIIZXZYZXYXXYIX}\\
	\stretchit{IIIIIIXIYYYZZIXYYXXX}\\
	\stretchit{IIIIIIZIXXXYYIZXXZZZ}\\
	\stretchit{IIIIIIIXYXXIZYZIYXZI}\\
	\stretchit{IIIIIIIZXZZIYXYIXZYI}
\end{tabular}

\vspace{1.5cm}
\begin{tabular}[t]{@{}c@{}}
	$[[20, 10, 4]]$\\
	\hline
	\stretchit{XIIXZIIXYZIYIXZIZXIZ}\\
	\stretchit{ZIIZYIIZXYIXIZYIYZIY}\\
	\stretchit{IXIZZIIIYYXZXZIYIIZZ}\\
	\stretchit{IZIYYIIIXXZYZYIXIIYY}\\
	\stretchit{IIXZXIIXIXXXXYZIXZXY}\\
	\stretchit{IIZYZIIZIZZZZXYIZYZX}\\
	\stretchit{IIIIIXIXZZZZXIXYZIZY}\\
	\stretchit{IIIIIZIZYYYYZIZXYIYX}\\
	\stretchit{IIIIIIXZZXZXIXZZIZYY}\\
	\stretchit{IIIIIIZYYZYZIZYYIYXX}
\end{tabular}
\hspace{1cm}
\begin{tabular}[t]{@{}c@{}}
	$[[20, 13, 3]]$\\
	\hline
	\stretchit{XIIZIXXIYZYIZYXZXIZY}\\
	\stretchit{ZIXIIXZYZIXZIXIYZYXY}\\
	\stretchit{IXXYIZZIXYXYXYXXIZII}\\
	\stretchit{IZXXIIIIYYYXIIXIYYZX}\\
	\stretchit{IIZYIXYZZXIIZYYIIXYY}\\
	\stretchit{IIIIXXXXXXXXXXXXXXXX}\\
	\stretchit{IIIIZZZZZZZZZZZZZZZZ}
\end{tabular}

\newpage
\section{Proofs Of Technical Lemmas}\label{sec:proofs}
\makeatletter
\renewcommand{\thesection}{\Roman{section}}
\makeatother
\setcounter{section}{3}
\propone*
\begin{proof}
	It is clear that by construction every Pauli string in $S^\perp$ corresponds to a length-$n$ path in the trellis. It remains to show that every length-$n$ path in the trellis corresponds to a Pauli string in $S^\perp$ and that all such paths are unique. Let $(\overline{0}, P_1,\sigma_1(P_1 I \hdots I)) \in E_1$ be an edge with label $P_1$, then pick an arbitrary edge $(\sigma_1(P_1 I \hdots I), P_2, \sigma_2(P_1 P_2 I \hdots I)) \in E_2$. Continuing this process, we end with an edge $(\sigma_{n - 1}(P_1 \hdots P_{n - 1} I), P_n, \sigma_n(P_1 \hdots P_n)) \in E_n$. This last terminus is the zero syndrome by construction, hence, concatenating the edge labels in the path, the Pauli string $P_1 P_2 \hdots P_n$ has zero syndrome and is therefore an element of $S^\perp$ by definition. Since each vertex can only have a unique outgoing edge with a given label, this string uniquely identifies a path in the trellis and no other path can have the same label.
\end{proof}

\corotwo*
\begin{proof}
	Let $v_i \in V_i$, $v_{i + 1} \in V_{i + 1}$, and suppose concatenating a Pauli element $P_i$ to a path with terminus $v_i$ changes the syndrome of $v_i$ to the syndrome of $v_{i + 1}$. Let $\overline{V_0 v_i}$ and $\overline{v_{i + 1} V_n}$ be a path from $V_0$ to $v_i$ and $v_{i + 1}$ to $V_n$, respectively. Then the Pauli string represented by the path $\overline{V_0 v_i} P_i \overline{v_{i + 1} V_n}$ has a length-$n$ and zero syndrome, and is hence an element of $S^\perp$. The result then follows from the previous proposition.
\end{proof}

\setcounter{theoremINNER}{4}
\propfive*
\begin{proof}
	Every path from $V_0$ to $V_n$ must go through some vertex $v \in V_i$. The set of all Pauli strings in $S^\perp$ which map to the vertex $v$ under $\sigma_i$ is a subset of the cosets described in Figure \ref{fig:trelliscosets} and hence has cardinality bounded above by $|S^\perp_{\mathfrak{p}_i}| |S^\perp_{\mathfrak{f}_i}|$. The trellis can be written as a union of paths passing through $v$ over all $v \in V_i$, so the number of such cosets is $|V_i|$. Using Proposition \ref{prop:iso}, we then have
	\begin{equation*}
		|S^\perp| \leq |V_i| |S^\perp_{\mathfrak{p}_i}| |S^\perp_{\mathfrak{f}_i}|.
	\end{equation*}
	Likewise, every path in the trellis contains one edge in section $E_i$. Since edges are of the form $(s(e), P, t(e))$, a similar argument gives
	\begin{equation*}
		|S^\perp| \leq |S^\perp_{\mathfrak{p}_{i - 1}}| |E_i| |S^\perp_{\mathfrak{f}_i}|.
	\end{equation*}
\end{proof}

\setcounter{theoremINNER}{6}
\thmseven*
\begin{proof}
	Let $i$ be fixed.
	\begin{enumerate}
		\item The vertices are the image of $S^\perp$ under $\sigma_i$. By the above comments, the past and the future have zero syndrome therefore, $V_i = \sigma_i(S^\perp) \cong S^\perp / S^\perp_{\mathfrak{p}_i} \times S^\perp_{\mathfrak{f}_i}$ by the first isomorphism theorem for groups.
		\item The kernel of the map from $S^\perp$ to $E_i$ are the Pauli strings which map to $(\overline{0}, I, \overline{0})$, i.e.,
			\begin{align*}
				\ker \sigma_{i - 1}(S^\perp) \cap \{ P \in S^\perp \mid P_i = I \} \cap \ker \sigma_i(S^\perp) &= S^\perp_{\mathfrak{p}_{i - 1}} \times S^\perp_{\mathfrak{f}_{i - 1}} \cap \{ P \in S^\perp \mid P_i = I \} \cap S^\perp_{\mathfrak{p}_i} \times S^\perp_{\mathfrak{f}_i}\\
				&= S^\perp_{\mathfrak{p}_{i - 1}} \times S^\perp_{\mathfrak{f}_i}.
			\end{align*}
	\end{enumerate}
\end{proof}

\lemmaeight*
\begin{proof}
	The proof of Theorem \ref{thm:quantstate} shows that $E_i$ is a group. Clearly the identity edge exists and closure follows since if $(s_1, P_1, \overline{0}), (s_2, P_2, \overline{0}) \in E_i$, then $(s_1, P_1, \overline{0}) (s_2, P_2, \overline{0}) = (s_1 \cdot s_2, P_1 P_2, \overline{0}) \in E_i$, where $\cdot$ is the appropriate group (or vector space) binary operation on the vertices. It remains to show that $E_0$ contains inverses. Let $(a, b, \overline{0}) \in E_0$ and $(c, d, e)$ be its inverse in $E_i$. The group operation on $E_i$ forces $e = \overline{0}$.
\end{proof}

\coronine*
\begin{proof}
	Let $v \in V_i$ and denote by $E_\text{in}(v)$ the set of edges with terminus $v$, $E_\text{in}(v) = \{e \in E_i \mid t(e) = v\}$. By definition, $\deg_\text{in}(v) = |E_\text{in}(v)|$. By Lemma \ref{lem:VEvs}, $E_\text{in}(v)$ is a coset of $E_\text{in}(\overline{0})$ in $E_i$. In particular, $|E_\text{in}(v)| = |E_\text{in}(0)|$. Since there are $|E_i|$ total edges equally divided among $|V_i|$ vertices, $\deg_\text{in}(v) = |E_i| / |V_i|$. The result follows from Theorem \ref{thm:quantstate}. An identical proof gives the second equality with $\deg_\text{out}(v) = |E_{i + 1}|/|V_i|$.
\end{proof}

\lemmaten*
\begin{proof}
	We show the result for the first inequality and the remaining proof is identical. An easy counting argument shows that the number of $I$'s in the $i$th column of $\mathcal{A}_{\mathfrak{f}_i}$ are either $|\mathcal{A}_{\mathfrak{f}_i}|$, $|\mathcal{A}_{\mathfrak{f}_i}|/p$, or $|\mathcal{A}_{\mathfrak{f}_i}|/p^2$. (If there are not all $I$'s, then pair any element with its inverse to create an element with an $I$. Likewise, pair any element without an $I$ with an element with an $I$. Do the same for $X$ and $Z$ combinations.) These are the elements which are also in $\mathcal{A}_{\mathfrak{f}_{i + 1}}$, so $|\mathcal{A}_{\mathfrak{f}_{i + 1}}| \geq |\mathcal{A}_{\mathfrak{f}_i}|/p^2$. The dimension change is therefore no more than two.
\end{proof}

\setcounter{theoremINNER}{11}
\thmtwelve*
\begin{proof}
	From Corollary \ref{cor:bruteforce} and Lemma \ref{lem:VEvs}, it suffices to show this for the edge $(\overline{0}, I, \overline{0}) \in E_i$, as all edge configurations are a shift of this one. Choose an arbitrary $v_i \in \mathcal{I}_i$ and $v_{i + 1} \in \mathcal{I}_{i + 1}$. By definition there exists Pauli labels $P_s$ and $P_t$ such that $(v_i, P_s, \overline{0})$, $(\overline{0}, P_t, v_{i + 1}) \in E_i$. Hence, $(v_i, P_s P_t, v_{i + 1}) \in E_i$. This works for any pair of vertices in $\mathcal{I}_i$ and $\mathcal{I}_{i + 1}$, proving the first statement. Now suppose, without loss of generality, there exists a $v^\prime_{i + 1} \in V_{i + 1} \backslash \mathcal{I}_{i + 1}$ connected to a $v_i \in \mathcal{I}_i$ via $(v_i, P^\prime, v^\prime_{i + 1})$ but is not part of the bipartite graph. As a subgroup, the inverse syndrome to $v_i$, $v_i^{-1}$, exists with some edge label $P_i$. Then $(v^{-1}_i, P_i, \overline{0}) (v_i, P^\prime, v^\prime_{i + 1}) = (\overline{0}, P_i P^\prime, v^\prime_{i + 1}) \in E_i$, a contradiction to the fact that $v^\prime_{i + 1} \not \in \mathcal{I}_{i + 1}$.
	
	To prove the last statement, pick a source/terminus pair $v_i \in V_i$ and $v_{i + 1} \in V_{i + 1}$ with parallel edges uniquely labeled by $\{P_1, \hdots, P_k\}$. We will show that every other edge must also have $k$ parallel edges; hence it suffices, without loss of generality, to assume that both syndromes are $\overline{0}$. Now choose any other edge $(v^\prime_s, P^\prime, v^\prime_t) \in E_i$ with $v^\prime_s$ and $v^\prime_t$ not both $\overline{0}$. By Lemma \ref{lem:VEvs}, $(\overline{0}, P^{-1}_1, \overline{0})$ exists and $(v^\prime_s, P^\prime, v^\prime_t) (\overline{0}, P^{-1}_1, \overline{0}) (\overline{0}, P_j, \overline{0})$ $ = (v^\prime_s, P^\prime P^{-1}_1 P_j, v^\prime_t)$ for some $1 < j \leq k$. Since $P^\prime P^{-1}_1 P_{j_1} = P^\prime P^{-1}_1 P_{j_2}$ implies $P_{j_1} = P_{j_2}$, a contradiction, there are $k - 1$ additional parallel edges between $v^\prime_s$ and $v^\prime_t$ of this form. Since the same idea works when $P^{-1}_1$ is replaced with any other of the $k$ labels, the number of parallel edges from $v^\prime_s$ to $v^\prime_t$ is greater than or equal to $k$. Running the argument backwards shows that for each edge between $v^\prime_s$ to $v^\prime_t$ there is a corresponding edge from $\overline{0}$ to $\overline{0}$. Since the number of these is $k$, we have that the number of parallel edges from $v^\prime_s$ to $v^\prime_t$ is exactly $k$.
\end{proof}

\corothirteen*
\begin{proof}
	Starting with $\dim S^\perp_{\mathfrak{f}_0} = n + k$ and repeatedly applying Lemma \ref{lem:deltadim} gives $\dim S^\perp_{\mathfrak{f}_i} \geq n + k - 2i$ and, likewise, $\dim S^\perp_{\mathfrak{p}_i} \geq - n + k + 2i$. Combining these two equations gives the desired result.
\end{proof}

\lemmafourteen*
\begin{proof}
	Since $\dim S^\perp_\mathfrak{p}$ and $\dim S^\perp_\mathfrak{f}$ are invariant, we know from Corollary \ref{cor:inoutdegs} that any trellises satisfying Theorem \ref{thm:quantstate} have the same vertex degrees and are hence isomorphic. It remains to specify the mapping. Let there be two minimal trellises for the code with sets $V, E$ and $V^\prime, E^\prime$, respectively. Fix a $v \in V_i$ and consider its past coset (see Figure \ref{fig:trelliscosets}). Pick a Pauli string in $S^\perp$ in this coset and determine its terminus $v^\prime \in V^\prime_i$. The map $f : V \to V^\prime$, $f(v) = v^\prime$ is clearly an isomorphism. It immediately follows that the edge isomorphism is given by $(s(e), P_i, t(e)) \mapsto (f(s(e)), P_i, f(t(e)))$.
\end{proof}

\setcounter{theoremINNER}{17}
\propeighteen*
\begin{proof}
	Suppose a set of Pauli strings is in TOF but does not have the left-right property. Then there exists two elements of the set, $\tilde{P}$ and $\tilde{P}^\prime$ such that $L(\tilde{P}) = L(\tilde{P}^\prime)$ or $R(\tilde{P}) = R(\tilde{P}^\prime)$. Without loss of generality, assume $L(\tilde{P}) = L(\tilde{P}^\prime)$ and $R(\tilde{P}) > R(\tilde{P}^\prime)$. One can always replace any power of $X$ or $Z$ with the first power by repeatedly applying it to itself since $p$ is prime and any power generates the cyclic group $\< p \>$. Assume this is done for $\tilde{P}$ and replace the power of $X$ or $Z$ in $\tilde{P}^\prime$ by $X^{p-1}$ or $Z^{p-1}$, $\tilde{P} \mapsto P$ and $\tilde{P}^\prime \mapsto P^\prime$. Then $PP^\prime$ has lower span length than $\tilde{P}$, a contradiction to the fact that the set is in TOF.
	
	Now suppose the set has the left-right property but is not in TOF. The only way to reduce the span length of the set is to increase a left index or decrease a right index. But this is impossible since there is only a single power of $X$ or $Z$ at these indices by the left-right property. Hence the set already has the lowest total span length.
\end{proof}

\setcounter{theoremINNER}{19}
\proptwenty*
\begin{proof}
	Let $\mu(v)$ be the value of the optimal (minimum weight) path at vertex $v$, initialize $\mu(\overline{0}) = 0$ at $V_0$, and denote by $\wt(e)$ the weight of edge $e$ (see Algorithm 1). We proceed by induction on the depth $i$. By construction, all paths from $V_0$ to $v \in V_1$ consist of a single edge. The optimal path $V_0$ to $v$ is given by the minimum weighted edge from $V_0$ to $v$. By Algorithm 1, $\mu(v) = \di \min_{\substack{e \in E_1\\ t(e) = v}} \{\mu(\overline{0}) + \wt(e)\} = \wt(e)$ for the edge $e \in E_1$ of minimum weight, which is correct. Now assume the Viterbi algorithm correctly computes the optimal path for all depths 1 to $i$ and let $v \in V_{i + 1}$. Then
	\begin{equation*}
		\mu(v) = \di \min_{\substack{e \in E_i\\ t(e) = v}} \{\mu(s(e)) + \wt(e)\}
			= \di \min_{\substack{e \in E_i\\ t(e) = v\\ \text{path}: V_0 \to s(e)}} \{\wt(\text{path}) + \wt(e)\}
			= \di \min_{\substack{e \in E_i\\ t(e) = v\\ \text{path}: V_0 \to t(e)}} \{\wt(\text{path})\}.
	\end{equation*}
	Every path from $V_0$ to $v$ is of this form.
\end{proof}

\thmtwentyone*
\begin{proof}
	It suffices to characterize the operations of line 14 in Algorithm \ref{alg:Viterbi}. For each vertex $v \in V_i$ for $1 \leq i \leq n$, $\deg_\text{in}(v)$ additions are preformed:
	\begin{equation*}
		\text{total additions} = \sum_{i = 1}^n \sum_{v \in V_i} \deg_\text{in}(v).
	\end{equation*}
	For simplicity we take a computational model where $x - 1$ minimums are computed for a list of size $x$. From this we then have
	\begin{equation*}
		\text{total minimums} = \sum_{i = 1}^n \sum_{v \in V_i} \deg_\text{in}(v) - \sum_{i = 1}^n \sum_{v \in V_i} 1.
	\end{equation*}
	The first term in this sum is equal to the total number of edges, $|E|$, and the second term is total number of vertices minus $V_0$, $|V| - 1$:
	\begin{align*}
		\text{total additions} &= |E|\\
		\text{total minimums} &= |E| - |V| + 1.
	\end{align*}
	Therefore,
	\begin{equation*}
		\text{total number of arithmetic operations} = \text{total additions} + \text{total minimums} = 2|E| - |V| + 1.
	\end{equation*}
	Since the trellis is connected, $|E| - |V| + 1 \geq 0$, so $2|E| - |V| + 1 \geq |E|$. The total number of arithmetic operations is hence upper bounded by $2|E|$ and lower bounded by $|E|$.
\end{proof}

\setcounter{theoremINNER}{22}
\lemmatwentythree*
\begin{proof}
	We proceed by induction. For $i = 1$, $\mathscr{P}_1 = \deg_\mathrm{out}(\overline{0})$, as desired. Now suppose the formula is true for some $i$. Then
	\begin{align*}
		\mathscr{P}_i &= \sum_{v \in V_{i - 1}} \mathscr{P}_{i - 1}(v) \deg_\mathrm{out}(v)\\
			&= \sum_{v \in V_{i - 1}} \mathscr{P}_{i - 1}(v) \left( \deg_\mathrm{out}(v) - 1 \right) + \sum_{v \in V_{i - 1}} \mathscr{P}_{i - 1}(v)\\
			&= \sum_{v \in V_{i - 1}} \mathscr{P}_{i - 1}(v) \left( \deg_\mathrm{out}(v) - 1 \right) + 1 + \sum_{j = 0}^{i - 2} \sum_{v \in V_j} \mathscr{P}_j(v) \left( \deg_\mathrm{out}(v) - 1 \right)\\
			&= 1 + \sum_{j = 0}^{i - 1} \sum_{v \in V_j} \mathscr{P}_j(v) \left( \deg_\mathrm{out}(v) - 1 \right).
	\end{align*}
\end{proof}

\proptwentyfour*
\begin{proof}
	We proceed by induction on $\kappa$. For $\kappa = 1$,
	\begin{gather*}
		R_1 < i \leq R_2,\\
		\dim S^\perp_{\mathfrak{p}_{R_0}} = 0,\\
		\dim S^\perp_{\mathfrak{p}_{i - 1}} = \dim S^\perp_{\mathfrak{p}_{R_1}},\\
		p_{1,1} = p^{\dim S^\perp_{\mathfrak{p}_{R_1}} - \dim S^\perp_{\mathfrak{p}_{R_1}}} = 1,\\
		p_1 = p^{\dim S^\perp_{\mathfrak{p}_{R_1}} - \dim S^\perp_{\mathfrak{p}_{R_0}}} = p^{\dim S^\perp_{\mathfrak{p}_{R_1}}}.
	\end{gather*}
	Then,
	\begin{align*}
		\sum_{j = 0}^{i - 1} p^{\dim S^\perp_{\mathfrak{p}_{i - 1}}} \left( \mathcal{E}^\prime_j - \mathcal{E}_j \right) =& \sum_{j = 0}^{R_1 - 1} p^{\dim S^\perp_{\mathfrak{p}_{i - 1}}} \left( \mathcal{E}^\prime_j - \mathcal{E}_j \right) + \sum_{j = R_1}^{i - 1} p^{\dim S^\perp_{\mathfrak{p}_{i - 1}}} \left( \mathcal{E}^\prime_j - \mathcal{E}_j \right) \\
		=& \, p^{\dim S^\perp_{\mathfrak{p}_{R_1}}} \sum_{j = 0}^{R_1-1} \left( \mathcal{E}^\prime_j - \mathcal{E}_j \right) + \sum_{j = R_1}^{i - 1} p^{\dim S^\perp_{\mathfrak{p}_j}} \left( \mathcal{E}^\prime_j - \mathcal{E}_j \right)\\
		=& \, p^{\dim S^\perp_{\mathfrak{p}_{R_1}} - \dim S^\perp_{\mathfrak{p}_{R_0}}} \sum_{j = 0}^{R_1 - 1} p^{\dim S^\perp_{\mathfrak{p}_{R_0}}}  \left( \mathcal{E}^\prime_j - \mathcal{E}_j \right) - \sum_{j = 0}^{R_1 - 1} p^{\dim S^\perp_{\mathfrak{p}_j}} \left( \mathcal{E}^\prime_j - \mathcal{E}_j \right)\\
		&+ \sum_{j = 0}^{R_1 - 1} p^{\dim S^\perp_{\mathfrak{p}_j}} \left( \mathcal{E}^\prime_j - \mathcal{E}_j \right) + \sum_{j = R_1}^{i - 1} p^{\dim S^\perp_{\mathfrak{p}_j}} \left( \mathcal{E}^\prime_j - \mathcal{E}_j \right)\\
		=& \, p_1 \sum_{j = 0}^{R_1 - 1} p^{\dim S^\perp_{\mathfrak{p}_j}} \left( \mathcal{E}^\prime_j - \mathcal{E}_j \right) - \sum_{j = 0}^{R_1 - 1} p^{\dim S^\perp_{\mathfrak{p}_{R_j}}} \left( \mathcal{E}^\prime_j - \mathcal{E}_j \right) + \sum_{j = 0}^{i - 1} p^{\dim S^\perp_{\mathfrak{p}_j}} (\mathcal{E}^\prime_j -\mathcal{E}_j)\\
		=& \, (p_1 - 1) \sum_{j = 0}^{R_1 - 1} p^{\dim S^\perp_{\mathfrak{p}_j}} \left( \mathcal{E}^\prime_j - \mathcal{E}_j \right) + \Delta_i\\
		=& \, p_{1, 1} (p_1 - 1) \Delta_{R_1} + \Delta_i. 
	\end{align*}

	Assume the result is true for $\kappa - 1$. We wish to show that
	\begin{equation*}
		\sum_{j = 0}^{i - 1} p^{\dim S^\perp_{\mathfrak{p}_{i - 1}}} \left( \mathcal{E}^\prime_j - \mathcal{E}_j \right) = \Delta_i + \sum_{a = 1}^\kappa p_{\kappa, a} (p_a - 1) \Delta_{R_a},
	\end{equation*}
	where
	\begin{gather*}
		R_\kappa < i \leq R_{\kappa + 1},\\
		p_{\kappa, a} = p^{\dim S^\perp_{\mathfrak{p}_{R_\kappa}} - \dim S^\perp_{\mathfrak{p}_{R_a}}},\\
		p_a = p^{\dim S^\perp_{\mathfrak{p}_{R_a}} - \dim S^\perp_{\mathfrak{p}_{R_{a - 1}}}}.
	\end{gather*}
	We have,
	\begin{align*}
		\sum_{j = 0}^{i - 1} p^{\dim S^\perp_{\mathfrak{p}_{i - 1}}} \left( \mathcal{E}^\prime_j - \mathcal{E}_j \right) =& \sum_{j = 0}^{R_\kappa - 1} p^{\dim S^\perp_{\mathfrak{p}_{i - 1}}} \left( \mathcal{E}^\prime_j - \mathcal{E}_j \right) + \sum_{j = R_\kappa}^{i - 1} p^{\dim S^\perp_{\mathfrak{p}_{i - 1}}} \left( \mathcal{E}^\prime_j - \mathcal{E}_j \right)\\
		=& \sum_{j = 0}^{R_\kappa - 1} p^{\dim S^\perp_{\mathfrak{p}_{R_\kappa}}} \left( \mathcal{E}^\prime_j - \mathcal{E}_j \right) + \sum_{j = R_\kappa}^{i - 1} p^{\dim S^\perp_{\mathfrak{p}_{R_\kappa}}} \left( \mathcal{E}^\prime_j - \mathcal{E}_j \right)\\
		=& \, p^{\dim S^\perp_{\mathfrak{p}_{R_\kappa}} - \dim S^\perp_{\mathfrak{p}_{R_{\kappa - 1}}}} \sum_{j = 0}^{R_\kappa - 1} p^{\dim S^\perp_{\mathfrak{p}_{R_{\kappa - 1}}}} \left( \mathcal{E}^\prime_j - \mathcal{E}_j \right) + \sum_{j = R_\kappa}^{i - 1} p^{\dim S^\perp_{\mathfrak{p}_j}} \left( \mathcal{E}^\prime_j - \mathcal{E}_j \right)\\
		=& \, p^{\dim S^\perp_{\mathfrak{p}_{R_\kappa}} - \dim S^\perp_{\mathfrak{p}_{R_{\kappa - 1}}}} \left( \Delta_{R_\kappa} + \sum_{a = 1}^{\kappa - 1} p_{\kappa - 1, a} (p_a - 1) \Delta_{R_a} \right) + \sum_{j = R_\kappa}^{i - 1} p^{\dim S^\perp_{\mathfrak{p}_j}} \left( \mathcal{E}^\prime_j - \mathcal{E}_j \right)\\
		=& \, p^{\dim S^\perp_{\mathfrak{p}_{R_\kappa}} - \dim S^\perp_{\mathfrak{p}_{R_{\kappa - 1}}}} \Delta_{R_\kappa} + p^{\dim S^\perp_{\mathfrak{p}_{R_\kappa}} - \dim S^\perp_{\mathfrak{p}_{R_{\kappa - 1}}}} \sum_{a = 1}^{\kappa - 1} p^{\dim S^\perp_{\mathfrak{p}_{R_{\kappa - 1}}} - \dim S^\perp_{\mathfrak{p}_{R_a}}} (p_a - 1) \Delta_{R_a}\\
		&+ \sum_{j = R_\kappa}^{i - 1} p^{\dim S^\perp_{\mathfrak{p}_j}} \left( \mathcal{E}^\prime_j - \mathcal{E}_j \right)\\
		=& \, p_\kappa \Delta_{R_k} - \Delta_{R_k} + \Delta_{R_k} + \sum_{a = 1}^{\kappa - 1} p^{\dim S^\perp_{\mathfrak{p}_{R_\kappa}} - \dim S^\perp_{\mathfrak{p}_{R_a}}} (p_a - 1) \Delta_{R_a} + \sum_{j = R_\kappa}^{i - 1} p^{\dim S^\perp_{\mathfrak{p}_j}} \left( \mathcal{E}^\prime_j - \mathcal{E}_j \right)\\
		=& \, (p_\kappa - 1) \Delta_{R_\kappa} + \sum_{j = 0}^{R_\kappa - 1} p^{\dim S^\perp_{\mathfrak{p}_j}} \left( \mathcal{E}^\prime_j - \mathcal{E}_j \right) + \sum_{a = 1}^{\kappa - 1} p_{\kappa, a} (p_a - 1) \Delta_{R_a} + \sum_{j = R_\kappa}^{i - 1} p^{\dim S^\perp_{\mathfrak{p}_j}} \left( \mathcal{E}^\prime_j - \mathcal{E}_j \right)\\
		=& \, p_{\kappa, \kappa} (p_\kappa - 1) \Delta_{R_\kappa} + \sum_{a = 1}^{\kappa - 1} p_{\kappa, a} (p_a - 1) \Delta_{R_a} + \sum_{j = 0}^{i - 1} p^{\dim S^\perp_{\mathfrak{p}_j}} \left( \mathcal{E}^\prime_j - \mathcal{E}_j \right)\\
		=& \, \sum_{a = 1}^\kappa p_{\kappa, a} (p_a - 1) \Delta_{R_a} + \Delta_i.
	\end{align*}
\end{proof}

\thmtwentyfive*
\begin{proof}
	Let unprimed quantities be with respect to the syndrome trellis and primed quantities for the other trellis. For the syndrome trellis,
	\begin{equation*}
		\mathscr{P}_i = 1 + \sum_{j = 0}^{i - 1} \sum_{v \in V_j} \mathscr{P}_j(v) \left( \deg_\mathrm{out}(v) - 1 \right) = 1 + \sum_{j = 0}^{i - 1} p^{\dim S^\perp_{\mathfrak{p}_j}} \sum_{v \in V_j} \left( \deg_\mathrm{out}(v) - 1 \right),
	\end{equation*}
	and for any other trellis,
	\begin{equation*}
		\mathscr{P}^\prime_i = 1 + \sum_{j = 0}^{i - 1} \sum_{v^\prime \in V^\prime_j} \mathscr{P}^\prime_j(v^\prime) \left( \deg_\mathrm{out}(v^\prime) - 1 \right) \leq 1 + \sum_{j = 0}^{i - 1} p^{\dim S^\perp_{\mathfrak{p}_j}} \sum_{v^\prime \in V^\prime_j} \left( \deg_\mathrm{out}(v^\prime) - 1 \right).
	\end{equation*}
	Since $\mathscr{P}^\prime_i \geq \mathscr{P}_i$, subtracting the two expressions gives $\Delta_i \geq 0$. Then
	\begin{align*}
		\left(|E^\prime| - |V^\prime| + 1 \right) - \left(|E| - |V| + 1 \right) &= \sum_{j = 0}^{n - 1} \left(\mathcal{E}^\prime_j - \mathcal{E}_j\right)\\
			&= \frac{1}{p^{\dim S^\perp_{\mathfrak{p}_{n - 1}}}} \sum_{j = 0}^{n - 1} p^{\dim S^\perp_{\mathfrak{p}_{n - 1}}} \left(\mathcal{E}^\prime_j - \mathcal{E}_j\right)\\
			&= \frac{1}{p^{\dim S^\perp_{\mathfrak{p}_{n - 1}}}} \left( \Delta_n + \sum_{a = 1}^{n - 1} p_{n - 1, a} (p_a - 1) \Delta_{R_a} \right)\\
			&\geq 0,
	\end{align*}
	where the last line follows because every term in the sum is positive.
\end{proof}

\setcounter{section}{4}
\setcounter{theoremINNER}{3}
\thmIVfour*
\begin{proof}
	We show minimality, which is the non-trivial part of the proof.\\
	
	We proceed by induction on the number of generators in the product. Let $\tilde{S}^\perp_i$ be the subset of generators of $S^\perp$ (in TOF) which have trivial partial syndromes with respect to $s_i$. Let $[L_1, R_1]$ and $[L_2, R_2]$ be the spans of $s_1$ and $s_2$, respectively, and construct trellises for each following Section \ref{sec:construction} with vertex profiles
	\begin{equation*}
		|V^1_i | = \begin{cases}
				1 & \text{ if $i \in [0, L_1)$}\\
				p^{\dim S^\perp - \dim \tilde{S}^\perp_1} & \text{ if $i \in [L_1, R_1)$}\\
				1 & \text{ if $i \in [R_1, n]$}
			\end{cases}
		\qquad
		|V^2_i | = \begin{cases}
				1 & \text{ if $i \in [0, L_2)$}\\
				p^{\dim S^\perp - \dim \tilde{S}^\perp_2} & \text{ if $i \in [L_2, R_2)$}\\
				1 & \text{ if $i \in [R_2, n]$}
			\end{cases}.
	\end{equation*}
	The vertex profile of the trellis product of the trellises for $s_1$ and $s_2$ are given by Lemma \ref{lem:CSStp} (i) to be
	\begin{align*}
		|V^{1, 2}_i | &= \begin{cases}
				1 & \text{ if $i \in [0, L_1)$}\\
				| V^1_i | & \text{ if $i \in [L_1, L_2)$}\\
				| V^1_i | | V^2_i | = p^{\left(\dim S^\perp - \dim \tilde{S}^\perp_1\right) + \left(\dim S^\perp - \dim \tilde{S}^\perp_2\right)} = p^{\dim S^\perp - \dim \tilde{S}^\perp_{1 \cap 2}} & \text{ if $i \in [L_2, R_1)$}\\
				| V^2_i | & \text{ if $i \in [R_1, R_2 )$}\\
				1 & \text{ if $i \in [R_2, n]$}
			\end{cases}\\
			&= p^{\dim S^\perp - \dim S^\perp_{\mathfrak{p}_i} - \dim S^\perp_{\mathfrak{f}_i} - \dim \tilde{S}^\perp_i},
	\end{align*}
	where we have used that for sets $A, B \subset C$, $(C \backslash A) \cup (C \backslash B) = C \backslash (A \cap B)$, $\tilde{S}^\perp_{1 \cap 2} \coloneqq \tilde{S}^\perp_1 \cap \tilde{S}^\perp_2$, the past and future are taken with respect to the generators $s_1$ and $s_2$, and the intersection $\tilde{S}^\perp_i$ depends on $i$. Having recovered the minimal form of Theorem \ref{thm:quantstate}, assume the theorem holds for the trellis product of $s_1, \hdots, s_{n - k - 1}$. It remains to show that the overlap $| V^{1, \hdots, n - k - 1}_i | | V^{n - k}_i |$ produces Equation \eqref{Vi}. In this case $\tilde{S}^\perp_{1 \cap \hdots \cap n - k} = \emptyset$ so
	\begin{equation*}
		| V^{1, \hdots, n - k - 1}_i | | V^{n - k}_i | = | V^{1, \hdots, n - k - 1}_i | \, p^{\dim S^\perp - \dim \tilde{S}^\perp_{n - k}} = p^{\dim S^\perp - \dim S^\perp_{\mathfrak{p}_i} - \dim S^\perp_{\mathfrak{f}_i}}.
	\end{equation*}
	The proof for the edges is similar.
\end{proof}

\propIVfive*
\begin{proof}
	Let $\xi_i = \log_p E_{X, i}$ and $\eta_i = \log_p E_{Z, i}$ such that $|E_X| = \sum_{i = 1}^n p^{\xi_i}$ and $|E_Z| = \sum_{i = 1}^n p^{\eta_i}$, and recall that $\xi_i, \eta_i \geq 1$. By Lemma \ref{lem:CSStp}, $|E| = \sum_{i = 1}^n p^{\xi_i + \eta_i}$. The left inequality is clear. For the right, we have
	\begin{align*}
		\left( |E_X| + |E_Z|\right)^2 &= |E_X|^2 + |E_Z|^2 + 2 |E_X| |E_Z|\\
			&= |E_X|^2 + |E_Z|^2 + 2 \left[ \left(\sum_{i = 1}^n p^{\xi_i + \eta_i}\right) + \sum_{i = 1}^n \sum_{\substack{j = 1\\ j \neq i}}^n p^{\xi_i + \eta_i}\right]\\
			&= |E_X|^2 + |E_Z|^2 + 2 \left[|E| + p^2  \sum_{i = 1}^n \sum_{\substack{j = 1\\ j \neq i}}^n p^{\xi_i + \eta_i - 2} \right]\\
			&\geq |E_X|^2 + |E_Z|^2 + 2 |E| + 2p^2 n(n - 1),
	\end{align*}
	where in the last line we have assumed that $\xi_i = \eta_i = 1$ for all $i$.
\end{proof}

\end{document}